\newtheorem{claim}{Claim}
\newtheorem{proposition}{Proposition}
\newtheorem{definition}{Definition}
\newtheorem{lemma}{Lemma}
\newtheorem{theorem}{Theorem}
\newtheorem*{theorem*}{Theorem}
\newtheorem*{question*}{Question}
\newtheorem*{informal_theorem}{Informal Theorem}
\newcommand{\email}[1]{\href{mailto:#1}{\texttt{#1}}}
\DeclareMathOperator*{\argmax}{arg\!\,max}
\DeclareMathOperator*{\argmin}{arg\!\,min}
\DeclarePairedDelimiter{\norm}{\lVert}{\rVert} 
\newtheorem{corollary}{Corollary}
\title{Fast Convergence of Optimistic Gradient Ascent in Network Zero-Sum Extensive Form Games}
\author{%
  Georgios Piliouras\\
  Singapore Univ. of Technology and Design\\
  Singapore\\
  \email{georgios@sutd.edu.sg} \\
  \And
  Lillian Ratliff\\
  University of Washington\\
  Seattle, Washington\\
  \email{ratliffl@uw.edu}\\
  \And
  Ryann Sim\\
  Singapore Univ. of Technology and Design\\
  Singapore\\
  \email{ryann\_sim@mymail.sutd.edu.sg} \\
  \And
  Stratis Skoulakis\\
  EPFL\\
  Laussane, Switzerland\\
  \email{efstratios.skoulakis@epfl.ch}
}
\date{}
\newcommand{\la}{\langle}
\newcommand{\ra}{\rangle}
\newcommand{\X}{\mathcal{X}}
\newcommand{\N}{\mathcal{N}}
\newcommand{\A}{\mathcal{A}}
\newcommand{\V}{\mathcal{V}}
\newcommand{\I}{\mathcal{I}}
\newcommand{\Y}{\mathcal{Y}}
\newcommand{\G}{\mathcal{G}}
\newcommand{\Z}{\mathcal{Z}}
\renewcommand{\P}{\mathcal{P}}
\renewcommand{\H}{\mathcal{H}}
\newcommand{\dist}{\mathrm{dist}}
\newcommand{\boxnew}[1]{\ \boxed{#1}\ }
\newcommand\blfootnote[1]{%
  \begingroup
  \renewcommand\thefootnote{}\footnote{#1}%
  \addtocounter{footnote}{-1}%
  \endgroup
}
\begin{document}

\maketitle
\blfootnote{This is the full version of a paper to appear in the 15th Symposium on Algorithmic Game Theory (SAGT 2022)}
\begin{abstract}
The study of learning in games has thus far focused primarily on normal form games.  
In contrast, our understanding
 of learning in \emph{extensive form games} (EFGs) and particularly in EFGs with many agents lags far behind, despite them being closer in nature to many real world applications. We consider the natural class of \textit{Network Zero-Sum Extensive Form Games}, which combines the global zero-sum property of agent payoffs, the efficient representation of graphical games as well the expressive power of EFGs. 
%
We examine the convergence properties of \textit{Optimistic Gradient Ascent} (OGA) in these games. We prove that the time-average behavior of such online learning dynamics exhibits $O(1/T)$ rate convergence to the set of Nash Equilibria. Moreover, we show that the day-to-day behavior
also converges to Nash with rate $O(c^{-t})$ for some game-dependent constant $c>0$.
\end{abstract}

\section{Introduction}\label{sec:intro}

\textit{Extensive Form Games} (EFGs) are an important class of games which have been  studied for more than $50$
years \cite{K50}. 
EFGs capture various settings where several selfish agents sequentially perform actions which change the \textit{state of nature}, with the action-sequence finally leading to a \textit{terminal state}, at which each agent receives a payoff. The most ubiquitous examples of EFGs are real-life games such as Chess, Poker, Go etc. Recently the application of the \textit{online learning framework} has proven to be very successful in the design of modern AI which can beat even the best human players in real-life games \cite{tammelin2015solving,brown2018superhuman}. 
At the same time, online learning in EFGs has many interesting applications in economics, AI, machine learning and sequential decision making that extend far beyond the design of game-solvers \cite{AB16,perolat2021poincare}.  

Despite its numerous applications, online learning in EFGs is far from well understood. From a practical point of view, testing and experimenting with various online learning algorithms in EFGs requires a huge amount of computational resources due to the large number of states in EFGs of interest \cite{zinkevich2007regret,rowland2019multiagent}. 
From a theoretical perspective, it is known that online learning dynamics may oscillate, cycle or even admit chaotic behavior even in very simple settings \cite{palaiopanos2017multiplicative,mertikopoulos2018cycles,leonardos2022exploration}. 
On the positive side, there exists a recent line of research into the special but fairly interesting class of \textit{two-player zero-sum EFGs}, which provides the following solid claim: \textit{In two-player zero-sum EFGs, the time-average strategy vector produced by online learning dynamics converges to the Nash Equilibrium (NE), while there exist online learning dynamics which exhibit day-to-day convergence}
\cite{farina2019optimistic,wei2020linear,wei2021last}. 
Since in most settings of interest there are typically multiple interacting agents, the above results motivate the following question:
\begin{question*}
Are there natural and important classes of multi-agent extensive form games for which online learning dynamics converge to a Nash Equilibrium? Furthermore, what type of convergence is possible? Can we only guarantee time-average convergence, or can we also prove day-to-day convergence (also known as last-iterate convergence) of the dynamics?
\end{question*}

In this paper we answer the above questions in the positive for an interesting class of multi-agent EFGs called \textit{Network Zero-Sum Extensive Form Games}. A Network EFG consists of a graph $\G=(V,E)$ where each vertex $u \in V$ represents a selfish agent and each edge $(u,v) \in E$ corresponds to an extensive form game $\Gamma^{uv}$ played between the agents $u,v \in V$. Each agent $u \in V$ selects her strategy so as to maximize the overall payoff from the games corresponding to her incident edges. The game is additionally called \textit{zero-sum} if the sum of the agents' payoffs is equal to zero no matter the selected strategies.  

We analyze the convergence properties of the online learning dynamics produced when all agents of a Network Zero-Sum EFG update their strategies according to \textit{Optimistic Gradient Ascent}, and show the following result:
\begin{informal_theorem}
When the agents of a network zero-sum extensive form game update their strategies using Optimistic Gradient Ascent, their time-average strategies converge with rate $O(1/T)$ to a Nash Equilibrium, while the last-iterate mixed strategies converge to a Nash Equilibrium with rate $O(c^{-t})$ for some game-dependent constant $c>0$.
\end{informal_theorem}

Network Zero-Sum EFGs are an interesting class of multi-agent EFGs for much the same reasons that network zero-sum normal form games are interesting, with several additional challenges. Indeed, due to the prevalence of networks in computing systems, there has been increased interest in network formulations of normal form games \cite{kearns2013graphical}, which have been applied to multi-agent reinforcement learning \cite{yang2020overview} 
and social networks \cite{kempe2003maximizing}.

Network Zero-Sum EFGs can be seen as a natural model of closed systems in which selfish agents compete over a fixed set of resources \cite{daskalakis2009network,cai2011minmax}, thanks to their global constant-sum property\footnote{equivalent to the global zero-sum property.} (the edge-games are not necessarily zero-sum).
For example, consider the users of an online poker platform playing \textit{Heads-up Poker}, a two-player extensive form game. Each user can be thought of as a node in a graph and two users are connected by an edge (corresponding to a poker game) if they play against each other. 
Note that here, each edge/game differs from another due to the differences in the dollar/blind equivalence. Each user $u$ selects a poker-strategy to utilize against the other players, with the goal of maximizing her overall payoff. This is an indicative example which can clearly be modeled as a Network Zero-Sum EFG.

In addition, Network Zero-Sum EFGs are also attractive to study due to the fact that their descriptive complexity scales \emph{polynomially} with the number of agents. Multi-agent EFGs that cannot be decomposed into pairwise interactions (i.e., do not have a network structure) admit an exponentially large description with respect to the number of the agents \cite{kearns2013graphical}. 
Hence, by considering this class of games, we are able to exploit the decomposition to extend results that are known for network normal form games to the extensive form setting.

\textbf{Our Contributions.}
To the best of our knowledge, this is the first work establishing convergence to Nash Equilibria of online learning dynamics in network extensive form games with more than two agents. As already mentioned, there has been a stream of recent works establishing the convergence to Nash Equilibria of online learning dynamics in two-player zero-sum EFGs. However, there are several key differences between the two-player and the network cases. All the previous works concerning the two-player case follow a \textit{bilinear saddle point approach}. Specifically, due to the fact that in the two-agent case any Nash Equilibrium coincides with a min-max equilibrium, the set of Nash Equilibria can be expressed as the solution to the following bilinear saddle-point problem:
\[\mathrm{min}_{x \in \mathcal{X}}
\mathrm{max}_{y \in \mathcal{Y}}~  x^\top \cdot A\cdot y  =  \mathrm{max}_{y \in \mathcal{Y}} \mathrm{min}_{x \in \mathcal{X}}~x^\top \cdot A\cdot y\]
Any online learning dynamic or algorithm that converges to the solution of the above saddle-point problem thus also converges to the Nash equilibrium in the two-player case.

However, in the network setting, there is no min-max equilibrium and hence no such connection between the Nash Equilibrium and saddle-point optimization. To overcome this difficulty, we establish that Optimistic Gradient Ascent in a class of EFGs known as \emph{consistent} Network Zero-Sum EFGs (see Section \ref{sec:setting}) can be equivalently described as optimistic gradient descent in a \textit{two-player symmetric game $(R,R)$} over a \textit{treeplex polytope $\mathcal{X}$}. We remark that both the matrix $R$ and the treeplex polytope $\mathcal{X}$ are constructed from the Network Zero-Sum EFG. Using the zero-sum property of Network EFGs, we show that the constructed matrix $R$ satisfies the following `restricted' zero-sum property:
\begin{equation}\label{eq:zero-sum}
x^\top  \cdot R \cdot y + y^\top  \cdot R \cdot x = 0\text{   for all }x,y \in \mathcal{X}    
\end{equation}
Indeed, Property~(\ref{eq:zero-sum}) is a generalization of the \textit{classical zero-sum property} $A=-A^\top$. In general, the constructed matrix $R$ does not satisfy $R = -R^\top$ and Property~(\ref{eq:zero-sum}) simply ensures that the sum of payoffs equal to zero only when $x,y \in \mathcal{X}$. Our technical contribution consists of generalizing the analysis of \cite{wei2020linear} (which holds for classical two-player zero-sum games) to symmetric games satisfying Property~(\ref{eq:zero-sum}).


\textbf{Related Work.}
\textit{Network Zero-Sum Normal Form Games} \cite{daskalakis2009network,cai2011minmax,CCDP16} are a special case of our setting, where each edge/game is a normal form game. Network zero-sum normal form games present major complications compared to their two-player counterparts. The most important of these complications is that in the network case, there is no min-max equilibrium. In fact, different Nash Equilibria can assign different values to the agents. All the above works
study linear programs for computing Nash Equilibria in network zero-sum normal form games. \cite{cai2011minmax} introduce the idea of connecting a network zero-sum normal form game with an equivalent symmetric game $(R,R)$ which satisfies Property~(\ref{eq:zero-sum}). This generalizes the linear programming approach of two-player zero-sum normal form games to the network case. They also show that in network normal form zero-sum games, the time-average behavior of online learning dynamics converge with rate $\Theta(1/\sqrt{T})$ to the Nash Equilibrium.

The properties of \textit{online learning in two-player zero-sum EFGs} have been studied extensively in literature.
\cite{zinkevich2007regret} and \cite{lanctot2009monte} propose no-regret algorithms for extensive form games with $O(1/\sqrt{T})$ average regret and polynomial running time in the size of the game. More recently, regret-based algorithms achieve $O(1/T)$ time-average convergence to the min-max equilibrium \cite{HGPS10,kroer2018solving,farina2019optimistic} for \textit{two-player zero-sum EFGs}. Finally, \cite{lee2021last} and \cite{wei2020linear} establish that Online Mirror Descent achieves $O(c^{-t})$ last-iterate convergence (for some game-dependent constant $c \in (0,1)$) in \textit{two-player zero-sum EFGs}.

\section{Preliminaries}\label{sec:preliminaries}


\subsection{Two-Player Extensive Form Games}
\begin{definition}\label{d:EFG}
A two-player extensive form game $\Gamma$ is a tuple $\Gamma := \left <\H,\A,\Z, p , \I\right>$ where 
\begin{itemize}
    \item $\H$ denotes the states of the game that are decision points for the agents. The states $h \in \H$ form a tree rooted at an initial state $r \in \H$.

    \item Each state $h \in \H$ is associated with a set of \textit{available actions} $\A(h)$.
    
    \item Each state $h \in \mathcal{H}$ admits a label $\mathrm{Label}(h) \in \{1,2,c\}$ denoting the \textit{acting player} at state $h$. The letter $c$ denotes a special agent called a \textit{chance agent}. Each state $h \in \H$ with $\mathrm{Label}(h)= c$ is additionally associated with a function $\sigma_h: \A(h)\mapsto [0,1]$ where $\sigma_h(\alpha)$ denotes the probability that the chance player selects action $\alpha \in \A(h)$ at state $h$, $\sum_{\alpha\in \A(h)}\sigma_h(\alpha) =1$.

    \item $\mathrm{Next(\alpha,h)}$ denotes the state $h' := \mathrm{Next(\alpha,h)}$ which is reached when agent $i:= \mathrm{Label}(h)$ takes action $\alpha \in \A(h)$ at state $h$. $\H_i \subseteq \H$ denotes the states $h \in \H$ with $\mathrm{Label}(h) = i$.  
    
    \item $\Z$ denotes the terminal states of the game corresponding to the leaves of the tree. At each $z \in \Z$ no further action can be chosen, so $\A(z) = \varnothing$ for all $z \in \Z$. Each terminal state $z\in \Z$ is associated with values $(u_1(z),u_2(z))$ where $p_i(z)$ denotes the payoff of agent $i$ at terminal state $z$.
    
    \item Each set of states $\H_i$ is further partitioned into \textit{information sets} $(\I_1,\ldots,\I_k)$ where $\I(h)$ denotes the information set of state $h \in \H_i$. In the case that $\I(h_1) = \I(h_2)$ for some $h_1,h_2 \in \H_1$, then $\A(h_1) = \A(h_2)$.
\end{itemize}
\end{definition}

\textit{Information sets} model situations where the \textit{acting agent} cannot differentiate between different states of the game due to a lack of information. Since the agent cannot differentiate between states of the same information set, the available actions at states $h_1, h_2$ in the same information set $(\I(h_1) = \I(h_2))$ must coincide, in particular $\A(h_1) = \A(h_2)$.
\begin{definition}\label{d:behavioral_plan}
A behavioral plan $\sigma_i$ for agent $i$ is a function such that for each state $h \in \H_i$, $\sigma_i (h)$ is a probability distribution over $\A(h)$ i.e. $\sigma_i(h,\alpha)$ denotes the probability that agent $i$ takes action $\alpha \in \A(h)$ at state $h \in \H_i$. Furthermore it is required that $\sigma_i(h_1) = \sigma_i(h_2)$ for each $h_1,h_2 \in \H_i$ with $\I(h_1) = \I(h_2)$.
The set of all behavioral plans for agent $i$ is denoted by $\Sigma_i$.
\end{definition}
The constraint $\sigma_i(h_1) = \sigma_i(h_2)$ for all $h_1,h_2 \in \H_i$ with $\I(h_1) = \I(h_2)$ models the fact that since agent $i$ cannot differentiate between states $h_1,h_2$, agent $i$ must act in the exact same way at states $h_1,h_2 \in \H_i$.
\begin{definition}
For a collection of behavioral plans $\sigma = (\sigma_1,\sigma_2) \in \Sigma_1 \times \Sigma_2$ the payoff of agent $i$, denoted by $U_i(\sigma)$, is defined as:
\[U_i(\sigma) := \sum_{z \in \Z} p_i(z) \cdot \underbrace{\Pi_{(h,h') \in \P(z)} ~\sigma_{\mathrm{Label(h)}}(h,\alpha_{h'})}_{\text{probability that state $z$ is reached}} \]
where $\P(z)$ denotes the path from the root state $r$ to the terminal state $z$ and $\alpha_{h'}$ denotes the action $\alpha \in \H_i$ such that $h' = \mathrm{Next}(h,\alpha)$. 
\end{definition}
\begin{definition}\label{d:nash_beh}
A collection of behavioral plans $\sigma^\ast = (\sigma_1^\ast,\sigma^\ast_2)$ is called a Nash Equilibrium if for all agents $i = \{1,2\}$,
\[U_i(\sigma^\ast_i, \sigma^\ast_{-i}) \geq U_i(\sigma_i, \sigma^\ast_{-i})~~~\text{for all }\sigma_i \in \Sigma_i\]
\end{definition}
The classical result of \cite{nash1951non} proves the existence of Nash Equilibrium in normal form games. This result also generalizes to a wide class of extensive form games which satisfy a property called \textit{perfect recall} (\cite{kuhn1953contributions,selten1965spieltheoretische}).

\begin{definition}\label{d:perfect_recall}
A two-player extensive form game $\Gamma := \left <\H,\A,\Z, p , \I\right>$ has \textbf{perfect recall} if and only if for all states $h_1,h_2 \in \H_i$ with $\I(h_1) = \I(h_2)$ the following holds: Define the sets
$\P(h_1)\cap \H_i := (p_1,\ldots,p_k,h_1)$ and $\P(h_2)\cap \H_i := (q_1,\ldots,q_m,h_2)$. Then:
\begin{enumerate}
    \item $ k = m$.
    \item $\I(p_\ell) = \I(q_\ell)~~$ for all $\ell \in \{1,\ldots,k\}$.
    \item $p_{\ell + 1} \in \mathrm{Next}(p_\ell,\alpha,i) $ and $q_{\ell + 1} \in \mathrm{Next}(q_\ell,\alpha,i) $ for some action $\alpha \in \A(p_\ell)$ (since $\A(p_\ell) = \A(q_\ell)$). 
\end{enumerate}
\end{definition}
Before proceeding, let us further explain the perfect recall property. As already mentioned, agent $i$ cannot differentiate between states $h_1,h_2 \in \H_i$ when $\I(h_1) = \I(h_2)$. In order for the state $h_1$ to be reached, agent $i$ must take some specific actions along the path $\P(h_1)\cap \H_i := (p_1,\ldots,p_k,h_1)$. The same logic holds for $\P(h_2)\cap \H_i := (p_1,\ldots,p_k,h_1)$. In case where agent $i$ could distinguish $\P(h_1)\cap \H_i$ from set $\P(h_2)\cap \H_i$, then she could distinguish state $h_1$ from $h_2$ by recalling the previous states in $\H_i$. This is the reason for the second constraint in Definition~\ref{d:perfect_recall}. Even if $\I(p_\ell) = \I(q_\ell)~$ for all $\ell \in \{1,\ldots,k\}$, agent $i$ could still distinguish $h_1$ from $h_2$ if 
$p_{\ell + 1} \in \mathrm{Next}(p_\ell,\alpha,i)$ and $q_{\ell + 1} \in \mathrm{Next}(q_\ell,\alpha',i)$. In such a case, agent $i$ can distinguish $h_1$ from $h_2$ by recalling the actions that she previously played and checking if the $\ell$-th action was $\alpha$ or $\alpha'$. This case is encompassed by the third constraint.

\subsection{Two-Player Extensive Form Games in Sequence Form}
A \textit{two-player extensive form game} $\Gamma$ can be captured by a two-player bilinear game where the action spaces of the agents are a specific kind of polytope, commonly known as a \textit{treeplex} \cite{HGPS10}. In order to formally define the notion of a treeplex, we first need to introduce some additional notation.

\begin{definition}
Given an two-player extensive form game $\Gamma$, we define the following:
\begin{itemize}
    \item $\P(h)$ denotes the path from the root state $r \in \H$ to the state $h \in \H$. 

    \item $\mathrm{Level}(h)$ denotes the distance from the root state $r\in \H$ to state $h\in \H$.
    \item $\mathrm{Prev}(h,i)$ denotes the lowest ancestor of $h$ in the set $\H_i$. In particular, $$\mathrm{Prev}(h,i) = \mathrm{argmax}_{h' \in \P(h) \cap \H_i} \mathrm{Level}(h').$$
    \item The set of states $\mathrm{Next}(h,\alpha, i) \subseteq \H$ denotes the highest descendants $h' \in \H_i$ once action $\alpha \in \A(h)$ has been taken at state $h$. More formally, $h' \in \mathrm{Next}(h,\alpha, i)$ if and only if in the path $\P(h,h') = (h,h_1,\ldots,h_k,h')$, all states $h_\ell \notin \H_i$ and $h_1 = \mathrm{Next}(h, \alpha)$.
\end{itemize}
\end{definition}

\begin{definition}\label{d:treeplex}
Given a two-player extensive form game $\Gamma$, the set $\X^\Gamma_i$ is composed by all vectors $x_i \in [0,1]^{|\H_i|+|\Z|}$ which satisfy the following constraints:
\begin{enumerate}
    
    \item $x_i(h)= 1~$ for all $h \in \H_i$ with $\mathrm{Prev}(h,i) = \varnothing$.
    
    \item $x_i(h_1) = x_i(h_2)~$ if there exists $h_1',h_2' \in \H_i$ such that $h_1 \in \mathrm{Next}(h_1',\alpha,i)$, $h_2 \in \mathrm{Next}(h_2',\alpha,i)$ and $\I(h_1') = \I(h_2')$.
    
    \item $\sum_{\alpha \in \A(h)}x_i(\mathrm{Next}(h,\alpha,i)) = x_i(h)~~$ for all $h \in \H_i$.
\end{enumerate}
\end{definition}
A vector $x_i \in \X^{\Gamma}_i$ is typically referred to as an agent $i$'s \textit{strategy in sequence form}. Strategies in sequence form come as an alternative to the behavioral plans of Definition~\ref{d:behavioral_plan}. As established in Lemma~\ref{l:equiv}, there exists an equivalence between a behavioral plan $\sigma_i \in \Sigma_i$ and a strategy in sequence form $x_i \in \X_i^{\Gamma}$ for games with perfect recall. 
\begin{lemma}\label{l:equiv}
Consider a two-player extensive form game $\Gamma$ with perfect recall
and the $(|\H_1| + |\Z|)\times (|\H_2| + |\Z|)$ dimensional matrices $A^\Gamma_1,A^\Gamma_2$ with $[A_i^\Gamma]_{zz} = p_i(z)$ for all terminal nodes $z \in \Z$ and $0$ otherwise. There exists a polynomial-time algorithm transforming any behavioral plan $\sigma_i \in \Sigma_i$ to a vector $x_{\sigma_i} \in \X_i^\Gamma$ such that 
\[ U_1(\sigma_1,\sigma_2) = x_{\sigma_1}^\top \cdot A^\Gamma_1 \cdot x_{\sigma_2}~~~~ \text{    and   }~~~~U_2(\sigma_1,\sigma_2) = x_{\sigma_2}^\top \cdot A^\Gamma_2 \cdot x_{\sigma_1}\]
Conversely, there exists a polynomial-time algorithm transforming any vector $x_i \in \X^\Gamma_i$ to a vector $\sigma_{x_i} \in \Sigma_i$ such that
\[ x_1^\top \cdot A^\Gamma_1 \cdot x_2 = U_1(\sigma_{x_1},\sigma_{x_2})~~~~ \text{    and   }~~~~ x_2^\top \cdot A^\Gamma_2 \cdot x_1 = U_2(\sigma_{x_1},\sigma_{x_2})\]
\end{lemma}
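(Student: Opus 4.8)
The plan is to prove Lemma~\ref{l:equiv} by induction on the tree structure of $\Gamma$, exploiting the perfect recall property in an essential way. The key object that links behavioral plans and sequence-form strategies is the \emph{realization probability}: for a behavioral plan $\sigma_i$, define $x_{\sigma_i}(h)$ for each $h \in \H_i$ to be the product of $\sigma_i$-probabilities that agent $i$ places on the actions along the path $\P(h)\cap\H_i$ leading to $h$ (and analogously $x_{\sigma_i}(z)$ for terminal nodes $z$, using $\mathrm{Prev}(z,i)$). First I would verify that this map lands in $\X_i^\Gamma$: constraint~1 of Definition~\ref{d:treeplex} holds because an empty product equals $1$; constraint~3 is exactly the statement that the probabilities $\sigma_i(h,\cdot)$ sum to $1$ at each state, multiplied through by $x_{\sigma_i}(h)$; and constraint~2 — the ``equal siblings across an information set'' condition — is precisely where perfect recall is needed, since Definition~\ref{d:perfect_recall}(2)--(3) guarantee that two states $h_1',h_2'$ in the same information set are reached via action sequences with identical information-set labels, so agent $i$ must (by Definition~\ref{d:behavioral_plan}) play identically along both, forcing $x_{\sigma_i}(h_1') = x_{\sigma_i}(h_2')$ and hence equality of the corresponding $\mathrm{Next}$-children.

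Next I would establish the payoff identity $U_1(\sigma_1,\sigma_2) = x_{\sigma_1}^\top A_1^\Gamma x_{\sigma_2}$. Since $A_i^\Gamma$ is diagonal with $[A_i^\Gamma]_{zz} = p_i(z)$, the right-hand side equals $\sum_{z\in\Z} p_i(z)\, x_{\sigma_1}(z)\, x_{\sigma_2}(z)$. The content is then the factorization claim: the probability that terminal state $z$ is reached under $\sigma = (\sigma_1,\sigma_2)$ — which by definition is $\Pi_{(h,h')\in\P(z)}\sigma_{\mathrm{Label}(h)}(h,\alpha_{h'})$ — splits as the product of the contribution from player-$1$ states, the contribution from player-$2$ states, and the (fixed) chance contribution. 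One either folds the chance factors into one of the two matrices or assumes (as is standard and harmless here) chance probabilities have been absorbed; modulo that bookkeeping, grouping the path edges by the label of their tail node gives exactly $x_{\sigma_1}(z)\cdot x_{\sigma_2}(z)$ times the chance term, which yields the identity. This part is a routine regrouping of a product once the realization-probability definition is in hand.

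For the converse direction, given $x_i \in \X_i^\Gamma$ I would reconstruct a behavioral plan $\sigma_{x_i}$ by setting $\sigma_{x_i}(h,\alpha) = x_i(\mathrm{Next}(h,\alpha,i)) / x_i(h)$ whenever $x_i(h) > 0$, and choosing $\sigma_{x_i}(h,\cdot)$ arbitrarily (say uniform) on information sets that are reached with probability zero. Constraint~3 of Definition~\ref{d:treeplex} ensures this is a valid probability distribution at each state; constraint~2 together with perfect recall ensures the assignment is consistent across states in the same information set, so $\sigma_{x_i}$ is a legitimate element of $\Sigma_i$; and applying the forward map to $\sigma_{x_i}$ recovers $x_i$ on the support (telescoping product), which suffices for the payoff equality since off-support sequences contribute $0$ on both sides. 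Both transformations are manifestly computable in time polynomial in $|\H| + |\Z|$, since each is a single pass over the game tree.

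The main obstacle I anticipate is \textbf{not} the algebra of the payoff factorization but rather the careful handling of the information-set consistency conditions — in particular showing that constraint~2 of Definition~\ref{d:treeplex} is both \emph{necessary} (so that the forward map is well-defined into $\X_i^\Gamma$) and \emph{sufficient} (so that the backward map produces a well-defined behavioral plan). This is exactly the step that uses perfect recall nontrivially, and getting the quantifiers right — matching the $\mathrm{Next}(h',\alpha,i)$ notation of Definition~\ref{d:treeplex} with the path-based formulation of perfect recall in Definition~\ref{d:perfect_recall} — requires some care. A secondary subtlety is the treatment of chance nodes and zero-probability information sets, which must be dealt with explicitly so that the division $x_i(\mathrm{Next}(h,\alpha,i))/x_i(h)$ is never ill-defined and so the chance factors are accounted for consistently on both sides of the payoff identity.
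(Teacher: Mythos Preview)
Your proposal is correct and follows essentially the same approach as the paper's proof: define $x_{\sigma_i}(h)$ as the product of $\sigma_i$-probabilities along $\P(h)\cap\H_i$, verify the treeplex constraints (using perfect recall for constraint~2), obtain the payoff identity by regrouping the reach-probability product by player label, and invert via $\sigma_{x_i}(h,\alpha) = x_i(\mathrm{Next}(h,\alpha,i))/x_i(h)$. If anything, you are more careful than the paper about the zero-probability and chance-node bookkeeping, which the paper's proof simply elides.
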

To this end, one can understand why \textit{strategies in sequence form} are of great use. Assume that agent $2$ selects a behavioral plan $\sigma_2 \in \Sigma_2$. Then, agent $1$ wants to compute a behavioral plan $\sigma_1^\ast \in \Sigma_1$ which is the \textit{best response} to $\sigma_2$, namely 
$\sigma_1^\ast := \mathrm{argmax}_{\sigma_1 \in \Sigma_1}U_1(\sigma_1,\sigma_2)$. This computation can be done in polynomial-time in the following manner: Agent~$1$ initially converts (in polynomial time) the behavioral plan $\sigma_2$ to $x_{\sigma_2} \in \X^\Gamma_2$, which is the respective strategy in sequence form. Then, she can obtain a vector $x_1^\ast = \mathrm{argmax}_{x_1 \in \X^\Gamma_1}~x_1^\top \cdot A^\Gamma_1\cdot x_2$. The latter step can be done in polynomial-time by computing the solution of an appropriate linear program. Finally, she can convert the vector $x_1^\ast$ to a behavioral plan $\sigma_{x^\ast_1} \in \Sigma_1$ in polynomial-time. Lemma~\ref{l:equiv} ensures that $\sigma_{x^\ast_1} = \mathrm{argmax}_{\sigma_1 \in \Sigma_1}U_1(\sigma_1,\sigma_2)$.

The above reasoning can be used to establish an equivalence between the Nash Equilibrium $(\sigma_1^\ast, \sigma_2^\ast)$ of an EFG $\Gamma := \left <\H,\A,\Z, p , \I\right>$ with the Nash Equilibrium in its sequence form.  
\begin{definition}
A Nash Equilibrium of a two-player EFG $\Gamma$ in sequence form is a vector $(x_1^\ast , x_2^\ast) \in \X_1^\Gamma \times \X_2^\Gamma$ such that 
\begin{itemize}
    \item $(x^\ast_1)^\top \cdot A_1^\Gamma \cdot x^\ast_2 \geq (x_1)^\top \cdot A_1^\Gamma \cdot x^\ast_2 ~~~\text{for all }x_1 \in \X_1^\Gamma$
    
    \item $(x^\ast_2)^\top \cdot A_2^\Gamma \cdot x^\ast_1 \geq (x_2)^\top \cdot A_2^\Gamma \cdot x^\ast_1 ~~~\text{for all }x_2 \in \X_2^\Gamma$
\end{itemize}
\end{definition}
Lemma~\ref{l:equiv} directly implies that any Nash Equilibrium of an EFG $(\sigma_1^\ast, \sigma_2^\ast) \in \Sigma_1\times \Sigma_2$ as per Definition~\ref{d:nash_beh} can be converted in polynomial-time to a Nash Equilibrium in the sequence form $(x_1^\ast, x_2^\ast)\in \X_1^\Gamma  \times \X_2^\Gamma $ and vice versa.

\subsection{Optimistic Mirror Descent}
In this section we introduce and provide the necessary background for \textit{Optimistic Mirror Descent} \cite{rakhlin2013online}. For a convex function $\psi: \mathbb{R}^d \mapsto \mathbb{R}$, the corresponding \textit{Bregman divergence} is defined as
\[D_\psi(x, y) := \psi(x) - \psi(y) -  \left \langle \nabla \psi(y),x-y  \right \rangle\]
If $\psi$ is $\gamma$-strongly convex, then $D_\psi(x,y) \geq \frac{\gamma}{2} \norm{x-y}$. Here and in the rest of the paper, we note that $\|\cdot\|$ is shorthand for the $L_2$-norm.

Now consider a game played by $n$ agents, where the action of each agent $i$ is a vector $x_i$ from a convex set $\X_i$. Each agent selects its action $x_i \in \X_i$ so as to minimize her individual cost (denoted by $C_i(x_i,x_{-i})$), which is continuous, differentiable and convex with respect to $x_i$. Specifically, $$C_i(\lambda\cdot x_i + (1-\lambda)\cdot x_i',x_{-i}) \leq \lambda\cdot C_i(x_i,x_{-i})
+ (1-\lambda)\cdot C_i(x'_i,x_{-i}) \text{ for all } \lambda \in [0,1]$$

Given a step size $\eta > 0$ and a convex function $\psi(\cdot)$ (called a regularizer), \textit{Optimistic Mirror Descent} (OMD) sequentially performs the following update step for $t=1,2,\ldots$:
\begin{align}
    x_i^t &= \text{argmin}_{x \in \X_i}\left\{ \eta \left\langle x, F^{t-1}_i(x)\right\rangle + D_\psi\left(x,\hat{x}_i^t\right)\right\}\label{eq:Mirror_Descent1}\\
    \hat{x}_i^{t+1} &= \text{argmin}_{x \in \X_i}\left\{ \eta \left\langle x, F^t_i(x) \right\rangle + D_\psi(x,\hat{x}_i^t)\right\}\label{eq:Mirror_Descent2}
\end{align}
where $F_i^t(x_i) = \nabla_{x_i} C_i(x_i , x_{-i}^t)$ and $D_\psi(x,y)$ is the \textit{Bregman Divergence} with respect to $\psi(\cdot)$. If the step-size $\eta$ selected is sufficiently small, then \textit{Optimistic Mirror Descent} ensures the \textit{no-regret property} \cite{rakhlin2013online}, making it a natural update algorithm for selfish agents \cite{H16}. To simplify notation we denote the projection operator of a convex set $\X^\ast$ as $\Pi_{\X^\ast}(x):= \argmax_{x^\ast \in \X^\ast}\norm{x - x^\ast}$ and the squared distance of vector $x$ from a convex set $\X^\ast$ as $\dist^2(x,\X^\ast) := \norm{x - \Pi_{\X^\ast}(x)}^2$.

\section{Our Setting}\label{sec:setting}
In this section of the paper, we introduce the concept of Network Zero-Sum Extensive Form Games, which are a network extension of the two player EFGs introduced in Section \ref{sec:preliminaries}. 
\subsection{Network Zero-Sum Extensive Form Games}
A \textit{network extensive form game} is defined with respect to an undirected graph $\G=(V,E)$ where nodes $V$ $\left(\vert V \vert =n\right)$ correspond to the set of players and each edge $(u,v) \in E$ represents a \textit{two-player extensive form game} $\Gamma^{uv}$ played between agents $u,v$. Each node/agent $u \in V$ selects a behavioral plan $\sigma_u \in \Sigma_u$ which they use to play all the \textit{two-player EFGs} on its outgoing edges.
\begin{definition}[Network Extensive Form Games]
A network extensive form game is a tuple $\Gamma:= \left<\G,\H,\A,\Z,\I\right>$ where  
\begin{itemize}
    \item $\G = (V,E)$ is an an undirected graph where the nodes $V$ represents the agents.
    
    \item Each agent $u \in V$ admits a set of states $\H_u$ at which the agent $u$ plays. Each state $h \in \H_u$ is associated with a set $\A(h)$ of possible actions that agent $u$ can take at state $h$.
    
    \item $\mathrm{\I}(h)$ denotes the information set of $h \in \H_u$. If  $~\I(h)=\I(h')$ for some $h,h' \in \H_u$ then $\A(h) = \A(h')$.
    
    \item For each edge $(u,v) \in E$, $\Gamma^{uv}$ is a two-player extensive form game with perfect recall. The states of $\Gamma^{uv}$ are denoted by $\H^{uv} \subseteq \H_u \cup \H_v$.
    
    \item For each edge $(u,v) \in E$, $\Z^{uv}$ is the set of terminal states of the two-player extensive form game $\Gamma^{uv}$ where $p^{\Gamma_{uv}}_u(z)$
    denotes the payoffs of $u,v$ at the terminal state $z \in \Z^{uv}$. The overall set of terminal states of the network extensive form game is the set $\Z := \cup_{(u,v)\in E}\Z^{uv}$.
\end{itemize}
\end{definition}

In a network extensive form game, each agent $u\in V$ selects a behavioral plan $\sigma_u \in \Sigma_u $ (see Definition~\ref{d:behavioral_plan}) that they use to play the two-player EFG's $\Gamma^{uv}$ with $(u,v)\in E$. Each agent selects her behavioral plan so as to maximize the sum of the payoffs of the two-player EFGs in her outgoing edges.
\begin{definition}
Given a collection of behavioral plans $\sigma = (\sigma_1,\ldots, \sigma_n) \in \Sigma_1 \times \ldots \times \Sigma_n$ the payoff of agent $u$, denoted by $U_u(\sigma)$, equals
\[U_u(\sigma) := \sum_{v:(u,v)\in E}p_{u}^{\Gamma_{uv}}(\sigma_u,\sigma_v) \]
Moreover a collection $\sigma^\ast = (\sigma^\ast_1,\ldots, \sigma^\ast_n) \in \Sigma_1 \times \ldots \times \Sigma_n$ is called a Nash Equilibrium if and only if
\[ U_u(\sigma^\ast_u,  \sigma^\ast_{-u}) \geq U_u(\sigma_u,  \sigma^\ast_{-u})~~~~\text{for all }\sigma_u \in \Sigma_u\]
\end{definition}
As already mentioned, each agent $u \in V$ plays all the two-player games $\Gamma^{uv}$ for $(u,v)\in E$ with the same behavioral plan $\sigma_u \in \Sigma_u$. This is due to the fact that the agent cannot distinguish between a state $h_1,h_2 \in \H_u$ with $\I(h_1) = \I(h_2)$ even if $h_1,h_2$ are states of different EFG's $\Gamma^{uv}$ and $\Gamma^{uv'}$. As in the case of \textit{perfect recall}, the latter implies that $u$ cannot differentiate states $h_1,h_2$ even when recalling the states $\H_u$ visited in the past and her past actions. In Definition~\ref{d:consistency} we introduce the notion of \textit{consistency} (this corresponds to the notion of \textit{perfect recall} for two-player extensive form games (Definition~\ref{d:perfect_recall})). From now on we assume that the network EFG is consistent without mentioning it explicitly.
\begin{definition}\label{d:consistency}
A network extensive form game $\Gamma:= \left<\G,\H,\A,\Z,\I\right>$ is called \textbf{consistent} if and only if for all players $u \in V$ and states $h_1, h_2 \in \H_u$ with $\I(h_1) = \I(h_2)$ the following holds: for any $(u,v),(u,v') \in E$ the sets
$\P^{uv}(h_1)\cap \H_u := (p_1,\ldots,p_k,h_1)$ and $\P^{uv'}(h_2)\cap \H_u := (q_1,\ldots,q_m,h_2)$ satisfy:
\begin{enumerate}
    \item $k = m$.
    \item $\I(p_\ell) = \I(q_\ell)~~$ for all $\ell \in \{1,k\}$.
    
    \item $p_{\ell + 1} \in \mathrm{Next}^{\Gamma_{uv}}(p_\ell,\alpha,u) $ and $q_{\ell + 1} \in \mathrm{Next}^{\Gamma_{uv'}}(q_\ell,\alpha,u) $ for some action $\alpha \in \A(p_\ell)$. 
\end{enumerate}
where $\P^{uv}(h)$ denotes the path from the root state to state $h$ in the two-player extensive form game $\Gamma^{uv}$.
\end{definition}

In this work we study the special class of network \emph{zero-sum} extensive form games. This class of games is a generalization of the network zero-sum normal form games studied in \cite{cai2011minmax}.

\begin{definition}\label{d:zero_sum}
A behavioral plan $\sigma_u \in \Sigma_u$ of Definition~\ref{d:behavioral_plan} is called pure if and only if $\sigma_u(h,\alpha)$ either equals $0$ or $1$ for all actions $\alpha \in \A(h)$. A network extensive form game is called \textbf{zero-sum} if and only if for any collection $\sigma:= (\sigma_1,\ldots,\sigma_n)$ of pure behavioral plans, $U_u(\sigma) = 0~~~\text{for all } u\in V$.
\end{definition}

\subsection{Network Extensive Form Games in Sequence Form}
As in the case of two-player EFGs, there exists an equivalence between behavioral plans $\sigma_u \in \Sigma_u$ and strategies in sequence form $x_u$. As we shall later see, this equivalence is of great importance since it allows for the design of natural and computationally efficient learning dynamics that converge to Nash Equilibria both in terms of behavioral plans and strategies in sequence form.
\begin{definition}\label{d:treeplex_graph}
Given a network extensive form game $\Gamma:= \left<\G,\H,\A,\Z,\I\right>$, the treeplex polytope $\X_u \subseteq [0,1]^{|\H_u| + |\Z_u|}$ is the set defined as follows: $x_u \in \X_u$ if and only if
\begin{enumerate}
    \item $x_u \in \X^{\Gamma_{uv}}_u~~$ for all $(u,v)\in E$.
    
    \item $x_u(h_1) = x_u(h_2)$ in case there exists $(u,v),(u,v')\in E$ and $h_1',h_2' \in \H_u$ with $\I(h'_1)=I(h'_2)$ such that $h_1 \in \mathrm{Next}^{\Gamma_{uv}}(h_1',\alpha,u)$, $h_2 \in \mathrm{Next}^{\Gamma_{uv'}}(h_2',\alpha,u)$ and $\I(h_1')=I(h_2')$.
\end{enumerate}
\end{definition}
The second constraint in Definition~\ref{d:treeplex_graph} is the equivalent of the second constraint in Definition~\ref{d:treeplex}. We remark that the linear equations describing the treeplex polytope $\X_u$ can be derived in polynomial-time with respect to the description of the network extensive form game. In Lemma~\ref{l:equiv_graph} we formally state and prove the equivalence between behavioral plans and strategies in sequence form. 
\begin{lemma}\label{l:equiv_graph}
    Consider the matrix $A^{uv}$ of dimensions $(|\H_u| + |\Z^u|) \times (|\H_v| + |\Z^v|)$ such that 
    \[ \left[A^{uv}\right]_{h_1h_2} = \left\{
\begin{array}{ll}
      p_u^{\Gamma_{uv}}(h) \qquad\text{ if }h_1 = h_2 = h \in \Z^{uv}\\
      0 ~~~~~~~~~~~~~~~~~~\qquad\text{ otherwise}\\
\end{array} 
\right. \]
There exists a polynomial time algorithm converting any collection of behavioral plans $(\sigma_1,\ldots,\sigma_n) \in \Sigma_1 \times \ldots \times \Sigma_n$ into a collection of vectors $(x_1,\ldots,x_n) \in \X_1 \times \ldots \times \X_n$ such that for any $u\in V$,
\[U_u(\sigma) = x_u ^\top \cdot \sum_{v:(u,v)\in E} A^{uv} \cdot x_v\]
In the opposite direction, 
there exists a polynomial time algorithm converting any collection of vectors $(x_1,\ldots,x_n) \in \X_1 \times \ldots \times \X_n$ into a collection of behavioral plans $(\sigma_1,\ldots,\sigma_n) \in \Sigma_1 \times \ldots \times \Sigma_n$  such that for any $u\in V$,
\[ x_u ^\top \cdot \sum_{v:(u,v)\in E} A^{uv} \cdot x_v = U_u(\sigma)\]
\end{lemma}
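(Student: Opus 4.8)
The plan is to reduce the network claim to the two-player case already established in Lemma~\ref{l:equiv}, edge by edge. The key observation is that a behavioral plan $\sigma_u \in \Sigma_u$ for the network game restricts, on the states $\H^{uv} \cap \H_u$ of each incident edge-game $\Gamma^{uv}$, to a behavioral plan for $\Gamma^{uv}$ (this is well-defined precisely because each $\Gamma^{uv}$ has perfect recall by assumption, and the information-set constraint in Definition~\ref{d:behavioral_plan} is inherited). So first I would, given $\sigma = (\sigma_1,\ldots,\sigma_n)$, apply Lemma~\ref{l:equiv} to each edge $(u,v)\in E$: this produces sequence-form vectors $x^{uv}_u \in \X^{\Gamma_{uv}}_u$ and $x^{uv}_v \in \X^{\Gamma_{uv}}_v$ with $p_u^{\Gamma_{uv}}(\sigma_u,\sigma_v) = (x^{uv}_u)^\top A^{uv} x^{uv}_v$. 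The subtlety is that the different edge-games incident to $u$ must be glued into a single vector $x_u \in \X_u$; here I would use consistency (Definition~\ref{d:consistency}), which guarantees that the sequence-form realization weight of a state $h\in\H_u$ computed inside $\Gamma^{uv}$ depends only on the information-set history of $h$, hence agrees across all edges $(u,v)\in E$ sharing that prefix. This is exactly what the gluing constraint (2) in Definition~\ref{d:treeplex_graph} demands, so the glued vector $x_u$ lies in $\X_u$, and its restriction to the coordinates of $\Gamma^{uv}$ equals $x^{uv}_u$. Summing over $v$ with $(u,v)\in E$ then gives
\[
U_u(\sigma) = \sum_{v:(u,v)\in E} p_u^{\Gamma_{uv}}(\sigma_u,\sigma_v) = \sum_{v:(u,v)\in E} (x^{uv}_u)^\top A^{uv} x^{uv}_v = x_u^\top \sum_{v:(u,v)\in E} A^{uv} x_v,
\]
where the last equality uses that $A^{uv}$ is supported only on the coordinates corresponding to $\Z^{uv}$, so $x_u^\top A^{uv} x_v = (x^{uv}_u)^\top A^{uv} x^{uv}_v$ regardless of how $x_u,x_v$ behave on other coordinates. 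The polynomial-time claim follows since Lemma~\ref{l:equiv}'s conversions are polynomial and there are $|E|$ edges, and the gluing is a linear operation.

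For the converse direction I would run the same argument backwards: given $(x_1,\ldots,x_n) \in \X_1\times\cdots\times\X_n$, constraint (1) of Definition~\ref{d:treeplex_graph} says the restriction $x^{uv}_u$ of $x_u$ to the coordinates of $\Gamma^{uv}$ lies in $\X^{\Gamma_{uv}}_u$, so Lemma~\ref{l:equiv} yields a behavioral plan $\sigma^{uv}_u \in \Sigma^{\Gamma_{uv}}_u$ for each edge. I then need to check that these per-edge behavioral plans can be merged into a single $\sigma_u \in \Sigma_u$ — i.e., that they agree on every information set of $u$ that is visited in more than one edge-game. This is where constraint (2) of Definition~\ref{d:treeplex_graph} (together with consistency) does the work: at an information set $\I(h_1')=\I(h_2')$ shared between $\Gamma^{uv}$ and $\Gamma^{uv'}$, the sequence-form weights $x_u(h_1), x_u(h_2)$ on the children are forced equal, which translates (after dividing by the common parent weight, when nonzero; when zero the local behavioral choice is unconstrained and can be picked consistently) into equal behavioral probabilities. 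Having merged, the payoff identity again follows edge by edge from Lemma~\ref{l:equiv} and the support structure of $A^{uv}$.

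I expect the main obstacle to be the bookkeeping around the gluing step — making precise the claim that "consistency $+$ treeplex constraints $\Leftrightarrow$ the per-edge sequence-form vectors are mutually compatible," particularly handling the degenerate case where a realization weight is zero (so behavioral probabilities on that subtree are undetermined and one must exhibit \emph{some} consistent choice rather than a canonical one). A clean way to sidestep fragility is to define the sequence-form coordinates of $x_u$ directly from $\sigma_u$ by the realization-weight formula (product of $\sigma_u(h',\alpha_{h''})$ along the $\H_u$-path to $h$), verify constraints (1)–(2) of Definition~\ref{d:treeplex_graph} by inspection using consistency, and only then invoke Lemma~\ref{l:equiv} coordinate-wise; conversely, reconstruct $\sigma_u$ from $x_u$ by $\sigma_u(h',\alpha) = x_u(\mathrm{Next}^{\Gamma_{uv}}(h',\alpha,u))/x_u(h')$ whenever $x_u(h')>0$ and arbitrarily otherwise, noting consistency makes this independent of the edge $v$. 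Everything else — the payoff identities and the polynomial running time — is then immediate from Lemma~\ref{l:equiv} applied once per edge.
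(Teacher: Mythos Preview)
Your proposal is correct and essentially matches the paper's proof. In particular, the ``clean way'' you describe at the end --- defining $x_u(h)$ directly as the realization-weight product of $\sigma_u$ along the $\H_u$-path to $h$ (verifying via consistency that this is independent of which incident edge-game $\Gamma^{uv}$ is used to compute the path), and conversely setting $\sigma_u(h,\alpha) = x_u(\mathrm{Next}^{\Gamma_{uv}}(h,\alpha,u))/x_u(h)$ with consistency and the treeplex constraints guaranteeing edge-independence and information-set compatibility --- is exactly the construction the paper carries out. Your initial edge-by-edge-then-glue framing is a slightly more modular packaging of the same argument, but you correctly anticipate that the direct realization-weight definition is cleaner, and that is what the paper does.
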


\begin{definition}\label{d:nash_seq}
A Nash Equilibrium of a network extensive form game $\G$ in sequence form is a vector $(x_1^\ast , \ldots, x_n^\ast) \in \X_1 \times \ldots \times \X_n$ such that for all $u \in V$: 
    \[(x^\ast_u) ^\top \cdot \sum_{v:(u,v)\in E} A^{uv} \cdot x^\ast_v \geq x_u^\top \cdot \sum_{v:(u,v)\in E} A^{uv} \cdot x^\ast_v ~~~\text{for all }x_u \in \X_u\]
\end{definition}
\begin{corollary}
Given a network extensive form game, any Nash Equilibrium $(\sigma^\ast_1,\ldots,\sigma^\ast_n) \in \Sigma_1 \times \ldots \times \Sigma_n$ (as per Definition~\ref{d:nash_beh}) can be converted in polynomial-time to a Nash Equilibrium $(x^\ast_1,\ldots,x^\ast_n) \in \X_1 \times \ldots \times \X_n$ (as per Definition~\ref{d:nash_seq}) and vice versa. 
\end{corollary}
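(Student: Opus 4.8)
The corollary is a transfer principle and should follow from Lemma~\ref{l:equiv_graph} essentially for free, since the equilibrium conditions of Definitions~\ref{d:nash_beh} and~\ref{d:nash_seq} are nothing but payoff comparisons, and the conversions of Lemma~\ref{l:equiv_graph} are polynomial-time and payoff-preserving. The plan is to prove the two directions separately, each by a short contradiction argument, and to read off the polynomial-time claim directly from Lemma~\ref{l:equiv_graph}.

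For the direction from behavioral plans to sequence form, I would start with a Nash Equilibrium $(\sigma_1^\ast,\ldots,\sigma_n^\ast)$ in the sense of Definition~\ref{d:nash_beh} and apply the (polynomial-time) algorithm of Lemma~\ref{l:equiv_graph} to obtain $(x_1^\ast,\ldots,x_n^\ast) \in \X_1\times\cdots\times\X_n$ with $U_u(\sigma^\ast) = (x_u^\ast)^\top \sum_{v:(u,v)\in E} A^{uv} x_v^\ast$ for every $u \in V$. Suppose, towards a contradiction, that this profile violates Definition~\ref{d:nash_seq}: then some player $u$ has a vector $x_u \in \X_u$ with $x_u^\top \sum_{v:(u,v)\in E} A^{uv} x_v^\ast > (x_u^\ast)^\top \sum_{v:(u,v)\in E} A^{uv} x_v^\ast$. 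Using the converse algorithm of Lemma~\ref{l:equiv_graph}, applied to the collection obtained from $(x_1^\ast,\ldots,x_n^\ast)$ by replacing its $u$-th entry with $x_u$, I would produce a behavioral plan $\sigma_u \in \Sigma_u$ for player $u$ satisfying $U_u(\sigma_u,\sigma_{-u}^\ast) = x_u^\top \sum_{v:(u,v)\in E} A^{uv} x_v^\ast$. Chaining the three displayed (in)equalities gives $U_u(\sigma_u,\sigma_{-u}^\ast) > U_u(\sigma^\ast)$, contradicting that $\sigma^\ast$ is a Nash Equilibrium. The reverse direction is entirely symmetric: convert a sequence-form equilibrium to behavioral plans via Lemma~\ref{l:equiv_graph}, and turn a hypothetical profitable behavioral deviation $\sigma_u$ into a sequence-form vector $x_u \in \X_u$ to contradict Definition~\ref{d:nash_seq}.

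The step I expect to be the main obstacle is the \emph{locality} of the conversion inside the contradiction argument: the claim that a single player's strategy can be replaced while the neighbors' are held fixed, i.e.\ that $U_u(\sigma_u,\sigma_{-u}^\ast) = x_u^\top\sum_{v:(u,v)\in E} A^{uv} x_v^\ast$. The clean way to handle this is to not treat Lemma~\ref{l:equiv_graph} as a black box over profiles, but to note that both $U_u(\sigma) = \sum_{v:(u,v)\in E} p_u^{\Gamma_{uv}}(\sigma_u,\sigma_v)$ and the bilinear form $x_u^\top\sum_v A^{uv}x_v$ decompose edge-by-edge, and that for each incident edge the equality $p_u^{\Gamma_{uv}}(\sigma_u,\sigma_v) = x_{\sigma_u}^\top A^{uv} x_{\sigma_v}$ is exactly the two-player equivalence of Lemma~\ref{l:equiv}, which permits converting player $u$'s strategy alone while leaving $v$'s fixed. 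What makes a single $x_u \in \X_u$ simultaneously a valid sequence-form strategy in every incident $\Gamma^{uv}$ — mirroring the fact that the single behavioral plan $\sigma_u$ is used in all of them — is precisely condition~1 of Definition~\ref{d:treeplex_graph} together with the cross-game constraint (condition~2), which is legitimate under the consistency assumption (Definition~\ref{d:consistency}), the network analog of perfect recall. Once this edge-wise bookkeeping is in place, every payoff comparison reduces to a sum of the two-player identities from Lemma~\ref{l:equiv}, and the corollary — including the polynomial-time bound, inherited from Lemma~\ref{l:equiv_graph} — follows.
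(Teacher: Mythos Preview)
Your proposal is correct and matches the paper's treatment: the paper states the corollary as an immediate consequence of Lemma~\ref{l:equiv_graph} without giving any explicit proof (just as it earlier says ``Lemma~\ref{l:equiv} directly implies'' the analogous two-player statement). Your argument is simply a fleshed-out version of that implication, and you correctly identify and resolve the one nontrivial point --- that the conversion is coordinate-wise so a single player's deviation can be converted in isolation --- by unpacking the edge-by-edge structure underlying Lemma~\ref{l:equiv_graph}.
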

The sequence form representation gives us a perspective with which we can analyze the theoretical properties of learning algorithms when applied to network zero-sum EFGs. In the following section, we utilize the sequence form representation to study a special case of Optimistic Mirror Descent known as Optimistic Gradient Ascent (OGA). 
\section{Our Convergence Results}
In this work, we additionally study the convergence properties of \textit{Optimistic Gradient Ascent} (OGA) when applied to \textit{network zero-sum EFGs}. OGA is a special case of \textit{Optimistic Mirror Descent} where the regularizer is $\psi(a) = \frac{1}{2}\|a\|^2$, which means that the Bregman divergence $D_{\psi}(x,y)$ equals $\frac{1}{2}\|x-y\|^2$. Since in network zero-sum EFGs each agent tries to maximize her payoff, OGA takes the following form:
\begin{align}
    x_u^t &= \text{argmax}_{x \in \X_u}\left\{ \eta \left\langle x, \sum_{v:(u,v)\in E} A^{uv} \cdot x_v^{t-1} \right\rangle - D_\psi\left(x,\hat{x}_u^t\right)\right\}~\label{eq:ODA1}\\
    \hat{x}_u^{t+1} &= \text{argmax}_{x \in \X_u}\left\{ \eta \left\langle x, \sum_{v:(u,v)\in E} A^{uv} \cdot x_v^t \right\rangle - D_\psi\left(x,\hat{x}_u^t\right)\right\}\label{eq:ODA2}
\end{align}
In Theorem~\ref{t:average} we describe the $\Theta(1/T)$ convergence rate to NE for the time-average strategies for any agent using OGA.  
\begin{theorem}\label{t:average}
Let $\{x^1,x^2, \ldots x^T\}$ be the vectors produced by Equations~(\ref{eq:ODA1}),(\ref{eq:ODA2}) for some initial strategies $x^0~:=~(x^0_1,\ldots,x^0_n)$. There exist game-dependent constants $c_1,c_2 >0$ such that if $\eta\leq 1/c_1$ then for any $u \in V$:
\[\hat{x}_u ^\top \cdot \sum_{v:(u,v)\in E} A^{uv} \cdot \hat{x}_v \geq x^\top \cdot \sum_{v:(u,v)\in E} A^{uv} \cdot \hat{x}_v - \Theta\left(\frac{c_1 \cdot c_2}{T}\right) ~~~\text{for all }x \in \X_u\]
where $\hat{x}_u = \sum_{s=1}^T x^s_u / T$.
\end{theorem}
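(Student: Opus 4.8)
The plan is to reduce the network zero-sum EFG to a two-player symmetric game and then invoke a regret-type argument specialized to the optimistic dynamics. First I would use the sequence-form representation (Lemma~\ref{l:equiv_graph}) to express each agent's payoff as a bilinear form, and then stack the individual strategy vectors into a single vector $x = (x_1,\ldots,x_n) \in \X := \X_1 \times \cdots \times \X_n$. Defining a block matrix $R$ whose $(u,v)$ block equals $A^{uv}$ when $(u,v)\in E$ and $0$ otherwise, the update Equations~(\ref{eq:ODA1})--(\ref{eq:ODA2}) applied simultaneously by all agents become exactly optimistic gradient descent (with the Euclidean regularizer, so $D_\psi(x,y)=\tfrac12\|x-y\|^2$) on the symmetric two-player game $(R,R)$ over $\X$, since each agent's gradient $\sum_{v:(u,v)\in E} A^{uv} x_v$ is precisely the $u$-block of $Rx$. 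The zero-sum property of Definition~\ref{d:zero_sum} must first be shown to imply the restricted identity $x^\top R y + y^\top R x = 0$ for all $x,y \in \X$ (Property~(\ref{eq:zero-sum})): this follows because on pure behavioral plans the total payoff $\sum_u U_u(\sigma)=0$, each pure plan corresponds via Lemma~\ref{l:equiv_graph} to a vertex of $\X$, $\sum_u U_u$ is the bilinear form $x^\top R y$ evaluated at $x=y$ being a vertex, so $x^\top R x = 0$ on vertices, hence by multilinearity $x^\top R x \equiv 0$ on $\X$ and polarization gives the stated symmetric-sum identity.

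Next I would run the standard optimistic/extra-gradient regret telescoping. Writing the OGD updates in the compact form $x^t = \Pi_\X(\hat x^t - \eta R x^{t-1})$, $\hat x^{t+1} = \Pi_\X(\hat x^t - \eta R x^t)$, the first-order optimality conditions for the two projections give, for any fixed $z \in \X$,
\begin{align*}
\langle x^t - z,\, \eta R x^{t-1}\rangle &\le \langle x^t - z,\, \hat x^t - x^t\rangle,\\
\langle \hat x^{t+1} - z,\, \eta R x^t\rangle &\le \langle \hat x^{t+1} - z,\, \hat x^t - \hat x^{t+1}\rangle.
\end{align*}
Combining these in the usual way and summing over $t=1,\ldots,T$ yields a telescoping bound of the form $\sum_{t=1}^T \langle x^t - z,\, \eta R x^t\rangle \le \tfrac12\|\hat x^1 - z\|^2 + \eta^2\sum_t \|R\|^2\|x^t - x^{t-1}\|^2 - \tfrac14\sum_t \|x^t - x^{t-1}\|^2$, so that for $\eta$ small enough (namely $\eta \le 1/(2\|R\|) =: 1/c_1$) the last two sums are absorbed and we get $\sum_{t=1}^T \langle x^t - z,\, R x^t\rangle \le c_2/\eta$ with $c_2 = \tfrac12 \mathrm{diam}(\X)^2$, a game-dependent constant. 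Choosing $\eta = 1/c_1$ this is the $O(c_1 c_2)$ bound; dividing by $T$ gives the $\Theta(c_1 c_2/T)$ rate.

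Finally I would convert the aggregate inequality back into the per-agent Nash gap. Because $R$ is block-structured, $\langle x^t - z,\, R x^t\rangle = \sum_{u\in V}\langle x_u^t - z_u,\, \sum_{v:(u,v)\in E} A^{uv} x_v^t\rangle$; the crucial point is that this sum over agents of "regret against the realized play" is controlled, and we want to extract a bound on a single agent $u$ deviating to an arbitrary $x \in \X_u$ while everyone else plays their time-average. Here the restricted zero-sum identity does the real work: taking $z$ to be $x^t$ in all coordinates except the $u$-th where it is the fixed deviation $x$, the cross terms from other agents cancel against their own regret contributions (this is exactly the mechanism by which Property~(\ref{eq:zero-sum}) substitutes for $R=-R^\top$), leaving $\sum_{t=1}^T \langle x - x_u^t,\, \sum_{v:(u,v)\in E} A^{uv} x_v^t\rangle \le O(c_1 c_2)$; by bilinearity the left side is $T\big(x^\top \sum_v A^{uv}\hat x_v - \hat x_u^\top \sum_v A^{uv}\hat x_v\big)$, which after dividing by $T$ and rearranging is exactly the claimed bound.

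The main obstacle I expect is the last step — rigorously justifying that the per-agent deviation gap can be isolated from the aggregate regret bound using only Property~(\ref{eq:zero-sum}) rather than a genuine antisymmetry $R=-R^\top$; one must be careful that when agent $u$ deviates the other agents' contributions to $\langle x^t - z, Rx^t\rangle$ really do telescope/cancel and do not leave an uncontrolled residual. This is precisely the place where the generalization of \cite{wei2020linear} beyond classical zero-sum games is nontrivial, and I would expect the authors to handle it via a careful choice of the comparator $z$ together with the identity $x^\top R x \equiv 0$ on $\X$. The regret-telescoping and the small-step-size absorption (the first two paragraphs) are comparatively routine once the reduction to $(R,R)$ on the treeplex is in place.
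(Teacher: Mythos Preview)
Your overall strategy---reduce to the symmetric game $(R,R)$ over the product treeplex $\X$, prove the restricted identity $x^\top R y + y^\top R x = 0$ on $\X$, run the optimistic regret telescoping, then extract the per-agent Nash gap---is exactly the paper's route (Lemmas~\ref{l:OGD_equivalence}, \ref{l:zero}, \ref{l:average}, \ref{l:sne}), and your first two steps are fine.

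The final extraction step, however, has two genuine errors. First, you take the comparator $z$ to equal $x^t$ in all coordinates other than the $u$-th; this is a \emph{time-varying} comparator, and the telescoping regret bound you derived needs $z$ fixed (both the initial term $\tfrac12\|\hat x^1 - z\|^2$ and the telescoping of $\|\hat x^t - z\|^2 - \|\hat x^{t+1} - z\|^2$ break otherwise). Second, the ``bilinearity'' identity you invoke is false: $\sum_{t=1}^T (x_u^t)^\top \sum_v A^{uv} x_v^t$ is \emph{not} $T\,\hat x_u^\top \sum_v A^{uv}\hat x_v$, because both factors depend on $t$ and there is no reason the cross terms collapse.

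The paper sidesteps both issues by applying the identity $x^\top R x = 0$ on $\X$ twice---once at each time step and once on the time average. Since every $x^t\in\X$, the diagonal term $\sum_t (x^t)^\top R x^t$ in the aggregate regret bound vanishes identically, so for any \emph{fixed} $z\in\X$ one obtains directly $z^\top R \hat x \ge -\Theta(\mathcal{D}^2\|R\|^2/T)$ (this is Lemma~\ref{l:average}). Since also $\hat x\in\X$ (as a convex combination), $\hat x^\top R \hat x = 0$, so $\hat x$ is an $\epsilon$-symmetric Nash equilibrium with $\epsilon = \Theta(\mathcal{D}^2\|R\|^2/T)$. The per-agent statement then falls out of Lemma~\ref{l:sne} by choosing the fixed comparator $z$ to coincide with $\hat x$ in every block except the $u$-th, where $z_u = x$: the contributions from blocks $u'\neq u$ are identical in $z^\top R\hat x$ and $\hat x^\top R\hat x$ and cancel, leaving precisely the inequality of Theorem~\ref{t:average}. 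In short, the obstacle you flagged is real, but it is resolved not by a time-dependent deviation---it is resolved by first passing to the time average (where $x^\top R x\equiv 0$ does all the work) and only then deviating unilaterally from $\hat x$.
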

Applying the polynomial-time transformation of Lemma~\ref{l:equiv_graph} to the time-average strategy vector $\hat{x} = (\hat{x}_1,\ldots,\hat{x}_n)$ produced by Optimistic Gradient Ascent, we immediately get that for any agent $u \in V$,
\[U_u(\hat{\sigma_u}, \hat{\sigma}_{-u}) \geq U_u(\sigma_u, \hat{\sigma}_{-u}) - \Theta\left(c_1\cdot c_2/T \right)~~~\text{for all }\sigma_u \in \Sigma_u\]
In Theorem~\ref{t:2} we establish the fact that OGA admits last-iterate convergence to NE in network zero-sum EFGs.

\begin{theorem}\label{t:2}
Let $\{x^1,x^2, \ldots x^T\}$ be the vectors produced by Equations~(\ref{eq:ODA1}),(\ref{eq:ODA2}) for $\eta\leq 1/c_3$ when applied to a network zero-sum extensive form game. Then, the following inequality holds:
\[\mathrm{dist}^2(x^t, \X^*) \leq 64\mathrm{dist}^2(x^1, \X^*)\cdot\left(1 + c_1\right)^{-t}\]
where $\X^\ast$ denotes the set of Nash Equilibria, $c_1 \coloneqq \min\left\{\frac{16\eta^2 c^2}{81}, \frac{1}{2} \right\}$ and $c_3, c$ are positive game-dependent constants.
\end{theorem}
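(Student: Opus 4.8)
The plan is to recast the simultaneous \oga\ dynamics of all agents as a single optimistic gradient step over the product treeplex, exploit a \emph{restricted} skew-symmetry property of the resulting payoff matrix, and then adapt the linear last-iterate analysis of \cite{wei2020linear} (developed for classical two-player zero-sum games) to this more general setting. First I would make the reduction precise: stacking the per-agent strategies as $x=(x_1,\dots,x_n)\in\X:=\X_1\times\cdots\times\X_n$ and letting $R$ be the block matrix whose $(u,v)$ block equals $A^{uv}$ for $(u,v)\in E$ and $0$ otherwise, the $u$-th block of $Rx$ is exactly agent $u$'s payoff gradient, so (since $\X$ is a product) the updates~(\ref{eq:ODA1})--(\ref{eq:ODA2}) are precisely $x^t=\Pi_{\X}(\hat x^t+\eta Rx^{t-1})$ and $\hat x^{t+1}=\Pi_{\X}(\hat x^t+\eta Rx^{t})$, i.e., optimistic gradient descent in the symmetric game $(R,R)$ over $\X$ (following \cite{cai2011minmax}). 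The map $x\mapsto x^\top Rx=\sum_{u}U_u(x)$ is multilinear in the underlying behavioral plans and vanishes on every pure profile by Definition~\ref{d:zero_sum}, hence vanishes on all of $\X$ by multilinearity and Lemma~\ref{l:equiv_graph}; expanding $0=(\lambda x+(1-\lambda)y)^\top R(\lambda x+(1-\lambda)y)$ for $x,y\in\X$ and setting $\lambda=\tfrac12$ then gives Property~(\ref{eq:zero-sum}), and therefore $(x-y)^\top R(x-y)=0$ for all $x,y\in\X$ --- so on $\X$ the matrix $R$ behaves exactly like a skew-symmetric matrix, the analogue of $A=-A^\top$. Finally, Definition~\ref{d:nash_seq} reads $x^\ast\in\X^\ast$ iff $\langle Rx^\ast,x-x^\ast\rangle\le0$ for all $x\in\X$; since $x^{\ast\top}Rx^\ast=0$, this is equivalent to $x^\top Rx^\ast\le 0$ for all $x\in\X$, so $\X^\ast$ is cut out by finitely many linear inequalities (one per vertex of $\X$) and is a nonempty polytope.

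Next I would establish the structural lemma on which the whole argument rests: a polyhedral error bound $\dist(x,\X^\ast)\le c\cdot\bigl\|x-\Pi_{\X}(x+\eta Rx)\bigr\|$ for all $x\in\X$, with $c>0$ a game-dependent constant. The natural residual $G(x):=x-\Pi_{\X}(x+\eta Rx)$ is piecewise linear (the Euclidean projection onto a polytope is piecewise linear, composed with an affine map), and from the polyhedral description of $\X^\ast$ one checks that $G(x)=0$ exactly when $x\in\X^\ast$; a Hoffman-type error bound for piecewise-linear maps --- the same tool used in \cite{wei2020linear} --- then yields the inequality. This is the step that introduces the game-dependent constant $c$ appearing in $c_1=\min\{16\eta^2c^2/81,\tfrac12\}$, and it is also where one must be careful: every cancellation invoking $x^\top Ry+y^\top Rx=0$ requires both arguments to lie in $\X$ (which holds for the iterates, their projections, and $\Pi_{\X^\ast}(\cdot)$), since $R\neq-R^\top$ off $\X$.

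The heart of the proof is then the contraction argument, which I would carry out in the style of \cite{wei2020linear}. Fixing $z^\ast=\Pi_{\X^\ast}(\hat x^t)$, I would work with a potential of the form $\Phi^t=\|\hat x^{t}-z^\ast\|^2+(\text{lower-order corrections involving }\|\hat x^t-x^{t-1}\|)$, combine the three-point identity for Euclidean projections with the two update rules, and use the restricted skew-symmetry $(a-b)^\top R(a-b)=0$ for $a,b\in\X$ --- in place of $A=-A^\top$ --- to cancel the bilinear cross terms $\langle R(x^t-z^\ast),x^t-z^\ast\rangle$. The residual error terms, arising from the optimistic mismatch between $Rx^{t-1}$ and $Rx^t$ and from the gap $\|x^t-\hat x^t\|$, are controlled by choosing $\eta\le1/c_3$. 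This yields a one-step decrease $\Phi^{t+1}\le\Phi^t-\Omega(\eta^2\|G(x^t)\|^2)$; substituting the error bound from the previous paragraph gives $\Phi^{t+1}\le\Phi^t-\Omega(\eta^2/c^2)\,\dist^2(x^t,\X^\ast)$, and rearranging turns this into $\Phi^{t+1}\le(1+c_1)^{-1}\Phi^t$. Telescoping this recursion, together with the fact that $\Phi^t$ and $\dist^2(x^t,\X^\ast)$ are comparable up to an absolute constant, produces the claimed bound $\dist^2(x^t,\X^\ast)\le64\,\dist^2(x^1,\X^\ast)(1+c_1)^{-t}$.

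The hard part will be twofold. First, proving the error bound of the second paragraph with an explicit, usable game-dependent constant while having only the \emph{restricted} identity~(\ref{eq:zero-sum}) rather than $R=-R^\top$: all the skew-symmetry manipulations must be kept strictly within $\X$, and one must verify that the polyhedral characterization $\X^\ast=\{x^\ast:\ v^\top Rx^\ast\le0\ \text{for all vertices }v\}$ is exactly the zero set of the natural residual. Second, the bookkeeping of the contraction step --- selecting the correction coefficients in $\Phi^t$ and the step-size threshold $c_3$ so that the optimistic-term mismatch and $\|x^t-\hat x^t\|$ are dominated, and showing the auxiliary sequence $\hat x^t$ stays close enough to $x^t$. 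The remaining ingredients (Hoffman's lemma, the projection three-point identity, telescoping) are routine.
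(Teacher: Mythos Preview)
Your proposal is correct and follows essentially the same route as the paper: reduce the joint \oga\ dynamics to optimistic gradient descent in the symmetric game $(R,R)$ over the product treeplex (Lemma~\ref{l:OGD_equivalence}), establish the restricted skew-symmetry (Lemma~\ref{l:zero}), show $\X^\ast$ is a polytope cut out by the vertex inequalities $v^\top R x^\ast\ge 0$, derive a polyhedral error bound via a Hoffman-type argument, and run the one-step contraction machinery of \cite{wei2020linear} with the potential $\Theta^t$ playing the role of your $\Phi^t$. The only difference is cosmetic---you phrase the error bound through the natural residual $G(x)=x-\Pi_\X(x+\eta Rx)$, whereas the paper uses the equivalent saddle-point metric subregularity formulation $\sup_{x'}\langle Rx,x-x'\rangle/\|x-x'\|\ge c\,\dist(x,\X^\ast)$ (Lemma~\ref{l:metric}, proved via Lemma~\ref{l:10}) and then pairs it with a ready-made lower bound on $\xi^t$ taken verbatim from \cite{wei2020linear} (Lemma~\ref{lemma:W04}), which spares the extra step of relating $\xi^t$ to $\|G(x^t)\|^2$ that your route would need.
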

We conclude the section by providing the key ideas towards proving Theorems~\ref{t:average} and~\ref{t:2}. For the rest of the section, we assume that the network extensive form game is consistent and zero-sum. Before proceeding, we introduce a few more necessary definitions and notations. We denote as $\X:= \X_1\times \ldots \times \X_n$ the product of treeplexes of Definition~\ref{d:treeplex_graph} and define the $|\X|\times |\X|$ matrix $R$ as follows:
\[ R_{(u:h_1),(v:h_2)} = \left\{
\begin{array}{ll}
      - \left[A^{uv}\right]_{h_1h_2} \qquad\text{ if }(u,v) \in E\\
      ~~~~~~~0 ~~~~~~~~~~~~\qquad\text{ otherwise}\\
\end{array} 
\right. \]
The matrix $R$ can be used to derive a more concrete form of the Equations~(\ref{eq:ODA1}),(\ref{eq:ODA2}):

\begin{lemma}\label{l:OGD_equivalence}
Let $\{x^1, x^2,\ldots,x^T\}$ be the collection of strategy vectors produced by Equations~(\ref{eq:ODA1}),(\ref{eq:ODA2}) initialized with $x^0:=(x_1^0,\ldots,x^0_n) \in \mathcal{X}$. The equations
\begin{align}
    x^t &= \text{argmin}_{x \in \X}\left\{ \eta \left\langle x, R\cdot x^{t-1} \right\rangle + D_\psi\left(x,\hat{x}^t\right)\right\}~\label{eq:OGD1}\\
    \hat{x}^{t+1} &= \text{argmin}_{x \in \X}\left\{ \eta \left\langle x, R \cdot x^t \right\rangle + D_\psi\left(x,\hat{x}^t\right)\right\}\label{eq:ODG2}
\end{align}
produce the exact same collection of strategy vectors $\{x^1,\ldots,x^T\}$ when initialized with $x^0 \in \mathcal{X}$.
\end{lemma}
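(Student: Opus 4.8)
The plan is to observe that Equations~(\ref{eq:OGD1}),(\ref{eq:ODG2}) are merely a ``stacked'' rewriting of Equations~(\ref{eq:ODA1}),(\ref{eq:ODA2}), and then close the argument by an induction on $t$. First I would record two structural facts. The feasible region factorizes, $\X=\X_1\times\cdots\times\X_n$, so minimizing over $\X$ any objective that is \emph{separable} across the blocks $x_1,\dots,x_n$ splits into $n$ independent per-agent minimizations. Moreover the regularizer is separable: $D_\psi(x,y)=\tfrac12\norm{x-y}^2=\sum_{u\in V}\tfrac12\norm{x_u-y_u}^2=\sum_{u\in V}D_\psi(x_u,y_u)$, so writing $\hat x^t=(\hat x^t_1,\dots,\hat x^t_n)$ the Bregman term splits block-wise.

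Next I would identify the linear term. By the definition of the block matrix $R$, for any $x\in\X$ and any agent $u$ the $u$-block of $Rx$ is $(Rx)_u=\sum_{v:(u,v)\in E}(-A^{uv})x_v=-\sum_{v:(u,v)\in E}A^{uv}x_v$, since the only nonzero blocks in row-group $u$ are the $-A^{uv}$ for neighbours $v$ of $u$ (the ``diagonal'' block vanishes because no agent plays against herself, and non-edges contribute nothing). Hence $\eta\langle x,Rx^{t-1}\rangle=\sum_{u\in V}\eta\langle x_u,(Rx^{t-1})_u\rangle=-\sum_{u\in V}\eta\big\langle x_u,\sum_{v:(u,v)\in E}A^{uv}x^{t-1}_v\big\rangle$, so the objective in~(\ref{eq:OGD1}) is itself separable across $x_1,\dots,x_n$.

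Combining these two facts, the minimization in~(\ref{eq:OGD1}) decomposes, and its $u$-block reads
\begin{align*}
&\min_{x\in\X_u}\Big\{-\eta\big\langle x,{\textstyle\sum_{v:(u,v)\in E}}A^{uv}x^{t-1}_v\big\rangle+D_\psi(x,\hat x^t_u)\Big\}\\
&\qquad=\max_{x\in\X_u}\Big\{\eta\big\langle x,{\textstyle\sum_{v:(u,v)\in E}}A^{uv}x^{t-1}_v\big\rangle-D_\psi(x,\hat x^t_u)\Big\},
\end{align*}
which is exactly the update~(\ref{eq:ODA1}) for $x^t_u$; repeating the computation with $x^t$ in place of $x^{t-1}$ shows that the $u$-block of~(\ref{eq:ODG2}) equals~(\ref{eq:ODA2}) for $\hat x^{t+1}_u$. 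An induction on $t$ then finishes the proof: with the common initialization $x^0\in\X$ (and $\hat x^1=x^0$), the base case is immediate and the inductive step is precisely the block-wise identification above, so the two recursions generate the identical sequence $\{x^1,\dots,x^T\}$ (and the identical auxiliary sequence $\{\hat x^t\}$).

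I do not expect a genuine obstacle here; the content is essentially bookkeeping. The one point needing care is verifying that the index set of $\X$ and the block structure of $R$ are arranged so that $R$ really realizes the neighbour-sum $x_u\mapsto-\sum_{v:(u,v)\in E}A^{uv}x_v$ on the concatenated strategy space --- in particular that the nonzero entries of each $A^{uv}$ sit only on the shared terminal states of $\Gamma^{uv}$, and that the sign flip $R=-A$ is exactly what converts the agents' payoff-\emph{maximization} in~(\ref{eq:ODA1}),(\ref{eq:ODA2}) into the cost-\emph{minimization} form of Optimistic Mirror Descent in~(\ref{eq:Mirror_Descent1}),(\ref{eq:Mirror_Descent2}).
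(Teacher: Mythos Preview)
Your proposal is correct and follows essentially the same approach as the paper: both argue that the product structure $\X=\X_1\times\cdots\times\X_n$ together with the block structure of $R$ makes the objective in~(\ref{eq:OGD1}),(\ref{eq:ODG2}) separable, so the global minimization decomposes into the per-agent updates~(\ref{eq:ODA1}),(\ref{eq:ODA2}) after the sign flip converts $\min$ to $\max$. Your version is slightly more explicit about the separability of the Bregman term and the induction on $t$, but the content is the same.
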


To this end, we derive a \textit{two-player symmetric game $(R,R)$} defined over the polytope $\mathcal{X}$. More precisely, the $x$-agent selects $x \in \mathcal{X}$ so as to minimize $x^\top  R  y$ while the $y$-agent selects $y \in \mathcal{X}$ so as to minimize $y^\top R x$. Now consider the Optimistic Mirror Descent algorithm (described in Equations~(\ref{eq:Mirror_Descent1}),(\ref{eq:Mirror_Descent2})) applied to the above symmetric game. Notice that if $x^0 = y^0$, then by the symmetry of the game, the produced strategy vector $(x^t,y^t)$ will be of the form $(x^t,x^t)$ and indeed, $(x^t$, $\hat{x}^t)$ will satisfy Equations~(\ref{eq:OGD1}),~(\ref{eq:ODG2}). We prove that the produced vector sequence $\{x^t\}_{t \geq 1}$ converges to a \textit{symmetric Nash Equilibrium}.

\begin{lemma}\label{l:sne}
A strategy vector $x^\ast$ is an $\epsilon$-symmetric Nash Equilibrium for the symmetric game $(R,R)$ if the following holds:
\[(x^\ast)^\top \cdot R \cdot x^\ast \leq x^\top \cdot R \cdot x^\ast + \epsilon~~~~~\text{for all }x\in \mathcal{X} 
\]
Any $\epsilon$-symmetric Nash Equilibrium $x^\ast \in \mathcal{X}$ is also an $\epsilon$-Nash Equilibrium for the network zero-sum EFG.
\end{lemma}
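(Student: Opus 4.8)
The statement pairs a definition with a claim. The first sentence fixes what an $\epsilon$-symmetric equilibrium of $(R,R)$ is; since in this game the $x$-player minimizes $x^\top R y$ and the $y$-player minimizes $y^\top R x$, a profile $(x^\ast,x^\ast)$ is an $\epsilon$-Nash equilibrium exactly when both players $\epsilon$-best-respond to $x^\ast$, and by the symmetry of the game these two conditions are literally the same inequality $(x^\ast)^\top R x^\ast \le x^\top R x^\ast + \epsilon$ for all $x \in \mathcal{X}$. So the real content is the implication ``$x^\ast$ is an $\epsilon$-symmetric NE of $(R,R)$ $\Longrightarrow$ $x^\ast$ is an $\epsilon$-NE of the network EFG'', and the plan is to extract it directly from the block structure of $R$.

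First I would record the structural identity. Writing $R^{uv}$ for the sub-block of $R$ with rows indexed by agent $u$ and columns by agent $v$, we have $R^{uv} = -A^{uv}$ when $(u,v)\in E$ and $R^{uv}=0$ otherwise; in particular $R^{uu}=0$ because $\mathcal{G}$ is a simple graph. Hence for all $x,y\in\mathcal{X}$,
\[ x^\top R y \;=\; \sum_{u,v\in V} x_u^\top R^{uv} y_v \;=\; -\sum_{u\in V} x_u^\top \sum_{v:(u,v)\in E} A^{uv}\, y_v . \]
The decisive consequence is that once the opponent is frozen at $x^\ast$, the map $x\mapsto x^\top R x^\ast$ is \emph{linear and separable} over the product $\mathcal{X}=\mathcal{X}_1\times\cdots\times\mathcal{X}_n$: setting $g_u := -\sum_{v:(u,v)\in E} A^{uv} x^\ast_v$, which depends only on $x^\ast_{-u}$ (no self-loop at $u$), we get $x^\top R x^\ast = \sum_{u\in V} x_u^\top g_u$.

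Now assume $x^\ast\in\mathcal{X}$ satisfies the $\epsilon$-symmetric-NE inequality. I would fix an arbitrary agent $u$ and a deviation $x_u'\in\mathcal{X}_u$ and plug the feasible point $x=(x_u',x^\ast_{-u})$ (feasible since $\mathcal{X}$ is a product, Definition~\ref{d:treeplex_graph}) into that inequality. Every term $w\ne u$ of $\sum_w x_w^\top g_w$ matches the corresponding term of $\sum_w (x^\ast_w)^\top g_w$, so the difference $(x^\ast)^\top R x^\ast - x^\top R x^\ast$ collapses to $(x^\ast_u)^\top g_u - (x_u')^\top g_u$, yielding $(x^\ast_u)^\top g_u \le (x_u')^\top g_u + \epsilon$. (If preferred, the restricted zero-sum property~(\ref{eq:zero-sum}) gives $(x^\ast)^\top R x^\ast = 0$ and the same bound; but only the cancellation of the $w\ne u$ terms is actually used.) Since $-x_u^\top g_u = x_u^\top \sum_{v:(u,v)\in E} A^{uv} x^\ast_v$ is precisely agent $u$'s sequence-form payoff under $(x_u,x^\ast_{-u})$, this rearranges to
\[ (x^\ast_u)^\top \sum_{v:(u,v)\in E} A^{uv} x^\ast_v \;\ge\; (x_u')^\top \sum_{v:(u,v)\in E} A^{uv} x^\ast_v \;-\;\epsilon , \]
i.e.\ the $\epsilon$-relaxation of Definition~\ref{d:nash_seq}. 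As $u$ and $x_u'$ were arbitrary, $x^\ast$ is an $\epsilon$-NE of the network EFG in sequence form, and I would conclude by passing through the polynomial-time conversion of Lemma~\ref{l:equiv_graph} to obtain an $\epsilon$-NE in behavioral plans.

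I do not expect a genuine obstacle: the argument is just the observation that freezing the opponent makes $x\mapsto x^\top R x^\ast$ a separable linear function, plus bookkeeping over the agent-indexed blocks of $R$. The points needing care are (i) that $\mathcal{G}$ is simple, so $g_u$ really omits $x^\ast_u$ and the map is linear (not quadratic), and (ii) that unilateral deviations stay in $\mathcal{X}$ because $\mathcal{X}=\prod_u\mathcal{X}_u$; identifying $-x_u^\top g_u$ with the EFG payoff is then a direct restatement via Lemma~\ref{l:equiv_graph}.
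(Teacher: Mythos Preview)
Your proposal is correct and follows essentially the same route as the paper: fix an agent $u$, plug the unilateral deviation $x=(x_u',x^\ast_{-u})$ into the $\epsilon$-symmetric NE inequality, expand $x^\top R x^\ast$ via the block structure of $R$, and cancel all $w\ne u$ terms to recover the $\epsilon$-relaxed best-response condition of Definition~\ref{d:nash_seq}. Your write-up is in fact slightly more careful than the paper's---you make explicit the no-self-loop point ensuring $g_u$ depends only on $x^\ast_{-u}$, and you note that the identity $(x^\ast)^\top R x^\ast=0$ is optional here---but the argument is the same.
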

A key property of the constructed matrix is the one stated and proven in Lemma~\ref{l:zero}. Its proof follows the steps of the proof of Lemma~B.3 in \cite{cai2011minmax} and is presented in Appendix~\ref{s:zero}.
\begin{lemma}\label{l:zero}
$x^\top \cdot R \cdot y + y^\top \cdot R \cdot x = 0$ for all $x,y \in \X$. 
\end{lemma}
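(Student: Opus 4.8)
The plan is to first reduce the bilinear identity to the quadratic statement $x^\top \cdot R \cdot x = 0$ for all $x \in \X$, and then to recover the bilinear form by polarization, exploiting the convexity of $\X$. Unfolding the block structure of $R$, for any $x,y \in \X$ we have $x^\top \cdot R \cdot y = -\sum_{(u,v)\in E} x_u^\top \cdot A^{uv}\cdot y_v$, the sum ranging over all ordered adjacent pairs. Comparing with Lemma~\ref{l:equiv_graph}, this gives $-x^\top \cdot R \cdot x = \sum_{u\in V} x_u^\top \cdot \sum_{v:(u,v)\in E}A^{uv}\cdot x_v = \sum_{u\in V}U_u(\sigma)$, where $\sigma$ is the behavioral-plan profile associated with $x$; in other words, the quadratic form attached to $R$ is exactly (minus) the total payoff of the network game.

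Next I would argue that this total payoff vanishes identically on $\X$. By Definition~\ref{d:zero_sum}, $\sum_{u\in V}U_u(\sigma) = 0$ for every profile $\sigma$ of \emph{pure} behavioral plans. The total payoff is a multilinear function of the underlying probabilities --- equivalently, in sequence form it is affine-linear in each block $x_u$ separately, since $\G$ has no self-loops so no term $x_u^\top A^{uu} x_u$ occurs --- and a multilinear function that vanishes at every vertex of the polytope $\X = \X_1\times\cdots\times\X_n$ vanishes on the whole polytope (fix all but one block at a vertex, invoke linearity in the remaining block, and iterate over the $n$ blocks). Since pure behavioral plan profiles are exactly the integral vertices of $\X$ under the transformation of Lemma~\ref{l:equiv_graph}, we conclude $x^\top\cdot R\cdot x = 0$ for all $x\in\X$.

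Finally, for arbitrary $x,y\in\X$, convexity of $\X$ (a polytope, being an intersection of the treeplex polytopes of Definition~\ref{d:treeplex_graph} together with linear equalities) gives $\tfrac12(x+y)\in\X$, hence
\[0 \;=\; (x+y)^\top\cdot R\cdot(x+y) \;=\; x^\top R x + x^\top R y + y^\top R x + y^\top R y \;=\; x^\top\cdot R\cdot y + y^\top\cdot R\cdot x,\]
where the middle step does \emph{not} symmetrize $R$ (so the two cross terms are kept separate) and the last uses $x^\top R x = y^\top R y = 0$. This is precisely the asserted identity.

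The step I expect to be the main obstacle is the upgrade from pure profiles to all of $\X$: one must pin down the precise multilinearity/affineness of the payoff and the fact that the sequence-form correspondence of Lemma~\ref{l:equiv_graph} identifies pure profiles with the vertices of the treeplex product $\X$; it is also worth noting that the relevant consequence of Definition~\ref{d:zero_sum} is the vanishing of the \emph{summed} payoff (as the introduction promises), and any hypothesis at least this strong makes the argument go through. The remaining pieces --- the block-matrix bookkeeping for $R$ and the one-line polarization --- are routine, which is consistent with the remark that the proof mirrors that of Lemma~B.3 in \cite{cai2011minmax}.
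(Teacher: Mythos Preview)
Your proposal is correct and, in its second half, genuinely simpler than what the paper does. Both you and the paper first establish the quadratic identity $x^\top R x = 0$ on $\X$ (the paper calls this Lemma~\ref{l:zero_simple}); your justification of the pure-to-mixed extension via block-wise affineness is in fact more explicit than the paper's, which invokes Definition~\ref{d:zero_sum} directly for a general behavioral-plan profile without spelling out the multilinear extrapolation. Where the two diverge is in deriving the bilinear identity: the paper proves an auxiliary invariance statement (Lemma~\ref{l:prev}) saying that for fixed neighbor strategies $\{x_v\}_{v\in\N_u}$ the quantity $\sum_{v\in\N_u}\big(x_v^\top A^{vu} x_u + x_u^\top A^{uv} x_v\big)$ is constant in $x_u$, and then transforms $y$ into $x$ one coordinate block at a time, using this invariance at each step. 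Your polarization via $\tfrac12(x+y)\in\X$ collapses this whole induction into a single line; it uses only convexity of $\X$ and homogeneity of the quadratic form, and avoids the auxiliary lemma entirely. The paper's route has the mild advantage of making the player-by-player structure visible (which is closer in spirit to the Cai--Daskalakis argument it cites), but your argument is the cleaner derivation of the stated lemma.
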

Once Lemma~\ref{l:zero} is established, we can use to it to prove that the time-average strategy vector converges to an $\epsilon$-symmetric Nash Equilibrium in a two-player symmetric game.    
\begin{lemma}\label{l:average}
Let $(x^1, x^2,\ldots,x^T)$ be the sequence of strategy vectors produced by Equations~(\ref{eq:OGD1}),(\ref{eq:ODG2}) for $\eta~\leq~\mathrm{min}\{1/8\|R\|^2,1\}$. Then, 
\[\text{min}_{x \in \X} x^\top \cdot R \cdot \hat{x} \geq -\Theta\left(\frac{\mathcal{D}^2\|R\|^2}{T}\right) \]
where $\hat{x} = \sum_{s=1}^T x^s/T$ and $\mathcal{D}$ is the diameter of the treeplex polytope $\mathcal{X}$.
\end{lemma}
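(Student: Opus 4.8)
The plan is to identify $\min_{x\in\X} x^\top R\hat x$ (up to the factor $-1/T$) with the external regret of the $x$-player in the self-play dynamics, and then to control that regret with the standard optimistic (RVU-type) regret bound for Optimistic Mirror Descent. The only genuinely game-specific input is the restricted zero-sum property of Lemma~\ref{l:zero}.

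First I would use Lemma~\ref{l:zero} with $y=x$ to get $x^\top R x = 0$ for every $x\in\X$, hence $\sum_{t=1}^T (x^t)^\top R x^t = 0$. Consequently, for every fixed comparator $x^\ast\in\X$,
\[
\mathrm{Reg}^T(x^\ast) \;:=\; \sum_{t=1}^T \big\langle x^t - x^\ast,\; R x^t\big\rangle \;=\; -\sum_{t=1}^T \big\langle x^\ast,\; R x^t\big\rangle \;=\; -\,T\cdot (x^\ast)^\top R\,\hat x ,
\]
so that $\min_{x\in\X} x^\top R\hat x = -\tfrac1T\,\max_{x^\ast\in\X}\mathrm{Reg}^T(x^\ast)$. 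It therefore suffices to show $\mathrm{Reg}^T(x^\ast)\le\Theta(\mathcal D^2\|R\|^2)$ uniformly over $x^\ast\in\X$.

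Next I would observe that Equations~(\ref{eq:OGD1}),(\ref{eq:ODG2}) are exactly Optimistic Mirror Descent with regularizer $\psi(\cdot)=\tfrac12\|\cdot\|^2$, loss vectors $g^t = R x^t$ and predictions $m^t = g^{t-1} = R x^{t-1}$ (this is the self-play reduction described before Lemma~\ref{l:sne}, specialized to $y^t=x^t$). The standard guarantee for this scheme \cite{rakhlin2013online} gives, for any $x^\ast\in\X$,
\[
\mathrm{Reg}^T(x^\ast)\;\le\;\frac{D_\psi(x^\ast,\hat x^1)}{\eta}\;+\;\eta\sum_{t=1}^T\|R x^t - R x^{t-1}\|^2\;-\;\frac{1}{4\eta}\sum_{t=1}^T\Big(\|x^t-\hat x^t\|^2+\|\hat x^{t+1}-x^t\|^2\Big).
\]
Bounding $\|R x^t - R x^{t-1}\|^2\le\|R\|^2\|x^t-x^{t-1}\|^2\le 2\|R\|^2\big(\|x^t-\hat x^t\|^2+\|\hat x^t-x^{t-1}\|^2\big)$ and noting that $\|\hat x^t-x^{t-1}\|^2$ reappears, with a shifted index, among the negative stability terms, the hypothesis $\eta\le\min\{1/(8\|R\|^2),1\}$ yields $8\eta^2\|R\|^2\le 1$, i.e. $2\eta\|R\|^2\le\frac{1}{4\eta}$, so the positive variation term is fully absorbed by the negative sum (up to an $O(\mathcal D^2)$ boundary contribution at $t=1$). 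We are left with $\mathrm{Reg}^T(x^\ast)\le D_\psi(x^\ast,\hat x^1)/\eta + O(\mathcal D^2) = \tfrac{1}{2\eta}\|x^\ast-\hat x^1\|^2 + O(\mathcal D^2)\le \Theta(\mathcal D^2)/\eta = \Theta(\mathcal D^2\|R\|^2)$, and dividing by $T$ gives the claim.

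\textbf{Main obstacle.} Conceptually the argument is short once Lemma~\ref{l:zero} is available; the load-bearing new step is the identity $\mathrm{Reg}^T(x^\ast)=-T\,(x^\ast)^\top R\hat x$, which rests entirely on $x^\top R x=0$. The remaining work is bookkeeping: pinning down the exact constant in the RVU bound and in the triangle-inequality split so that the variation term is dominated under precisely the stated range of $\eta$, and disposing of the $t=1$ boundary term (which, with a convention such as $\hat x^1=x^0$, costs only an additive $O(\mathcal D^2)$ that is absorbed into the final $\Theta(\mathcal D^2\|R\|^2/T)$). I do not expect any serious difficulty beyond this.
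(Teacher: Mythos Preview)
Your proposal is correct and follows essentially the same route as the paper's proof: both reduce the claim to a bound on the external regret of the OGD iterates (via the identity $(x^t)^\top R x^t = 0$ from Lemma~\ref{l:zero}), invoke the Rakhlin--Sridharan/RVU bound for Optimistic Mirror Descent, and use the step-size condition $\eta\le 1/(8\|R\|^2)$ together with the triangle-inequality split $\|x^t-x^{t-1}\|^2\le 2(\|x^t-\hat x^t\|^2+\|\hat x^t-x^{t-1}\|^2)$ to absorb the variation term into the negative stability term. The paper quotes a slightly different form of the RVU inequality (their Lemma~\ref{lemma:R03}) with constants $\tfrac12$ and $\tfrac{1}{2\eta}$ rather than your $\eta$ and $\tfrac{1}{4\eta}$, but the arithmetic and the mechanism are the same.
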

Combining Lemma~\ref{l:zero} with Lemma~\ref{l:average}, we get that that the time-average vector $\hat{x}$ is a $\Theta\left(\frac{\mathcal{D}^2\|R\|^2}{T}\right)$-symmetric Nash Equilibrium. This follows directly from the fact that $\hat{x}^\top \cdot R \cdot \hat{x} =0$. Then, Theorem~\ref{t:average} follows via a direct application of Lemma~\ref{l:sne}. For completeness, we present the complete proof of Theorem \ref{t:average} in Appendix~\ref{s:proof_theorem1}.

By Lemma~\ref{l:zero}, it directly follows that the set of symmetric Nash Equilibria can be written as: 
\[\X^\ast~=~\{x^\ast~\in~\X:~\min_{x~\in~\X} x^\top \cdot R \cdot x^\ast = 0\}.\] 
Using this, we further establish that Optimistic Gradient \emph{Descent} admits last-iterate convergence to the symmetric NE of the $(R,R)$ game. This result is formally stated and proven in Theorem~\ref{t:3}, the proof of which is adapted from the analysis of \cite{wei2021last}, with modifications to apply the steps to our setting. The proof of Theorem~\ref{t:3} is deferred to Appendix~\ref{s:proof_theorem2}. 
\begin{theorem}\label{t:3}
 Let $\{x^1,x^2, \ldots x^T\}$ be the vectors produced by Equations (\ref{eq:OGD1}),(\ref{eq:ODG2}) for $\eta\leq \min( 1/8\norm{R}^2,1)$. Then:
 \[\mathrm{dist}^2(x^t, \X^\ast) \leq 64\mathrm{dist}^2(x^1, \X^\ast)\cdot\left(1 + C_2\right)^{-t}\]
 where $C_2 \coloneqq \min\left\{\frac{16\eta^2 C^2}{81}, \frac{1}{2} \right\}$ with $C$ being a positive game-dependent constant.
 \end{theorem}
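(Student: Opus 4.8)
\textbf{Proof proposal for Theorem~\ref{t:3}.}
The plan is to follow the linear last-iterate template of \cite{wei2021last} for constrained monotone saddle-point problems, specialised to the single affine operator $F(x):=R\cdot x$ over the polytope $\X$. With $D_\psi(x,y)=\tfrac12\|x-y\|^2$, Equations~(\ref{eq:OGD1}),(\ref{eq:ODG2}) read $x^t=\Pi_\X(\hat x^t-\eta Rx^{t-1})$ and $\hat x^{t+1}=\Pi_\X(\hat x^t-\eta Rx^t)$, i.e.\ optimistic (extra-gradient--type) gradient descent for $F$. The operator $F$ is affine and $\|R\|$-Lipschitz, and --- this is where the network zero-sum structure enters --- expanding $(x-y)^\top R(x-y)$ and applying Lemma~\ref{l:zero} gives $\langle Rx-Ry,\,x-y\rangle=0$ for all $x,y\in\X$, so $F$ is monotone on $\X$ with vanishing feasible curvature, exactly as in the two-player bilinear case. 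I would also record, again from Lemma~\ref{l:zero}, that $\X^\ast=\{x^\ast\in\X:\ x^\top Rx^\ast\ge 0\ \text{for all }x\in\X\}$, and since it suffices to test the (finitely many) vertices of $\X$, the set $\X^\ast$ is a nonempty compact polytope.

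Next I would establish the two workhorse estimates. \textbf{(a) One-step potential decrease.} From the projection inequalities for the two updates, using the monotonicity identity to kill the cross term $\langle Rx^t-Rx^\ast,\,x^t-x^\ast\rangle$ and the optimality of $x^\ast$ (so that $\langle Rx^t,\,x^\ast-x^t\rangle\le 0$), a standard manipulation yields, for every $x^\ast\in\X^\ast$,
\[
\|\hat x^{t+1}-x^\ast\|^2\;\le\;\|\hat x^t-x^\ast\|^2\;-\;\|x^t-\hat x^t\|^2\;+\;\eta^2\|R\|^2\,\|x^t-x^{t-1}\|^2 .
\]
Using $\|x^t-x^{t-1}\|\le\|x^t-\hat x^t\|+\|\hat x^t-x^{t-1}\|$ and $\|\hat x^t-x^{t-1}\|\le\eta\|R\|\,\|x^{t-1}-x^{t-2}\|$ (nonexpansiveness of $\Pi_\X$), together with $\eta\le\min\{1/(8\|R\|^2),1\}$, the last term can be absorbed into a potential of the form $\Theta_t:=\mathrm{dist}^2(\hat x^{t+1},\X^\ast)+\alpha\,\|x^t-x^{t-1}\|^2$ for a small absolute constant $\alpha$, giving $\Theta_t\le\Theta_{t-1}-c'\big(\|x^t-\hat x^t\|^2+\|x^t-x^{t-1}\|^2\big)$ with an absolute $c'>0$ (choosing $x^\ast=\Pi_{\X^\ast}(\hat x^{t+1})$ where convenient). \textbf{(b) Global error bound.} Since $F$ is affine and $\X$ is polyhedral, the natural residual $r_\eta(x):=x-\Pi_\X(x-\eta Rx)$ of the variational inequality $\mathrm{VI}(\X,F)$ satisfies a Hoffman-type error bound on the bounded set $\X$: there is a game-dependent $C>0$ with $\mathrm{dist}(x,\X^\ast)\le C\,\|r_\eta(x)\|$ for all $x\in\X$. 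Since $\|r_\eta(\hat x^t)\|$ is controlled by $\|x^t-\hat x^t\|+\eta\|R\|\|x^{t-1}-\hat x^t\|$ (again by nonexpansiveness), (b) converts the progress term in (a) into a multiple of $\mathrm{dist}^2(\hat x^t,\X^\ast)$ up to the already-controlled displacements, and this quantity is comparable to $\Theta_{t-1}$.

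Combining (a) and (b) gives the geometric recursion $\Theta_t\le(1+C_2)^{-1}\Theta_{t-1}$ with $C_2=\min\{16\eta^2C^2/81,\ \tfrac12\}$ once the constants from the residual comparison and the triangle-inequality steps are tracked (the factor $1/81$ and the cap at $\tfrac12$ are precisely what emerges from those bounds, as in \cite{wei2021last}). Unrolling from $t=1$ and passing between $\Theta$ and $\mathrm{dist}^2(\cdot,\X^\ast)$ --- via $\mathrm{dist}^2(x^t,\X^\ast)\le 4\Theta_{t-1}$ on one side (triangle inequality among $x^t,\hat x^t,\hat x^{t+1}$) and $\Theta_0\le 16\,\mathrm{dist}^2(x^1,\X^\ast)$ on the other --- produces the stated bound $\mathrm{dist}^2(x^t,\X^\ast)\le 64\,\mathrm{dist}^2(x^1,\X^\ast)(1+C_2)^{-t}$.

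The main obstacle is ingredient (b): one must verify that the treeplex $\X$ together with the matrix $R$ --- which is skew-symmetric only \emph{on} $\X$, not globally ($R\neq-R^\top$ in general) --- still yields a polyhedral, error-bounded solution set, and then extract an explicit enough constant $C$ to legitimately define $C_2$. Polyhedrality is handled by the representation $\X^\ast=\{x^\ast\in\X:\ x^\top Rx^\ast\ge 0\text{ for every vertex }x\text{ of }\X\}$, and the error bound then follows from Hoffman's lemma applied to the piecewise-linear selection $x\mapsto\Pi_\X(x-\eta Rx)$; the rest is the careful-but-routine bookkeeping of optimistic projection inequalities, essentially as in the two-player zero-sum analysis of \cite{wei2021last} with $R$ playing the role of the payoff matrix.
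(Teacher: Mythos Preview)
Your proposal is correct and follows the same overall template as the paper: both exploit Lemma~\ref{l:zero} to obtain $\langle Rx^t,x^t-x^\ast\rangle\ge 0$ for $x^\ast\in\X^\ast$, both set up a Lyapunov potential mixing $\mathrm{dist}^2(\hat x,\X^\ast)$ with a displacement term, and both close the loop via a polyhedral error bound for the affine operator $F(x)=Rx$. The one substantive difference is in how that error bound is packaged. The paper keeps the stronger one-step inequality (Lemma~\ref{l:rakhlin1}) that retains the term $-\|\hat x^{t+1}-x^t\|^2$, defines $\xi^t=\|\hat x^{t+1}-x^t\|^2+\|x^t-\hat x^t\|^2$, and then invokes Wei et al.'s displacement-to-gap bound (Lemma~\ref{lemma:W04}, yielding the $32/81$ constant) together with the \emph{SP-MS} condition (Lemma~\ref{l:metric}), which it proves from scratch via a normal-cone/LP argument on the polyhedral description $\X^\ast=\{x^\ast\in\X:\ v^\top Rx^\ast\ge 0\text{ for every vertex }v\}$ (Lemma~\ref{l:10}). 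You instead drop the $-\|\hat x^{t+1}-x^t\|^2$ term, work with the natural residual $r_\eta$, and appeal to a Hoffman-type error bound for the affine VI. Both routes are valid; the paper's is more self-contained and is what produces the exact constants $16\eta^2C^2/81$, $1/2$, and $64$ in the stated bound, whereas your bookkeeping (different potential, residual instead of gap) would naturally yield the same geometric rate but with slightly different absolute constants. One small imprecision to fix: Hoffman's lemma should be applied to the linear system describing $\X^\ast$ (giving $\mathrm{dist}(x,\X^\ast)\le c\,[-\min_{x'\in\X}x'^\top Rx]_+$, which is exactly the paper's Lemma~\ref{l:10}), not to the piecewise-linear map $x\mapsto\Pi_\X(x-\eta Rx)$ directly.
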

The statement of Theorem~\ref{t:2} then follows directly by combining Theorem~\ref{t:3} and Lemma~\ref{l:sne}. This result generalizes previous last-iterate convergence results for the setting of two-player zero-sum EFGs, even for games without a unique Nash Equilibrium.

\section{Experimental Results}\label{sec:experiments}
In order to better visualize our theoretical results, we experimentally evaluate OGA when applied to various network extensive form games. As part of the experimental process, for each simulation we ran a hyperparameter search to find the value of $\eta$ which gave the best convergence rate.\\

\textbf{Time-average Convergence.} Our theoretical results guarantee time-average convergence to the Nash Equilibrium set (Theorem \ref{t:average}). 
We experimentally confirm this by running OGA on a network version of the ubiquitous Matching Pennies game with 20 nodes (Figure \ref{fig:timeavg}~(a)), followed by a 4-node network zero-sum EFG (Figure~\ref{fig:timeavg}~(b)). In particular, for the latter experiment each bilinear game between the players on the nodes is a randomly generated extensive form game with payoff values in $[0,1]$. Next, we experimented with a well-studied simplification of poker known as Kuhn poker \cite{kuhn1950poker}. Emulating the illustrative example of a competitive online Poker lobby as described in Section \ref{sec:intro}, we modelled a situation whereby each agent is playing against multiple other agents, and ran simulations for such a game with $5$ agents (Figure \ref{fig:timeavg} (c)).

In the plots, we show on the $y$-axis the difference between the cumulative averages of the strategy probabilities and the Nash Equilibrium value calculated from the game. In each of the plots, we see that these time-average values go to $0$, implying convergence to the NE set.\\
\begin{figure}[!htb]
\centering
\begin{minipage}{.5\textwidth}
  \centering
  \includegraphics[width=.95\linewidth]{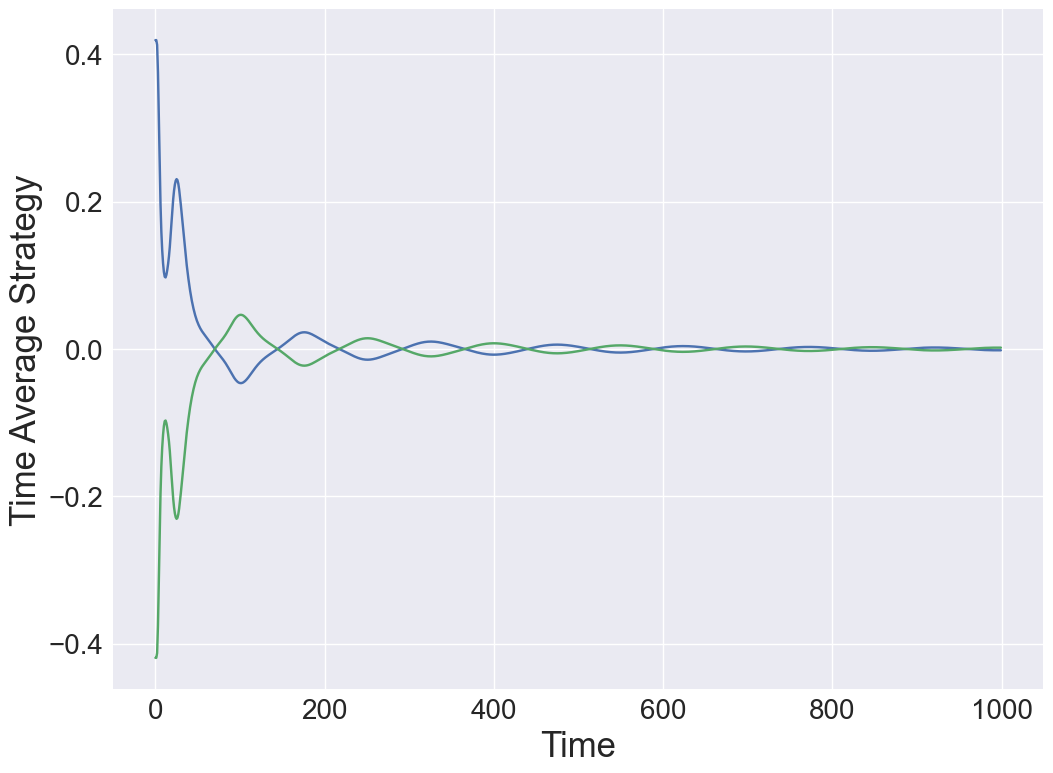}
\end{minipage}%
\hfill
\begin{minipage}{.5\textwidth}
  \centering
  \includegraphics[width=.95\linewidth]{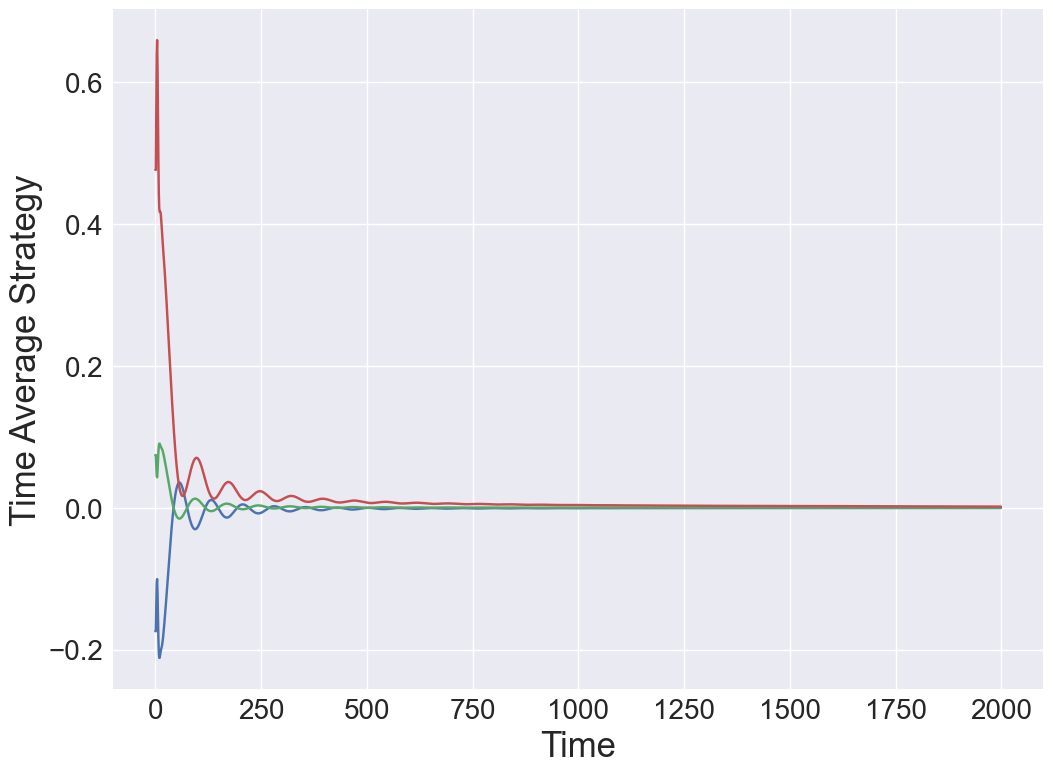}
\end{minipage}
\hfill
\begin{minipage}{.5\textwidth}
  \centering
  \includegraphics[width=.95\linewidth]{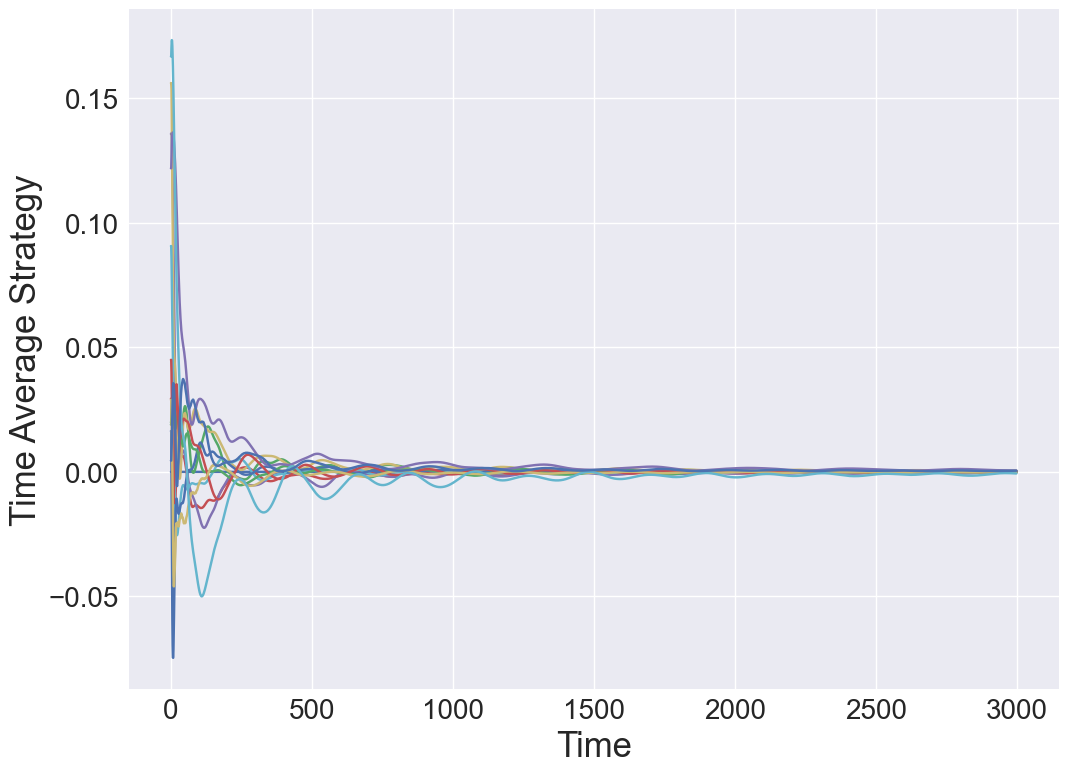}
\end{minipage}
\caption{Time-average convergence of OGA in network zero-sum extensive form games, where each player is involved in 2 or more different games and must select their strategy accordingly. (a) 20-node Matching Pennies game. (b) 4-node random extensive form game. (c) 5-node \emph{Kuhn poker} game.}
\label{fig:timeavg}
\end{figure}


\textbf{Last-iterate Convergence.}
Theorem \ref{t:2} guarantees $O(c^{-t})$ convergence in the last-iterate sense to a Nash Equlibrium for OGA. Similar to the time-average case, we ran simulations for randomly generated $3$ and $4$-node network extensive form games, where each bilinear game between two agents is a randomly generated matrix with values in $[0,1]$ (Figure \ref{fig:experiments} (a-b)). Moreover, we also simulated a 5-node game of Kuhn poker in order to generate Figure \ref{fig:experiments} (c). In order to generate the plots, we measured the log of the distance between each agent's strategy at time $t$ and the set of Nash Equilibria (computed \emph{a priori}), given by $\log(\dist^2(x^t,\mathcal{X}^*))$. As can be seen in Figure \ref{fig:experiments}, OGA indeed obtains fast convergence in the last-iterate sense to a Nash Equilibrium in each of our experiments.

A point worth noting is that when the number of nodes increases, the empirical last-iterate convergence time also increases drastically. For example, in the 5-player Kuhn poker game we see that each agents' convergence time is significantly greater compared to the smaller scale experiments. However, with a careful choice of $\eta$, we can still guarantee convergence to the set of Nash Equilibria for all players. Further discussion of these observations and detailed game descriptions can be found in Appendix \ref{appsec:experiments}.

\begin{figure}[!htb]
\centering
\begin{minipage}{.5\textwidth}
  \centering
  \includegraphics[width=.95\linewidth]{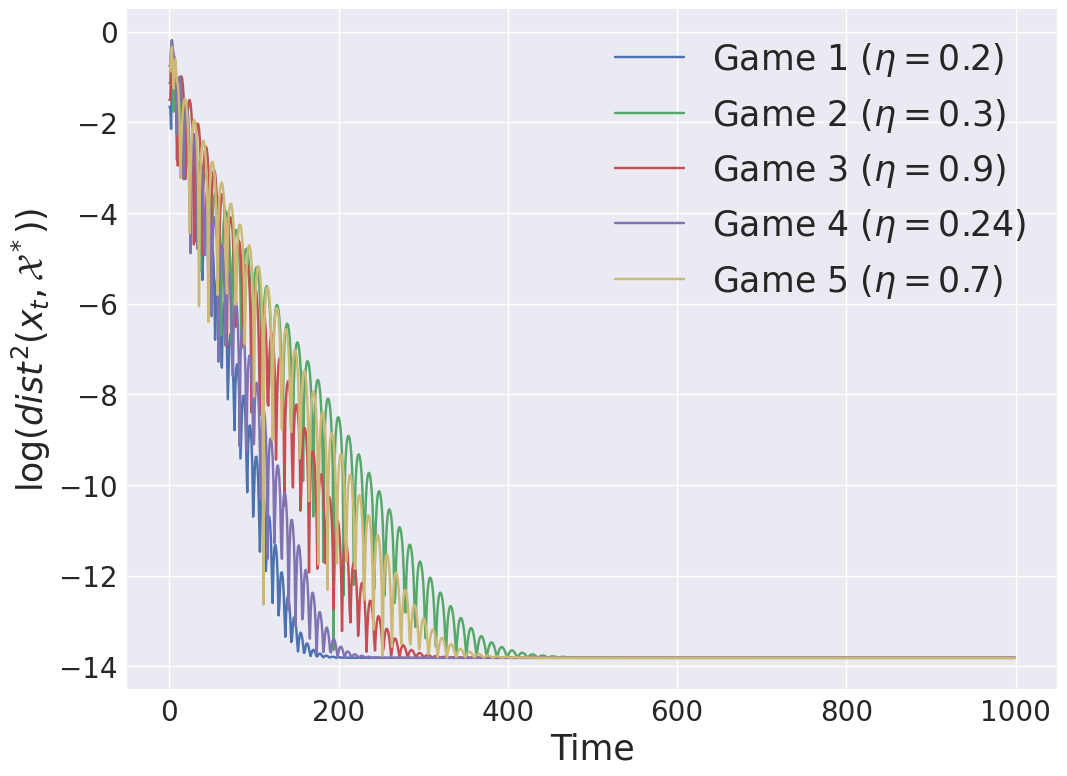}
\end{minipage}%
\hfill
\begin{minipage}{.5\textwidth}
  \centering
  \includegraphics[width=.95\linewidth]{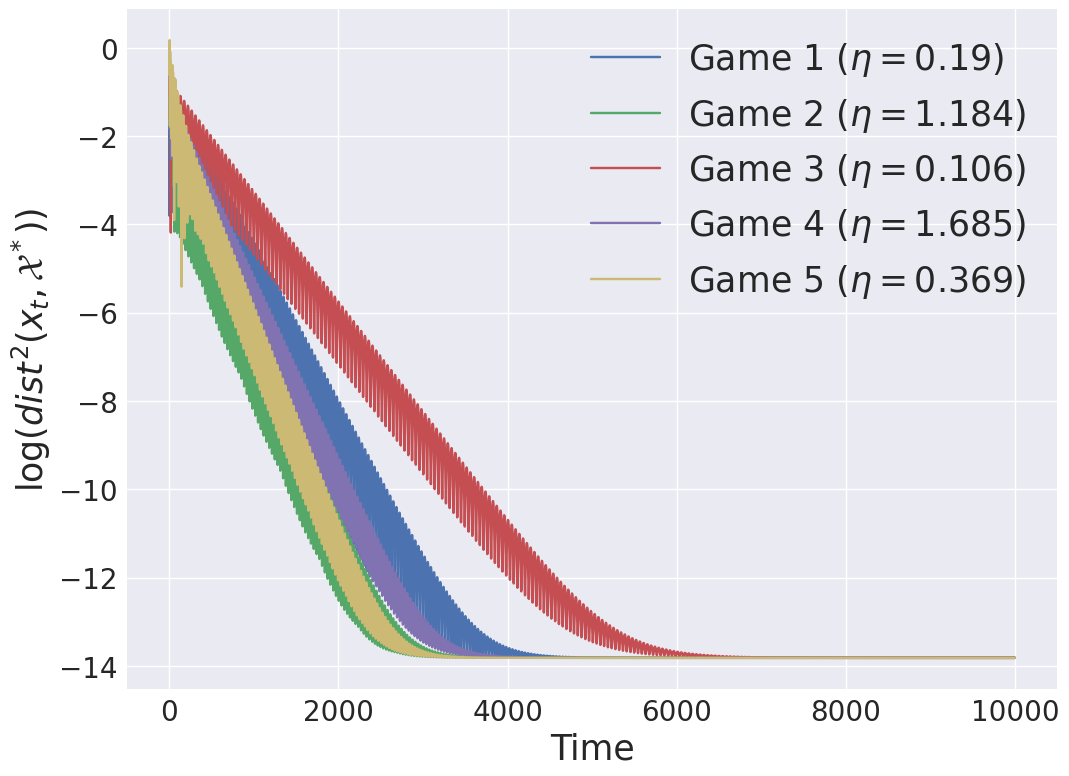}
\end{minipage}
\hfill
\begin{minipage}{.5\textwidth}
  \centering
  \includegraphics[width=.95\linewidth]{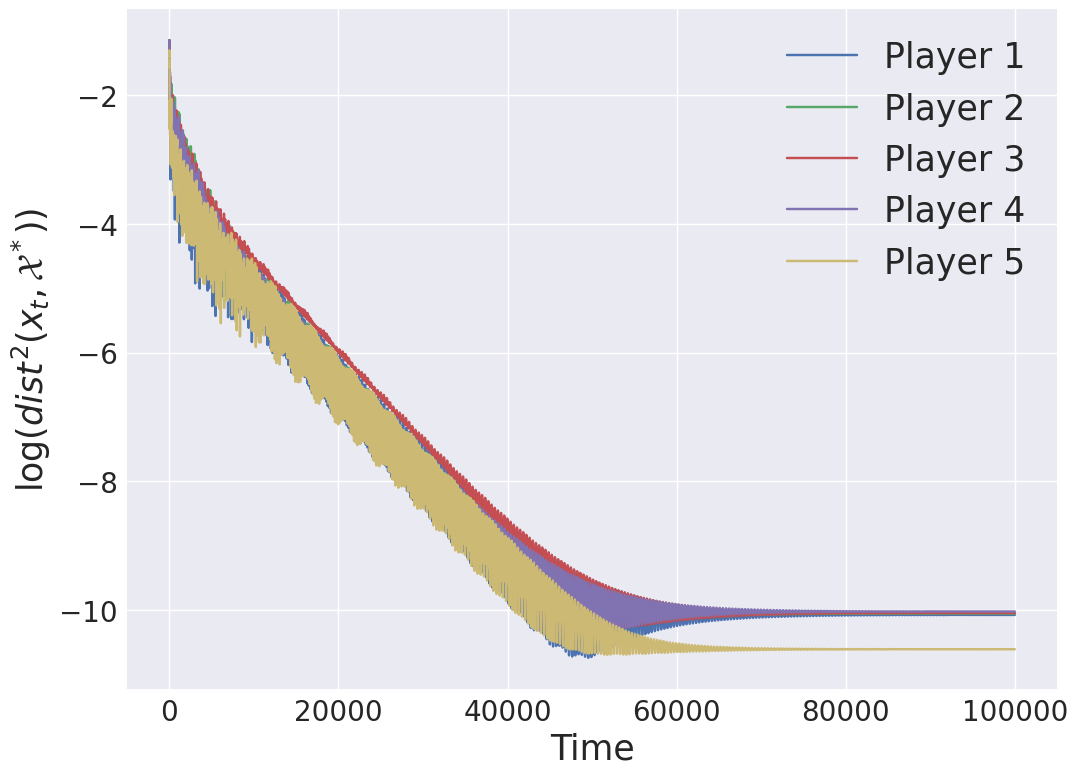}
\end{minipage}%
\caption{Last-iterate convergence of OGA to the NE in network zero-sum extensive form games. The plots shown are: (a) 3-node randomly generated network zero-sum extensive form game. (b) 4-node random network zero-sum extensive form game. Note the significantly longer time needed to achieve convergence compared to the 3-node experiment. (c) 5-node \emph{Kuhn poker} game. }
\label{fig:experiments}
\end{figure}

\section{Conclusion}
In this paper, we provide a formulation of \emph{Network Zero-Sum Extensive Form Games}, which encode the setting where multiple agents compete in pairwise games over a set of resources, defined on a graph. We analyze the convergence properties of \emph{Optimistic Gradient Ascent} in this setting, proving that OGA results in both time-average and day-to-day convergence to the set of Nash Equilibria. In order to show this, we utilize a transformation from network zero-sum extensive form games to two-player symmetric games and subsequently show the convergence results in the symmetric game setting. This work represents an initial foray into the world of online learning dynamics in network extensive form games, and we hope that this will lead to more research into the practical and theoretical applications of this class of games.


\clearpage
\subsubsection*{Acknowledgements}
This research/project is supported by the National Research Foundation Singapore and DSO National Laboratories under the AI Singapore Program (AISG Award No: AISG2-RP-2020-016), NRF2019-NRFANR095 ALIAS grant, grant PIE-SGP-AI-2020-01, NRF 2018 Fellowship NRF-NRFF2018-07 and AME Programmatic Fund (Grant No. A20H6b0151) from the Agency for Science, Technology and Research (A*STAR). Ryann Sim gratefully acknowledges support from the SUTD President's Graduate Fellowship (SUTD-PGF).

\bibliographystyle{abbrvnat}
\bibliography{ICLR_refs}
\newpage

\appendix
\section*{Appendix}
\section{Additional Related Work}
The related works presented in Section \ref{sec:intro} are primarily focused on research which is directly related to our topic of study, namely network generalizations of zero-sum extensive form games. However, there is a large body of work which studies many adjacent areas of interest. 

\paragraph{Extensive Form Games.} As elucidated in the main text, extensive form games are widely studied due to their numerous applications. The problem of computing Nash Equilibria in extensive form games is of major interest, with several works utilizing techniques such as CFR methods \cite{zinkevich2007regret} and LP methods \cite{shoham2008multiagent}.
Of particular note is the success of works utilizing CFR-based algorithms to study poker variants \cite{brown2018superhuman,bowling2015heads,moravvcik2017deepstack}.
Two-player EFGs can be written in sequence form (as described in the main text), which allows for them to be written as bilinear saddle-point problems. This connection allows for the design of algorithms that utilize first order methods to achieve approximate convergence to the Nash \cite{kroer2017smoothing,gao2019increasing}.


\paragraph{Online Learning in Games.} In this paper we study the properties of a particular online learning algorithm, Optimistic Gradient Ascent, for network zero-sum extensive form games. In normal form zero-sum games, recent results have shown that algorithms such as Gradient Descent Ascent and Multiplicative Weights Update do not converge in the last-iterate sense, even in the simplest of instances \cite{bailey2018multiplicative,vlatakis2019poincare}. In contrast, optimistic variants of these algorithms have been shown to be effective in guaranteeing last-iterate convergence \cite{daskalakis2017training,daskalakis2018last}. As described in the main text, some of these results have been extended to two-player extensive form games. Specifically, optimistic gradient descent and multiplicative weights update, as well as the versions thereof with \emph{dilated} regularizers, have been studied by \cite{lee2021last} and \cite{wei2020linear} in the two-player setting. This line of research into extensive form games is not limited to discrete time algorithms. \cite{perolat2021poincare} show that a continuous learning dynamic known as Follow the Regularized Leader (FTRL) exhibits last-iterate convergence in monotone two-player zero-sum EFGs.

\section{Omitted Proofs}\label{appsec:proofs}
\subsection{Proof of Lemma $1$}
We first describe how a behavioral plan $\sigma_i$ can be transformed to a vector $x_i(h)  \in \X_i$. For any $h \in \X_i$ we let $x_i(h):= \Pi_{(h,h')\in \P(h) \cap \X_i} \sigma_i(h ,\alpha_{h'})$ where $\alpha_{h'}$ is the action $\alpha \in \A(h)$ such that $h' = \mathrm{Next}(h,\alpha)$.
We set $x_i(h):= 1$ for all $h \in \H_i$ with $\mathrm{Prev}(h,i) = \varnothing$. Notice that by definition $U_1(\sigma) = \sum_{z \in \Z}x_1(z) \cdot p_1(z) \cdot x_2(z) = x_1^\top \cdot A_1^\Gamma \cdot x_2$ and respectively 
$U_2(\sigma) = \sum_{z \in \Z}x_2(z) \cdot p_2(z) \cdot x_1(z) = x_2^\top \cdot A_2^\Gamma \cdot x_1$.

Up next we show that all the constraints are satisfied. Consider the a state $h \in \H_i$ and the states $h' \in \mathrm{Next}(h,\alpha,i)$ for some $\alpha \in \A(h)$. Notice that for each $h'\in \mathrm{Next}(h,\alpha,i)$, $x_i(h') = x_i(h)\sigma_i(h , \alpha)$. This implies that $\sum_{\alpha \in \A(h)}x_i(\mathrm{Next}(h,\alpha,i)) = x_i(h)$ since $\sum_{\alpha \in \A(h)}\sigma_i(h,\alpha) = 1$.

Now let $h_1,h_2 \in \H_i$ where $h_1 \in \mathrm{Next}(h_1',\alpha,i)$, 
$h_2 \in \mathrm{Next}(h_2',\alpha,i)$ and $\I(h_1) = \I(h_2)$. Consider the set $\P(h_1) \cap \X_i := \{p_1,\ldots,p_k,h_1\}$ and $\P(h_2) \cap \X_i := \{q_1,\ldots,q_k,h_2\}$. Due to the perfect recall property, $m=k$ and $\I(p_\ell) = \I(q_\ell)$. Thus, $x_i(h_1) = x_i(h_2)$.

Up next we show how a vector $x_i \in \X_i$ can be converted to a behavioral plan $\sigma_i \in \Sigma_i$. Let $\sigma_i(h,\alpha):= \frac{x_i(h')}{x(h)}$ for some $h' \in \mathrm{Next}(h,\alpha,i)$. Notice that due the third constraint, $x_i(h') = x_i(h'')$ for all $h',h'' \in \mathrm{Next}(h,\alpha,i)$ and thus $\sigma(h,\alpha)$ is well-defined. For $h \in \H_i$ let $h_\alpha \in \mathrm{Next}(h,\alpha,i)$. By the third constraint we get that $\sum_{\alpha \in \A(h)}\sigma(h,\alpha) = 1$. Finally let $h_1,h_2 \in H_i$ with $\I(h_1) = \I(h_2)$ then $\sigma(h_1,\alpha)  = \frac{x_i(h'_1)}{x_i(h_1)}$ for some $h_1 \in \mathrm{Next}(h_1,\alpha,i)$ and $\sigma(h_2,\alpha)  = \frac{x_i(h'_2)}{x_i(h_2)}$ for some $h_2 \in \mathrm{Next}(h_2,\alpha,i)$.
As a result, by the second constraint we get that $\sigma(h_1,\alpha) = \sigma(h_2,\alpha)$ for all $\alpha \in \A(h)$.

\subsection{Proof of Lemma~\ref{l:equiv_graph}}
We first describe how a behavioral plan $\sigma_u \in \Sigma_u$ can be transformed to a vector $x_u(h)  \in \X_u$.
If there exists a game $\Gamma^{uv}$ with $(u,v) \in E$ such that $\mathrm{Prev}^{\Gamma^{uv}}(h,u) = \varnothing$ we set $x_u(h) \coloneqq 1$. Let us first verify that the above assignment is valid i.e. if $\mathrm{Prev}^{\Gamma^{uv}}(h,u) = \varnothing$ for some $(u,v)\in E$ then $\mathrm{Prev}^{\Gamma^{uv'}}(h,u) = \varnothing$ for all $(u,v') \in E$. Notice that $\P^{uv}(h)\cap \X_u = \{h\}$ and thus by the second constraint of Definition~\ref{d:treeplex_graph}, $\P^{uv'}(h)\cap \X_u = \{h\}$ for all $(u,v')\in E$. Now for the remaining nodes $h\in \H_u$ we select an arbitrary two-player EFG $\Gamma^{uv}$ ($(u,v) \in E$) containing the state $h$ and set $x_u(h):= \Pi_{(h,h')\in \P^{uv}(h) \cap \X_u} \sigma_u(h ,\alpha_{h'})$
where $\alpha_{h'}$ is the action $\alpha \in \A(h)$ such that $h' = \mathrm{Next}^{\Gamma^{uv}}(h,\alpha,u)$. We again need to argue that $x_u(h)$ is independent of the arbitrary choice of the game $\Gamma^{uv}$. Let assume that state $h$ also belongs in the two-player EFG $\Gamma^{uv'}$ for some $(u,v') \in E$. Again by the second constraint of Definition~\ref{d:consistency} we know that for the sets 
$\P^{uv}(h)\cap \X_u = \{p_1,\ldots,p_k,h\}$ and $\P^{uv'}(h)\cap \X_u = \{q_1,\ldots,q_m,h\}$ the following holds:
\begin{enumerate}
    \item $ k = m$.
    \item $\I(p_\ell) = \I(q_\ell)~~$ for all $\ell \in \{1,\ldots, k\}$.
    
    \item $p_{\ell + 1} \in \mathrm{Next}^{\Gamma_{uv}}(p_\ell,\alpha,u) $ and $q_{\ell + 1} \in \mathrm{Next}^{\Gamma_{uv'}}(q_\ell,\alpha,u) $ for some action $\alpha \in \A(p_\ell)$. 
\end{enumerate}
Since $\I(p_\ell) = \I(q_\ell)$ means that $\sigma_u(p_\ell,\alpha) = \sigma_u(q_\ell,\alpha)$ for all $\alpha \in \A(p_\ell)=\A(q_\ell)$, we get that \[\Pi_{(h,h')\in \P^{uv}(h) \cap \X_u} \sigma_u(h ,\alpha_{h'}) = \Pi_{(h,h')\in \P^{uv'}(h) \cap \X_u} \sigma_u(h ,\alpha_{h'})\]

Conversely, we show how a strategy in sequence form $x_u \in \X_u$ can be converted to behavioral plan $\sigma_u \in \Sigma_u$. Given a state $h \in \H_u$ we consider an edge $(u,v)\in E$ such that $\Gamma^{uv}$ containing $h \in \H_u$ and set 
\[ \sigma(h,\alpha) := \frac{x_u(h')}{x_u(h)}~~~\text{for some }h' \in \mathrm{Next}^{\Gamma^{uv}}(h,\alpha, u)\]
We first need to show that this is a valid probability distribution, $\sum_{h \in \A(h)}\sigma_u(h,\alpha) = 1$. Since $x_u \in \X_u^{\Gamma^{uv}}$, the second constraint of Definition~\ref{d:treeplex} ensures that 
\[\sum_{\alpha \in \A(h)}x_u (\mathrm{Next}(h,\alpha,u)) = x_u(h)\]
The latter implies that $\sum_{h \in \A(h)}\sigma_u(h,\alpha) = 1$.

We now need to establish that $\sigma(h,\cdot)$ is independent of the selection of the edge $(u,v) \in E$. Let $h$ be a state of the game $\Gamma^{uv'}$ for some $(u,v') \in E$. By constraint~$2$ of Definition~\ref{d:treeplex_graph}, for any $h' \in \mathrm{Next}^{\Gamma^{uv}}(h,\alpha,u)$ and $h'' \in \mathrm{Next}^{\Gamma^{uv'}}(h,\alpha,u)$ we have that $x_u(h') =x_u(h'')$ and thus $\sigma(h,\alpha) = \frac{x_u(h'')}{x_u(h)}$.

Finally we need to argue that if $h_1,h_2 \in \H_u$ with $\I(h_1) = \I(h_2)$, then $\sigma(h_1,\alpha) = \sigma(h_2,\alpha)$ for all $\alpha \in \A(h_1) = \A(h_2)$. Let $\sigma(h_1,\alpha)= \frac{x_u(h_1')}{x_u(h)}$ for some $h_1' \in \mathrm{Next}(h_1,\alpha,u)$ and $\sigma(h_1,\alpha)= \frac{x_u(h_1')}{x_u(h)}$ for some $h_2' \in \mathrm{Next}(h_2,\alpha,u)$. Then by Constraint~$3$ of Definition~\ref{d:consistency} we get that $x(h_1') = x(h_2')$ and thus $\sigma(h_1,\alpha) = \sigma(h_2,\alpha)$.

\subsection{Proof of Lemma \ref{l:OGD_equivalence}}
First, since Equations (\ref{eq:OGD1}), (\ref{eq:ODG2}) are defined on the product of treeplexes $\X$, let us decompose the equations from the perspective of an arbitrary agent $u$. Specifically, for some $x_u^t$, $u\in\{1,\dots,n\}$ it holds that the inner product $\langle x, R\cdot x^{t-1} \rangle$, $x\in \X$ can be decomposed into inner products of the form $\langle x, R\cdot x_u^{t-1} \rangle$, where $x$ is now in the individual treeplex $\X_u$. Moreover, by the definition of matrix $R$, we can substitute the following:
\[
R_{(u:h_1), (v:h_2)} = - [A^{uv}]_{h_1h_2}
\]
for all $(u,v)\in E$ and $0$ otherwise. Effectively, from the perspective of player $u$, the product of $R$ and $x^{t}$ gives us $\sum_{(u,v)\in E} A^{u,v}\cdot x_v^{t}$. This gives us the following:
\begin{align}
    x_u^t &= \text{argmin}_{x \in \X_u}\left\{ \eta \left\langle x, -\sum_{v:(u,v)\in E} A^{uv}\cdot x_v^{t-1} \right\rangle + D_\psi\left(x,\hat{x}_u^t\right)\right\}\\
    \hat{x}_u^{t+1} &= \text{argmin}_{x \in \X_u}\left\{ \eta \left\langle x,-\sum_{v:(u,v)\in E} A^{uv}\cdot x_v^{t} \right\rangle + D_\psi\left(x,\hat{x}_u^t\right)\right\}
\end{align}
Finally, we can just take the negative of the terms inside the braces to obtain Equations (\ref{eq:ODA1}), (\ref{eq:ODA2}). Hence, for every strategy vector $x$ updated using Equations (\ref{eq:OGD1}),(\ref{eq:ODG2}), the constituent strategy vectors for each player $u$ are exactly the same as Equations (\ref{eq:ODA1}), (\ref{eq:ODA2}). Thus if the initial conditions $x^0$ are the same, for all time $t$ the collection of strategy vectors $\{x^1,\dots,x^T\}$ are the same between both formulations.

\subsection{Proof of Lemma~\ref{l:sne}}
Let $\hat{x}:= (\hat{x}_1,\ldots,\hat{x}_n)$ be an $\epsilon$-symmetric Nash Equilibrium. Now consider the vector $x' \in \mathcal{X}$ defined as follows: $x_{u'} = \hat{x}_{u'}$ for all $u' \neq u$ and $x'_u$ is an arbitrary vector in $\X_u$. By the definition of the $\epsilon$-symmetric Nash Equilibrium we get that
\[\hat{x}^\top \cdot R \cdot \hat{x} - (x')^\top \cdot R \cdot \hat{x} \leq \epsilon\]
Notice that $(x')^\top \cdot R \cdot \hat{x} = -\sum_{v:(u,v)\in E}(x'_u)^\top \cdot A^{uv}\cdot \hat{x}_v - \sum_{u' \neq u}\sum_{v:(u',v)\in E}\hat{x}_{u'}^\top \cdot A^{u'v}\cdot \hat{x}_v$.
Thus we get
\[-\sum_{v:(u,v)\in E}(x'_u)^\top \cdot A^{uv}\cdot \hat{x}_v + \sum_{v:(u,v)\in E}(\hat{x}_u)^\top \cdot A^{uv}\cdot \hat{x}_v
\geq -\epsilon~~~~\text{for all }x_{u} \in \X_u
\]
Theorem~\ref{t:average} follows by repeating the same argument for all agents $u \in V$.

\subsection{Proof of Lemma~\ref{l:zero}}\label{s:zero}
We first prove a simpler version of Lemma~\ref{l:zero} where $x = y \in \mathcal{X}$. 
\begin{lemma}\label{l:zero_simple}
$x^\top \cdot R \cdot x = 0$ for all $x \in \X$.
\end{lemma}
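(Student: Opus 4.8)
The plan is to reduce the claim to the defining zero-sum property of the network EFG (Definition~\ref{d:zero_sum}), which asserts that $U_u(\sigma) = 0$ for every agent $u$ whenever $\sigma = (\sigma_1,\dots,\sigma_n)$ is a profile of \emph{pure} behavioral plans. First I would unpack the quadratic form: by the definition of $R$ in terms of the block matrices $A^{uv}$,
\[
x^\top \cdot R \cdot x = -\sum_{u \in V}\sum_{v:(u,v)\in E} x_u^\top \cdot A^{uv}\cdot x_v = -\sum_{u\in V} x_u^\top \Big(\sum_{v:(u,v)\in E}A^{uv}\Big)x_v,
\]
so it suffices to show that $\sum_{u\in V} x_u^\top \big(\sum_{v:(u,v)\in E}A^{uv}\big)x_v = 0$ for all $x = (x_1,\dots,x_n)\in \X$. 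By Lemma~\ref{l:equiv_graph}, this sum equals $\sum_{u\in V}U_u(\sigma_x)$, where $\sigma_x = (\sigma_{x_1},\dots,\sigma_{x_n})$ is the behavioral-plan profile corresponding to $x$. Hence the lemma is equivalent to the statement that $\sum_{u\in V}U_u(\sigma) = 0$ for \emph{every} (not necessarily pure) behavioral profile $\sigma$.

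Next I would extend the zero-sum hypothesis from pure profiles to all profiles by a multilinearity argument. The key observation is that for fixed behavioral plans of all other players, $U_u(\sigma_u,\sigma_{-u})$ is multilinear in the collection of local distributions $\{\sigma_{u'}(h,\cdot)\}$ across all players and all information sets --- this is immediate from the product-of-probabilities formula in the definition of $U_u$. Therefore the total payoff $W(\sigma) := \sum_{u\in V}U_u(\sigma)$ is a multilinear function of all the $\sigma_{u'}(h,\cdot)$'s. A multilinear function on a product of simplices that vanishes on every vertex of that product (i.e., on every pure profile) must vanish identically, since its value at an arbitrary point is the convex combination, with weights given by products of the mixing coefficients, of its values at the vertices. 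Applying this with the hypothesis $W(\sigma) = 0$ for all pure $\sigma$ gives $W(\sigma)=0$ for all $\sigma\in\Sigma_1\times\dots\times\Sigma_n$, and then Lemma~\ref{l:equiv_graph} transfers this back to $x^\top R x = 0$ for all $x\in\X$.

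The main obstacle, and the step requiring the most care, is the bookkeeping in the multilinearity argument: one must be precise about which ``coordinates'' $W$ is multilinear in. Because of the information-set constraints, a single local distribution $\sigma_u(h,\cdot) = \sigma_u(\I(h),\cdot)$ is shared across all states $h$ in an information set and across the different edge-games $\Gamma^{uv}$, so the natural coordinates are indexed by (player, information set) pairs rather than by individual states. One has to check that $W$ is affine in each such shared distribution when the others are held fixed (it is, since each terminal-reaching probability $\Pi_{(h,h')\in\P(z)}\sigma_{\mathrm{Label}(h)}(h,\alpha_{h'})$ uses any given information set's distribution at most once along a root-to-leaf path, by perfect recall/consistency), and then that the vertices of the resulting product of simplices are exactly the pure profiles of Definition~\ref{d:zero_sum}. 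This is essentially the content of Lemma~B.3 in \cite{cai2011minmax}, adapted from the normal-form to the sequence-form setting, and the remaining details follow their argument.
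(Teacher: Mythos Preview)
Your proposal is correct and follows essentially the same route as the paper: unpack $x^\top R x$ via the block structure of $R$, apply Lemma~\ref{l:equiv_graph} to identify the resulting sum with $\sum_{u\in V}U_u(\sigma_x)$, and then invoke the zero-sum property of Definition~\ref{d:zero_sum}. In fact you are more careful than the paper on the last step: the paper's proof simply applies Definition~\ref{d:zero_sum} directly to the (generally mixed) profile $\sigma^x$ without comment, whereas you explicitly supply the multilinearity argument needed to extend the hypothesis from pure to arbitrary behavioral profiles.
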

\begin{proof}
Consider a vector $x \in \X$. To simplify notation let $x \coloneqq (x_1,\ldots,x_n)$ where each vector $x_u \in \X_u$. Let $\sigma_u^x \in \Sigma$ denote the behavioral plan for agent $u$ constructed from the vector $x_u \in \X_u$ as described in Lemma~\ref{l:equiv_graph}. By the zero-sum property of Definition~\ref{d:zero_sum}, we get that 
\[\sum_{u \in V}\sum_{v:(u,v)\in E} U^{uv}_u(\sigma^x_u,\sigma^x_v) = 0\]
By Lemma~\ref{l:equiv_graph} we get that $U^u(\sigma^x) = \sum_{v:(u,v)\in E}U_u^{uv}(\sigma^x_u,\sigma^x _v) = \sum_{v:(u,v)\in E} x_u^\top \cdot A^{uv}\cdot x_v$ meaning that 
\[\sum_{u \in V}\sum_{v:(u,v)\in E} x_u^\top \cdot A^{uv} \cdot x_v = 0\]
As a result, we get that $x^\top \cdot R \cdot x = 0$.
\end{proof}
We will also utilize the following result:
\begin{lemma}\label{l:prev}
Consider a node $u\in V$ and its neighbors $\N_u = \{v_1, v_2, \ldots, v_k\}$. Let $x_u \in \X_u$ represent a mixed strategy for
$u$ and $x_v$ a mixed strategy of the neighbor $v\in \N_u$. For any fixed collection $\{x_v\}_{v \in \N_u}$
the quantity 
\[ \sum_{v \in \N_u} x_v^\top \cdot A^{vu}\cdot x_u + 
\sum_{v \in \N_u} x_u^\top \cdot A^{uv}\cdot x_v
\]
remains constant over the range of $x_u$.
\end{lemma}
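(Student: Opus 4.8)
\textbf{Proof proposal for Lemma~\ref{l:prev}.} The plan is to derive this from the already-proved identity $x^\top\cdot R\cdot x = 0$ on $\X$ (Lemma~\ref{l:zero_simple}) by isolating how that quadratic form depends on the single coordinate block $x_u$. First I would restate Lemma~\ref{l:zero_simple} in payoff form: since $R_{(u:h_1),(v:h_2)} = -[A^{uv}]_{h_1h_2}$ for $(u,v)\in E$ and $0$ otherwise (in particular $R$ has no self-loop block because $\G$ has none), for every profile $x=(x_1,\ldots,x_n)\in\X$ we have
\[
x^\top\cdot R\cdot x \;=\; -\sum_{w\in V}\;\sum_{v:(w,v)\in E} x_w^\top\cdot A^{wv}\cdot x_v \;=\; 0,
\]
that is, $\sum_{w\in V} U_w(x)=0$ for all $x\in\X$.

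Next, with the neighbors' strategies $\{x_v\}_{v\in\N_u}$ fixed as in the statement, I would also fix an arbitrary choice of $x_w\in\X_w$ for each node $w\notin\N_u\cup\{u\}$; this is harmless because the quantity in the lemma never references those blocks, so the conclusion cannot depend on the choice. For a free $x_u\in\X_u$, assemble the full profile and split $\sum_{w}U_w(x)$ by whether a bilinear summand $x_w^\top A^{wv}x_v$ contains the block $x_u$. The summands with $w=u$ give $\sum_{v\in\N_u}x_u^\top A^{uv}x_v$; the summands with $v=u$ (necessarily $w\ne u$, since there is no self-loop) give $\sum_{v\in\N_u}x_v^\top A^{vu}x_u$; all remaining summands involve only blocks $x_w$ with $w\ne u$ and hence equal some constant $c_0$ independent of $x_u$. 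Therefore
\[
0 \;=\; \sum_{w\in V}U_w(x) \;=\; \Big(\sum_{v\in\N_u}x_v^\top A^{vu}x_u + \sum_{v\in\N_u}x_u^\top A^{uv}x_v\Big) + c_0 ,
\]
so the parenthesized quantity equals $-c_0$ for every $x_u\in\X_u$, which is exactly the claim. (Equivalently one can apply Lemma~\ref{l:zero_simple} to two profiles that agree outside the $u$-block and subtract: the $u$-free terms cancel, forcing the difference of the two instances of the displayed quantity to vanish.)

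I do not anticipate a real obstacle; the proof is essentially careful bookkeeping. Two points deserve attention, both flagged above. First, in $\sum_w\sum_{v:(w,v)\in E}$ each undirected edge $\{u,v\}$ contributes the blocks $A^{uv}$ and $A^{vu}$ exactly once each, so $x_u$ occurs once as a left factor (in $u$'s own games) and once as a right factor (in each neighbor's game against $u$); these are precisely the two sums appearing in the lemma. Second, the absence of self-loops means there is no $x_u^\top R_{uu} x_u$ term, so the dependence on $x_u$ is affine in $x_u$ against the fixed data, which is what makes the assertion ``constant in $x_u$'' the genuine content of the computation. Once this is in place the conclusion is immediate, and the lemma then plugs into the proof of the full restricted zero-sum identity $x^\top R y + y^\top R x = 0$ (Lemma~\ref{l:zero}) by interpolating between profiles $x$ and $y$ one coordinate block at a time.
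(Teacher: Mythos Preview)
Your proposal is correct and follows essentially the same approach as the paper: both arguments fix all blocks except $x_u$, invoke Lemma~\ref{l:zero_simple} (equivalently $\sum_w U_w(x)=0$), and observe that the $x_u$-dependent part of the quadratic form is forced to be constant. The paper presents this as the subtraction variant you mention parenthetically (apply Lemma~\ref{l:zero_simple} to two profiles differing only in the $u$-block and cancel), whereas your primary phrasing isolates the constant $c_0$ first, but these are the same computation.
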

\begin{proof}
For any vector $x:= (x_1,\ldots,x_n) \in \X$, consider the vector $x' \in \X$ such that $x'_v = x_v$ for all $v \neq u$. By Lemma~\ref{l:zero_simple} we get that
\[x^\top \cdot R \cdot x - (x')^\top \cdot R \cdot x' = 0\]
The latter directly implies that
\[ \sum_{v \in \N_u} x_v^\top \cdot A^{vu}\cdot x_u + 
\sum_{v \in \N_u} x_u^\top \cdot A^{uv}\cdot x_v = \sum_{v \in \N_u} x^\top_v \cdot A^{vu}\cdot x'_u + 
\sum_{v \in \N_u} (x'_u)^\top \cdot A^{uv}\cdot x_v \]
for all $x_u,x'_u \in \X_u$.
\end{proof}

\begin{proof}[Proof of Lemma~\ref{l:zero}]
Consider vectors $x,y \in \X$. Consider the vector $y' \in \X$ such that $y_v = y'_v$ for all $v \neq u$. We first show that
\[x^\top \cdot R \cdot y + y^\top \cdot R \cdot x = x^\top \cdot R \cdot y' + (y')^\top \cdot R \cdot x\]
Let $\N_u$ denote the neighbors of agent $u \in V$,
\begin{eqnarray*}
&& x^\top \cdot R \cdot y + y^\top \cdot R \cdot x - x^\top \cdot R \cdot y' - (y')^\top \cdot R \cdot x \\
&=& \sum_{v \in \N_u} x_v^\top \cdot A^{vu} \cdot y_u + \sum_{v \in \N_u} x_u^\top \cdot A^{uv} \cdot y_v + \sum_{v \in \N_u} y_v^\top \cdot A^{vu} \cdot x_u + \sum_{v \in \N_u} y_u^\top \cdot A^{uv} \cdot x_v\\
&\quad\quad~&- \sum_{v \in \N_u} x_v^\top \cdot A^{vu} \cdot y'_u - \sum_{v \in \N_u} x_u^\top \cdot A^{uv} \cdot y'_v - \sum_{v \in \N_u} (y'_v)^\top \cdot A^{vu} \cdot x_u - \sum_{v \in \N_u} (y'_u)^\top \cdot A^{uv} \cdot x_v\\
&=& \sum_{v \in \N_u} x_v^\top \cdot A^{vu} \cdot y_u - \sum_{v \in \N_u} x_u^\top \cdot A^{uv} \cdot y_v - \sum_{v \in \N_u} x_v^\top \cdot A^{vu} \cdot y'_u - \sum_{v \in \N_u} (y'_u)^\top \cdot A^{uv} \cdot x_v\\
&=& 0
\end{eqnarray*}
where the last equality follows by Lemma~\ref{l:prev}. By gradually transforming vector $y$ to vector $x$ we get that $x^\top \cdot R \cdot y + y^\top \cdot R \cdot x = 2 \cdot x^\top \cdot R \cdot x = 0$.
\end{proof}

\subsection{Proof of Lemma~\ref{l:average}}

Applying Lemma~$1$ of \cite{rakhlin2013online} to our setting, we obtain:

\begin{lemma}[\cite{rakhlin2013online}]\label{lemma:R03}
Let $\{x^t,\hat{x}^t\}$ be the sequences produced by Equations~(\ref{eq:OGD1}),(\ref{eq:ODG2}). Then,
\begin{eqnarray*}
\sum_{t=1}^T (x^t)^\top \cdot R \cdot x^t - \min_{x \in \mathcal{X}}\sum_{t=1}^T x^\top \cdot R \cdot x^t &\leq &\frac{\mathcal{D}^2}{\eta} + \frac{1}{2}\sum_{t=1}^T \| R \cdot x^t - R\cdot x^{t-1} \|^2 \\
& &+ 
\frac{1}{2}\sum_{t=1}^T \| x^t - \hat{x}^t \|^2
-
\frac{1}{2\eta} \sum_{t=1}^T \left[ 
\|\hat{x}^t -x^t\|^2 + \|\hat{x}^{t} -x^{t+1}\|^2\right]
\end{eqnarray*}
where $\mathcal{D}$ is the diameter of the treeplex polytope $\mathcal{X}$.
\end{lemma}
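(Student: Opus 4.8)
The plan is to recognize this inequality as Lemma~1 of \cite{rakhlin2013online} (the optimistic mirror descent regret bound) instantiated with the Euclidean regularizer $\psi(a) = \tfrac12\|a\|^2$ and the loss sequence dictated by the $(R,R)$ game. First I would fix the bookkeeping: reading Equations~(\ref{eq:OGD1}),(\ref{eq:ODG2}), the anchor $\hat x^t$ produces the optimistic iterate $x^t$ using the \emph{predicted} gradient $M^t := R\cdot x^{t-1}$, while the next anchor $\hat x^{t+1}$ is produced from the same $\hat x^t$ using the \emph{realized} gradient $g^t := R\cdot x^t$. Setting $x^\ast := \argmin_{x\in\X}\sum_{t=1}^T x^\top R x^t$, the left-hand side is exactly the regret $\sum_{t=1}^T \langle x^t - x^\ast, g^t\rangle$ of this optimistic scheme, so it suffices to reproduce the standard analysis.

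The core tool I would establish is the Bregman proximal (generalized-projection) inequality: for any $y\in\X$, any vector $a$, and the minimizer $z = \argmin_{x\in\X}\{\langle x, a\rangle + D_\psi(x,y)\}$, the first-order optimality condition together with the three-point identity for Bregman divergences gives, for every $u\in\X$,
\[\langle z - u,\, a\rangle \le D_\psi(u,y) - D_\psi(u,z) - D_\psi(z,y).\]
I would then apply this inequality twice per round: once to the anchor update~(\ref{eq:ODG2}) with $a = \eta g^t$, $y = \hat x^t$, $z = \hat x^{t+1}$ and comparator $u = x^\ast$; and once to the optimistic update~(\ref{eq:OGD1}) with $a = \eta M^t$, $y = \hat x^t$, $z = x^t$ and comparator $u = \hat x^{t+1}$.

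Combining the two inequalities through the decomposition
\[\eta\langle x^t - x^\ast,\, g^t\rangle = \eta\langle \hat x^{t+1} - x^\ast,\, g^t\rangle + \eta\langle x^t - \hat x^{t+1},\, g^t - M^t\rangle + \eta\langle x^t - \hat x^{t+1},\, M^t\rangle,\]
the shared divergence $D_\psi(\hat x^{t+1}, \hat x^t)$ cancels, leaving a telescoping term $D_\psi(x^\ast,\hat x^t) - D_\psi(x^\ast,\hat x^{t+1})$, the cross term $\eta\langle x^t - \hat x^{t+1}, g^t - M^t\rangle$, and the negative stability terms $-D_\psi(\hat x^{t+1},x^t) - D_\psi(x^t,\hat x^t)$. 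Summing over $t=1,\dots,T$, the telescope collapses to $D_\psi(x^\ast,\hat x^1) \le \tfrac12\mathcal{D}^2$ (using that $\mathcal{D}$ bounds the diameter of $\X$), and dividing by $\eta$ produces the $\mathcal{D}^2/\eta$ term. Substituting $D_\psi(x,y) = \tfrac12\|x-y\|^2$ and $g^t - M^t = R x^t - R x^{t-1}$, and bounding the cross term by Young's inequality, converts the remaining quantities into the squared-norm expressions $\tfrac12\|Rx^t - Rx^{t-1}\|^2$, $\tfrac12\|x^t - \hat x^t\|^2$ and the negative stability sums appearing in the statement.

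I expect the main obstacle to be purely the index bookkeeping: one must carefully track the three distinct points $x^t$, $\hat x^t$, $\hat x^{t+1}$ across the two proximal inequalities so that the $D_\psi(\hat x^{t+1},\hat x^t)$ cancellation is exact and the surviving negative divergences line up (after Young's inequality) with the stated terms $\|\hat x^t - x^t\|^2$ and $\|\hat x^t - x^{t+1}\|^2$; the index shifts between the optimistic iterate and the anchor are the most error-prone part. Notably, no structural property of $R$ (such as the zero-sum property of Lemma~\ref{l:zero}) is needed here, since this is a generic optimistic-OMD regret guarantee valid for any fixed loss matrix; those properties enter only when this bound is combined to derive Lemma~\ref{l:average}.
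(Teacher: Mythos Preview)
Your proposal is correct and matches the paper's approach: the paper does not give its own proof of this lemma but simply states it as an instantiation of Lemma~1 of \cite{rakhlin2013online}, and what you have sketched is precisely the standard Rakhlin--Sridharan argument (two applications of the proximal inequality, telescoping of $D_\psi(x^\ast,\hat x^t)$, and Young's inequality on the cross term). Your remark that no structural property of $R$ is needed here, and that the only delicate point is the index bookkeeping between $x^t$, $\hat x^t$, $\hat x^{t+1}$, is exactly right.
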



Setting $\eta = \mathrm{min}\{1/(8\cdot\|R\|^2),1\}$ in Lemma~\ref{lemma:R03} we get that 
\begin{eqnarray*}
    && \sum_{t=1}^T (x^t)^\top \cdot R \cdot x^t - \min_{x \in \mathcal{X}}\sum_{t=1}^T x^{\top} \cdot R \cdot x^t \\
    &\leq& \frac{\mathcal{D}^2}{\eta} + \frac{1}{2}\sum_{t=1}^T \| R \cdot x^t - R \cdot x^{t-1} \|^2 -
    \frac{1}{4\eta} \cdot \sum_{t=1}^T \left[ 
    \|\hat{x}^t -x^t\|^2 + \|\hat{x}^{t} -x^{t+1}\|^2\right]\\
    &\leq& \frac{\mathcal{D}^2}{\eta} + \frac{1}{2}\sum_{t=1}^T \| R \cdot x^t - R \cdot x^{t-1} \|^2
    -
    2 \|R\|^2 \cdot \sum_{t=1}^T \left[ 
    \|\hat{x}^t -x^t\|^2 + \|\hat{x}^{t} -x^{t+1}\|^2\right]\\
    &\leq& \frac{\mathcal{D}^2}{\eta} + \frac{\|R\|^2}{2}\sum_{t=1}^T \|x^t -  x^{t-1} \|^2 - \norm{R}^2\cdot\sum_{t=1}^T \|x^t -  x^{t-1} \|^2 \\
    &\leq& \frac{\mathcal{D}^2}{\eta}
    \end{eqnarray*}
    Setting $\hat{x} = \sum_{s=1}^T x^s /T$ and using the fact that $(x^t)^\top \cdot R \cdot x^t = 0$ we get
    $\min_{x\in \mathcal{X}} x^\top \cdot R \cdot \hat{x} \geq -\frac{\mathcal{D}^2\|R\|^2}{T}$.

\subsection{Proof of Theorem~\ref{t:average}}\label{s:proof_theorem1}
Let $\hat{x}$ the time-average vector produced by Equations~(\ref{eq:OGD1}),(\ref{eq:ODG2}). By Lemma~\ref{l:average}, we have
\[\text{min}_{x \in \X} x^\top \cdot R \cdot \hat{x} \geq -\Theta\left(\frac{\mathcal{D}^2\|R\|^2}{T}\right) \]
Using the fact that $\hat{x}^\top \cdot R \cdot \hat{x} = 0$ we get that
\[\hat{x}^\top \cdot R \cdot \hat{x} \leq \text{min}_{x \in \X} x^\top \cdot R \cdot \hat{x} + \Theta\left(\frac{\mathcal{D}^2\|R\|^2}{T}\right) \]
meaning that $(\hat{x},\hat{x})$ is a $\Theta\left(\frac{\mathcal{D}^2\|R\|^2}{T}\right)$-approximate symmetric Nash Equilibrium of the symmetric game $(R,R)$. By Lemma~\ref{l:sne} we get that $\hat{x}$ is a $\Theta\left(\frac{\mathcal{D}^2\|R\|^2}{T}\right)$-approximate NE for the original network zero-sum EFG. 

\subsection{Proof of Theorem \ref{t:3}}\label{s:proof_theorem2}
First of all, in the proof of this theorem and in the lemmas presented within the proof, let $\X^\ast := \{x^\ast \in  \X:~\min_{x \in \X}x^\top \cdot R \cdot x^\ast = 0 \}$, which describes the set of symmetric Nash Equilibria.

In order to establish Theorem~\ref{t:3}, we follow the approach and notation of \cite{wei2020linear}, with minor modifications along the way to apply the steps to our setting. Applying Lemma~$1$ of \cite{wei2020linear}) to the Equations~(\ref{eq:OGD1}),~(\ref{eq:ODG2}) we get the following lemma:
\begin{lemma}[\cite{wei2020linear}]\label{l:rakhlin1}
    Let $\{x^t,\hat{x}^t\}_{t\geq 1}$ be the sequence of strategy vectors produced by Equations~(\ref{eq:OGD1}),~(\ref{eq:ODG2}) for $\eta \leq 1/8\norm{R}^2$. Then, 
    \[
    \eta (R\cdot x^t)^\top (x^t - x) \leq D_{\psi}(x, \hat{x}^t) - D_{\psi}(x, \hat{x}^{t+1}) - D_{\psi}(\hat{x}^{t+1}, x^t) - \frac{15}{16}D_{\psi}(x^t, \hat{x}^t) + \frac{1}{16}D_{\psi}(\hat{x}^t, x^{t-1})
    \]
\end{lemma}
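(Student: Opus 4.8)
The plan is to mimic the standard optimistic‑mirror‑descent analysis of \cite{wei2020linear}, specialized to the fact that $\psi(a)=\tfrac12\|a\|^2$ so that $D_\psi(x,y)=\tfrac12\|x-y\|^2$ and $\nabla\psi$ is the identity map. First I would write down the first‑order optimality (variational inequality) conditions for the two update steps in Equations~(\ref{eq:OGD1}),~(\ref{eq:ODG2}). Since $x^t$ minimizes $\eta\langle x, R x^{t-1}\rangle + D_\psi(x,\hat x^t)$ over the convex set $\X$, for every $x\in\X$ one has $\langle \eta R x^{t-1} + x^t - \hat x^t,\ x - x^t\rangle \ge 0$; likewise, since $\hat x^{t+1}$ minimizes $\eta\langle x, R x^t\rangle + D_\psi(x,\hat x^t)$, for every $x\in\X$ one has $\langle \eta R x^t + \hat x^{t+1} - \hat x^t,\ x - \hat x^{t+1}\rangle \ge 0$.

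Next I would instantiate the first inequality at $x=\hat x^{t+1}$, keep the second at a generic $x\in\X$, and add the two. After rearranging, this produces an upper bound on $\eta\langle R x^t,\ x^t - x\rangle$ consisting of (i) inner products of the form $\langle x^t - \hat x^t,\ \cdot\rangle$ and $\langle \hat x^{t+1} - \hat x^t,\ \cdot\rangle$ coming from the $\nabla\psi$ differences, and (ii) the ``optimism error'' term $\eta\langle R x^t - R x^{t-1},\ x^t - \hat x^{t+1}\rangle$. I would then convert each term in (i) into differences of Bregman divergences using the three‑point identity $\langle a-b,\ c-a\rangle = D_\psi(c,b) - D_\psi(c,a) - D_\psi(a,b)$ (valid here because $\psi$ is Euclidean). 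This yields the telescoping pair $D_\psi(x,\hat x^t) - D_\psi(x,\hat x^{t+1})$, the negative term $-D_\psi(\hat x^{t+1}, x^t)$, and additional negative divergence terms, in particular a full copy of $-\tfrac12\|x^t-\hat x^t\|^2$ and $-\tfrac12\|\hat x^{t+1}-\hat x^t\|^2$, which provide the ``budget'' for absorbing the optimism error.

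The crux is bounding the optimism error $\eta\langle R x^t - R x^{t-1},\ x^t - \hat x^{t+1}\rangle$. I would apply Cauchy–Schwarz together with $\|R x^t - R x^{t-1}\| \le \|R\|\,\|x^t - x^{t-1}\|$, split $\|x^t-x^{t-1}\| \le \|x^t - \hat x^t\| + \|\hat x^t - x^{t-1}\|$, and then use Young's inequality with carefully tuned weights together with the step‑size condition $\eta \le 1/(8\|R\|^2)$, so that the positive contributions are charged against the available negative terms $-D_\psi(\hat x^{t+1},x^t)$ and $-\tfrac12\|x^t-\hat x^t\|^2$, leaving exactly the coefficient $\tfrac{15}{16}$ on the (negative) $D_\psi(x^t,\hat x^t)$ and the coefficient $\tfrac{1}{16}$ on the (positive) $D_\psi(\hat x^t, x^{t-1})$ — the latter being precisely the quantity that telescopes against $-\tfrac{15}{16}D_\psi(x^t,\hat x^t)$ at the previous step when the lemma is later summed. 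The main obstacle is exactly this constant‑tracking: getting $\tfrac{15}{16}$ and $\tfrac{1}{16}$ to come out requires being deliberate about the Young's‑inequality weights and about how the negative Bregman terms are partitioned. Since Equations~(\ref{eq:OGD1}),~(\ref{eq:ODG2}) have the same bilinear structure as the two‑player updates in \cite{wei2020linear} — with the payoff matrix $A$ replaced by $R$ and the strategy polytope replaced by $\X$ — and since the only facts used are convexity of $\X$, the identity $\nabla\psi=\mathrm{id}$, and the operator‑norm bound on $R$, the argument of Lemma~1 in \cite{wei2020linear} transfers essentially verbatim once the optimality conditions above are written on $\X$.
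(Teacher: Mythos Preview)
Your proposal is correct and matches the paper's treatment: the paper does not give an independent proof of this lemma but simply states it as the specialization of Lemma~1 of \cite{wei2020linear} to the updates~(\ref{eq:OGD1})--(\ref{eq:ODG2}), and your sketch reproduces exactly that argument (optimality conditions for the two prox steps, three-point identity, Cauchy--Schwarz plus Young on the optimism error with the step-size bound). Nothing further is needed.
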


Since for OGD we have that $D_\psi(x) = \frac{1}{2}\|x\|^2$, we can write the above inequality as:
\begin{equation}\label{eq:1}
    2\eta (R\cdot x^t)^\top (x^t - x) \leq \|\hat{x}^t - x\|^2 - \|\hat{x}^{t+1} - x\|^2 - \|\hat{x}^{t+1} - x^t\|^2 - \frac{15}{16}\|x^t - \hat{x}^t\|^2 + \frac{1}{16}\|\hat{x}^t - x^{t-1}\|^2
\end{equation}

To simplify notation let $x^\ast:= \Pi_{\X^*}(\hat{x}^t) \in \X^*$ meaning that $x^\ast$ is a symmetric Nash Equilibrium for the symmetric game $(R,R)$ and let us apply Equation~\ref{eq:1} with $x = x^\ast$. Now the LHS of Equation~\ref{eq:1} takes the following form
\begin{eqnarray}
2\eta (x^t)^\top \cdot R^T \cdot (x^t - x^\ast) &=&
-2\eta (x^t)^\top \cdot R^T \cdot x^\ast~~~~~~((x^t)^\top \cdot R^T \cdot x^t = 0) \nonumber\\
&=& -2\eta (x^\ast)^\top \cdot R \cdot x^t\nonumber \\
&=& -2\eta (x^t)^\top \cdot R \cdot x^\ast~~~~~~~~~\text{(by Lemma~\ref{l:zero})} \nonumber\\
&\geq& 0 \nonumber
\end{eqnarray}
where the last inequality follows by the fact that $(x^\ast,x^\ast)$ is a symmetric Nash Equilibrium of the game $(R,R)$. Since the LHS of Equation~\ref{eq:1} is greater or equal to $0$ we get that,
\[ 
    \|\hat{x}^{t+1} - \Pi_{\X^*}(\hat{x}^t)\|^2 \leq \|\hat{x}^t - \Pi_{\X^*}(\hat{x}^t)\|^2 - \|\hat{x}^{t+1} - x^t\|^2 - \frac{15}{16}\|x^t - \hat{x}^t\|^2 + \frac{1}{16}\|\hat{x}^t - x^{t-1}\|^2
\]
By definition, the left hand side of the above is bounded below by $\dist^2(\hat{x}^{t+1}, \X^*)$. Thus, we have the following inequality,
\begin{equation}\label{eq:2}
    \dist^2(\hat{x}^{t+1}, \X^*) \leq \dist^2(\hat{x}^{t}, \X^*) - \|\hat{x}^{t+1} - x^t\|^2 - \frac{15}{16}\|x^t - \hat{x}^t\|^2 + \frac{1}{16}\|\hat{x}^t - x^{t-1}\|^2  
\end{equation}
Now, we define $\Theta^t := \|\hat{x}^t - \Pi_{\X^*}(\hat{x}^t)\|^2 + \frac{1}{16}\|\hat{x}^t - x^{t-1}\|^2$ and $\xi^t := \|\hat{x}^{t+1} - x^t\|^2 + \|x^t - \hat{x}^t\|^2$ and rewrite Equation~\ref{eq:2} as follows,
\begin{equation}\label{eqn:theta}
    \Theta^{t+1} \leq \Theta^t - \frac{15}{16}\xi^t
\end{equation}

As in \cite{wei2020linear}, we now lower bound $\xi^t$ by a quantity related to $\dist^2(\hat{x}^{t+1}, \X^*)$ which will then give us a convergence rate for $\Theta^t$. To do so we need to establish a property that is known as saddle-point metric subregularity (\cite{wei2020linear}).
\begin{lemma}\label{l:metric}(Saddle-Point Metric Subregularity (SP-MS)) For any $x,x'\in \X\setminus \X^*$,
\[
\sup_{x'\in \X} \frac{(R\cdot x)^\top (x-x')}{\| x-x'\|} \geq c \cdot \|x -  \Pi_{\X^*} (x) \|
\]
for some game-dependent constant $c > 0$.
\end{lemma}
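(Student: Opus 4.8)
The plan is to recast $\X^\ast$ as the solution set of an affine variational inequality over the polytope $\X$ --- equivalently, as an explicit polyhedron --- and then to extract the error bound from Hoffman's lemma. First I would turn the left-hand side into a gap function of $x$ alone. For $x \in \X$, Lemma~\ref{l:zero_simple} gives $(Rx)^\top x = (x^\top R x)^\top = 0$, so $(Rx)^\top(x - x') = -(x')^\top R x$, and Lemma~\ref{l:zero} rewrites this as $x^\top R x'$ when $x' \in \X$. Hence, with $\mathrm{vert}(\X)$ the finite vertex set of $\X$,
\[
g(x) \;:=\; \sup_{x' \in \X}\,(Rx)^\top(x - x') \;=\; \max_{x' \in \X} x^\top R x' \;=\; \max_{v \in \mathrm{vert}(\X)} x^\top R v ;
\]
taking $x' = x$ shows $g(x) \ge 0$ on $\X$, and the definition of $\X^\ast$ together with Lemma~\ref{l:zero} gives $g(x) = 0 \iff x \in \X^\ast$. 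In particular $g$ is convex and piecewise-linear, and
\[
\X^\ast \;=\; \{\, x \in \X \;:\; x^\top R v \le 0 \ \text{ for every } v \in \mathrm{vert}(\X)\,\}
\]
is a nonempty polyhedron.

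Second, since the diameter $\mathcal{D} := \mathrm{diam}(\X)$ is finite and, for $x \notin \X^\ast$, a maximizer $x'_\star$ of $(Rx)^\top(x-x')$ over $\X$ satisfies $x'_\star \ne x$ (otherwise $g(x) = 0$), I obtain the crude but sufficient inequality
\[
\sup_{x' \in \X}\frac{(Rx)^\top(x-x')}{\norm{x-x'}} \;\ge\; \frac{(Rx)^\top(x - x'_\star)}{\norm{x - x'_\star}} \;=\; \frac{g(x)}{\norm{x - x'_\star}} \;\ge\; \frac{g(x)}{\mathcal{D}} .
\]
So it remains to establish the \emph{linear} error bound $\dist(x, \X^\ast) \le \theta\, g(x)$ for all $x \in \X$ and some game-dependent $\theta > 0$; the lemma then follows with $c = 1/(\theta\mathcal{D})$. (The ``$x'$'' in the quantifier of the statement is shadowed by the one in the supremum, so the content is simply this inequality for $x \notin \X^\ast$, and $\norm{x - \Pi_{\X^\ast}(x)} = \dist(x,\X^\ast)$ modulo the evident $\arg$min/$\arg$max typo in the definition of $\Pi_{\X^\ast}$.)

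Third, I would read the error bound off Hoffman's lemma applied to the polyhedral description above. Writing $\X = \{x : Cx \le d\}$, the polyhedron $\X^\ast$ is cut out by the finite linear system consisting of $Cx \le d$ together with $x^\top R v \le 0$ for $v \in \mathrm{vert}(\X)$; Hoffman's lemma supplies a constant $\theta_0 > 0$, depending only on $C$, $d$ and $\mathrm{vert}(\X)$ (hence only on the game), with
\[
\dist(x,\X^\ast) \;\le\; \theta_0\, \Bigl\| \bigl((Cx-d)_+,\; \bigl((x^\top R v)_+\bigr)_{v \in \mathrm{vert}(\X)}\bigr)\Bigr\| \qquad \text{for all } x .
\]
For $x \in \X$ the first block vanishes and $(x^\top R v)_+ \le g(x)$ for every $v$, so $\dist(x,\X^\ast) \le \theta_0 \sqrt{|\mathrm{vert}(\X)|}\; g(x)$, which is the desired bound with $\theta = \theta_0\sqrt{|\mathrm{vert}(\X)|}$.

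The step I expect to be the crux is the third one: one must be certain the Hoffman constant is finite (it is, since $\X^\ast$ is a nonempty polyhedron) and that it is a legitimate game-dependent constant (it depends only on the fixed data $C$, $d$ and the vertices of $\X$). Everything else is bookkeeping with Lemmas~\ref{l:zero_simple} and~\ref{l:zero} and the boundedness of $\X$. As an alternative to invoking Hoffman directly, one could cite the classical global error bound for the natural residual of affine variational inequalities over polyhedra --- which is exactly what $g$ is for the operator $x \mapsto Rx$ on $\X$ --- as used in the analysis of \cite{wei2020linear}; I would keep the Hoffman argument since it is self-contained.
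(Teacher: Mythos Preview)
Your proposal is correct and follows the same overall architecture as the paper: rewrite $(Rx)^\top(x-x')$ as $-(x')^\top Rx$ using $x^\top R^\top x=0$ and Lemma~\ref{l:zero}, bound the normalized supremum below by $g(x)/\mathcal{D}$ via the diameter, and then establish the linear error bound $\dist(x,\X^\ast)\le \theta\, g(x)$. The paper packages the first two steps as a reduction to Lemma~\ref{l:10}, which is exactly your inequality $-\min_{x'\in\X}(x')^\top Rx\ge c\,\dist(x,\X^\ast)$.

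Where you diverge is only in the third step. You describe $\X^\ast$ explicitly as $\{x\in\X:\ x^\top Rv\le 0\ \text{for every vertex }v\ \text{of }\X\}$ and invoke Hoffman's lemma as a black box. The paper instead reproduces the hands-on argument of \cite{wei2020linear}: it fixes $x$ and $x^\ast=\Pi_{\X^\ast}(x)$, lists the constraints tight at $x^\ast$, expresses $x-x^\ast$ in the normal cone $\mathcal{N}_{x^\ast}$ with coefficients bounded by $C'\norm{x-x^\ast}$ (obtained by solving a finite family of LPs over a bounded slice of the cone), and reads off $\max_i c_i^\top(x-x^\ast)\ge \tfrac{1}{kC'}\norm{x-x^\ast}$. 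That is precisely a specialized proof of the Hoffman bound for this particular system, so the two routes are equivalent in content. Your version is shorter and makes the provenance of the game-dependent constant explicit (it is a Hoffman constant for a fixed matrix built from $\X$ and $R$); the paper's version is self-contained and tracks the constant through the normal-cone geometry, at the price of several auxiliary claims (Lemma~\ref{lemma:polytope}, Lemma~\ref{lemma:tight}, Claims~\ref{claim:1} and~\ref{claim:2}). Either is acceptable; you correctly anticipate the alternative in your final paragraph.
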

We present the proof of Lemma~\ref{l:metric} in Section~\ref{s:metric}. To this end, we remark that once the proof of Lemma~\ref{l:metric} is established, the proof
of Theorem~\ref{t:3} follows by the analysis of \cite{wei2020linear}. For the sake of completeness, we conclude the section with this analysis.

\begin{lemma}[\cite{wei2020linear}]\label{lemma:W04}
    If the parameter $\eta$ in Equations~(\ref{eq:OGD1}),~(\ref{eq:ODG2}) is selected less than $1/8\norm{R}^2$
then for any $t\geq 0$ and $x' \in \X$ with $x' \neq \hat{x}^{t+1}$,
    \[
        \|\hat{x}^{t+1} - x^t\|^2 + \|x^t - \hat{x}^{t}\|^2 \geq \frac{32}{81} \eta^2 \frac{\left[(R\cdot\hat{x}^{t+1})^\top(\hat{x}^{t+1}-x') \right]^2_+}{\|\hat{x}^{t+1} - x'\|^2}
    \]
    where $[a]_+ \coloneqq \max\{a, 0\}$, and similarly, for $x' \neq x^{t+1}$,
    \[
        \|\hat{x}^{t+1} - x^{t+1}\|^2 + \|x^t - \hat{x}^{t+1}\|^2 \geq \frac{32}{81} \eta^2 \frac{\left[(R\cdot x^{t+1})^\top(x^{t+1}-x') \right]^2_+}{\|x^{t+1} - x'\|^2}
    \]
    
\end{lemma}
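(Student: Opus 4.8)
The plan is to read the optimistic updates~(\ref{eq:OGD1}),~(\ref{eq:ODG2}) as ordinary Euclidean projections and then lean on only two facts: the variational characterisation of the projection, and the Lipschitz bound $\norm{Ra-Rb}\le\norm{R}\norm{a-b}$. Since $\psi=\tfrac12\norm{\cdot}^2$, the relevant iterates are $x^{t}=\Pi_{\X}(\hat x^{t}-\eta Rx^{t-1})$, $\hat x^{t+1}=\Pi_{\X}(\hat x^{t}-\eta Rx^{t})$ and $x^{t+1}=\Pi_{\X}(\hat x^{t+1}-\eta Rx^{t})$. I will use the consequence $\eta\norm{R}\le\tfrac18$ of the step-size hypothesis (immediate once $\norm{R}\ge1$, and otherwise enforced by additionally asking $\eta\le\tfrac18$, which costs nothing downstream). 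No property of $R$ beyond its operator norm enters here; in particular Lemma~\ref{l:zero} is not needed.

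First I would invoke the projection inequality $\langle y-\Pi_{\X}(y),\,x'-\Pi_{\X}(y)\rangle\le0$ for all $x'\in\X$. Applied to $\hat x^{t+1}$ it gives $\eta\langle Rx^{t},\,\hat x^{t+1}-x'\rangle\le\langle\hat x^{t}-\hat x^{t+1},\,\hat x^{t+1}-x'\rangle$, and applied to $x^{t+1}$ it gives $\eta\langle Rx^{t},\,x^{t+1}-x'\rangle\le\langle\hat x^{t+1}-x^{t+1},\,x^{t+1}-x'\rangle$. The point of the lemma, however, is a bound on the operator evaluated at the \emph{extrapolated} point, so the next step is to shift it there: write
\[
\eta\langle R\hat x^{t+1},\hat x^{t+1}-x'\rangle=\eta\langle Rx^{t},\hat x^{t+1}-x'\rangle+\eta\langle R(\hat x^{t+1}-x^{t}),\hat x^{t+1}-x'\rangle,
\]
bound the first summand by the projection inequality, the second by Cauchy--Schwarz plus $\norm{R}$, and collapse $\norm{\hat x^{t}-\hat x^{t+1}}\le\norm{\hat x^{t}-x^{t}}+\norm{x^{t}-\hat x^{t+1}}$. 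With $\eta\norm{R}\le\tfrac18$ the error term $\eta\norm{R}\norm{\hat x^{t+1}-x^{t}}$ gets absorbed into $\norm{x^{t}-\hat x^{t+1}}$, leaving $\eta\langle R\hat x^{t+1},\hat x^{t+1}-x'\rangle\le\tfrac98\big(\norm{x^{t}-\hat x^{t}}+\norm{\hat x^{t+1}-x^{t}}\big)\norm{\hat x^{t+1}-x'}$. The identical manipulation on the second projection inequality, using $\norm{x^{t+1}-x^{t}}\le\norm{x^{t+1}-\hat x^{t+1}}+\norm{\hat x^{t+1}-x^{t}}$, yields $\eta\langle Rx^{t+1},x^{t+1}-x'\rangle\le\tfrac98\big(\norm{\hat x^{t+1}-x^{t+1}}+\norm{\hat x^{t+1}-x^{t}}\big)\norm{x^{t+1}-x'}$.

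Finally I would apply $a+b\le\sqrt2\,\sqrt{a^2+b^2}$, divide by $\norm{\hat x^{t+1}-x'}$ (resp. $\norm{x^{t+1}-x'}$), which is legitimate since $x'\ne\hat x^{t+1}$ (resp. $x'\ne x^{t+1}$), and square; when the inner product is negative the bound is vacuous, which is exactly why the statement carries a $[\,\cdot\,]_+$. Since $\tfrac12\big(\tfrac98\big)^2=\tfrac{81}{32}$ and $\norm{\hat x^{t+1}-x^{t}}=\norm{x^{t}-\hat x^{t+1}}$, the two inequalities come out in precisely the stated form. The only genuinely fiddly part is the constant bookkeeping: the factor $\tfrac98$, hence the $\tfrac{32}{81}$, is exactly what the operator-shift error leaves behind, so the bound $\eta\norm{R}\le\tfrac18$ is used tightly and cannot be downgraded to a vague ``$\eta$ sufficiently small'' without retracking how the constant degrades. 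Everything else is Cauchy--Schwarz, the triangle inequality, and the projection inequality.
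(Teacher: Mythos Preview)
The paper does not supply its own proof of this lemma: it is quoted verbatim from \cite{wei2020linear} and used as a black box inside the proof of Theorem~\ref{t:3}. Your argument is correct and is essentially the standard derivation one finds in that reference: the projection (first-order optimality) inequality for $\hat x^{t+1}$ and $x^{t+1}$, an operator shift from $Rx^{t}$ to $R\hat x^{t+1}$ (resp.\ $Rx^{t+1}$) controlled by $\eta\norm{R}$, a triangle inequality to route through the intermediate iterate, and then $a+b\le\sqrt{2}\sqrt{a^2+b^2}$ and squaring. The constant bookkeeping is exactly right: $(1+\eta\norm{R})\le\tfrac98$ produces the $\tfrac{32}{81}$.

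Your remark about the step-size hypothesis is well taken. The bound actually used is $\eta\norm{R}\le\tfrac18$, whereas the lemma (and the paper's Theorem~\ref{t:3}) is stated with $\eta\le 1/(8\norm{R}^2)$; the two coincide only when $\norm{R}\ge1$. This is a cosmetic mismatch inherited from the source rather than a flaw in your argument, and your proposed fix (additionally require $\eta\le\tfrac18$, harmless downstream) is exactly how one would patch it.
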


Now taking the telescoping sum of Equation \ref{eqn:theta} over $t$, we get:
\[
    \Theta^1 \geq \Theta^1 - \Theta^T \geq \frac{15}{16}\sum_{t=1}^{T-1} \xi^t \geq \frac{15}{16}\sum_{t=1}^{T-1}  \left(\|\hat{x}^{t+1} - x^t\|^2 + \|x^t - \hat{x}^t\|^2\right) \geq \frac{15}{32}\sum_{t=2}^{T-1} \| x^t - x^{t-1}\|^2
\]
where the final inequality follows due to strong convexity of $\frac{1}{2}\|x\|^2$. Now, since the rightmost term is a summation of nonnegative terms and is  upper bounded by a finite constant, we have that $\|x^{t-1} - x^t\| \to 0$ as $T \to \infty$. Thus, $x^t$ converges to a point as $T\to\infty$. In addition, due to Theorem \ref{t:average}, we know that the time-average value of the iterates converge to a Nash Equilibrium. Combining these two observations, we can thus conclude that $x^t$ indeed converges to a Nash Equilibrium in the last-iterate sense.

To show the explicit rate of convergence, we will require a few additional observations. First, note that the following inequality holds for Equation \ref{eqn:theta}:
\begin{equation}\label{eqn:inequality4}
    \|\hat{x}^{t+1} - x^t\|^2 \leq \xi^t \leq \frac{16}{15} \Theta^t \leq \dots\leq \frac{16}{15}\Theta^1
\end{equation}

Then we have:
\begin{align*}
    \xi^t &\geq  \frac{1}{2} \|\hat{x}^{t+1} - x^t\|^2 + \frac{1}{2}(\|\hat{x}^{t+1} - x^t\|^2 + \|x^t - \hat{x}^{t}\|^2)\\
    &\geq \frac{1}{2} \|\hat{x}^{t+1} - x^t\|^2 + \frac{16\eta^2}{81}\sup_{x'\in \X} \frac{\left[(R\cdot\hat{x}^{t+1})^\top(\hat{x}^{t+1}-x') \right]^2_+}{\|\hat{x}^{t+1} - x'\|^2} && \text{(Lemma \ref{lemma:W04})}\\
    &\geq \frac{1}{2} \|\hat{x}^{t+1} - x^t\|^2 + \frac{16\eta^2 C^2}{81} \| \hat{x}^{t+1} - \Pi_{\X^*} (\hat{x}^{t+1})\|^2 && \text{(SP-MS condition)}\\
    &\geq \min\left\{\frac{16\eta^2 C^2}{81}, \frac{1}{2} \right\} \left( \|\hat{x}^{t+1} - x^t\|^2 + \|\hat{x}^{t+1} -  \Pi_{\X^*} (\hat{x}^{t+1})\|^2 \right) && \text{(Equation \ref{eqn:inequality4})}\\
    &= C_2 \Theta^{t+1}
\end{align*}
Now, we can show the explicit convergence rate as follows. Combining the above inequality with Equation \ref{eqn:theta}, we obtain
\begin{equation}
    \Theta^{t+1} \leq \Theta^t - C_2 \Theta^{t+1}
\end{equation}
This immediately implies that $\Theta^{t+1} \leq (1+C_2)^{-1}\Theta^t$. By iteratively expanding the right hand side of the inequality, we can equivalently write: \begin{equation}\label{eqn:thm3}
    \Theta^{t} \leq (1+C_2)^{-t+1}\Theta^1 \leq 2 \Theta^1 (1+C_2)^{-t}
\end{equation}

Next, notice that $\Theta^1$ is precisely $\dist^2(\hat{x}^1, \X^*)$. Moreover, by using the triangle inequality, we can write:
\begin{align*}
    \dist^2(x^t, \X^*) &\leq \|x^t - \Pi_{\X^*}(\hat{x}^{t+1})\|^2\\
    &\leq 2\|\hat{x}^{t+1} - \Pi_{\X^*}(\hat{x}^{t+1})\|^2 + 2\|\hat{x}^{t+1} - x^t\|^2\\
    &\leq 32 \Theta^{t+1} \leq 32 \Theta^{t}
\end{align*}
    
Combining this observation with Equation \ref{eqn:thm3} we get that 
\[
    \dist^2(x^t, \X^*) \leq 64 \dist^2(x^1, \X^*) (1+C_2)^{-t}
\]
where $C_2 = \min\left\{\frac{16\eta^2 C^2}{81}, \frac{1}{2} \right\}$, which completes the proof of Theorem \ref{t:3}.

\subsection{Proof of Lemma~\ref{l:metric}}\label{s:metric}
Lemma~\ref{l:metric} follows easily from Lemma~\ref{l:10}, the proof of which is presented in Section~\ref{appsec:13}.

\begin{lemma}\label{l:10}
For any $x \in \mathcal{X}$ the following holds:
    \begin{equation}
    -\min_{x' \in \mathcal{X}} x'^\top R \cdot x \geq c \cdot \|x - \Pi_{\X^*}(x)\|.
    \end{equation}
for some game-dependent constant $c \in (0,1)$.
    \end{lemma}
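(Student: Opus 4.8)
\textbf{Proof proposal for Lemma~\ref{l:10}.}
The plan is to realize $f(x) := -\min_{x'\in\X}(x')^\top R x$ as a nonnegative, piecewise-linear convex function on the polytope $\X$ whose zero set is exactly the polyhedron $\X^\ast$, and then to convert this qualitative description into the claimed linear error bound by a Hoffman-type argument. First I would note that, since $z\mapsto z^\top R x$ is linear and $\X=\X_1\times\cdots\times\X_n$ is a (bounded) polytope, the minimum defining $f$ is attained at a vertex, so $f(x)=\max_{v\in\mathrm{Vert}(\X)}\langle -R^\top v,x\rangle$ is a maximum of finitely many linear functionals; in particular $f$ is convex and piecewise linear, and finite on all of $\X$. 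By Lemma~\ref{l:zero_simple} we have $x^\top R x=0$ for every $x\in\X$, hence $f(x)\geq -x^\top R x=0$ on $\X$, and $f(x)=0$ if and only if $\langle -R^\top v,x\rangle\leq 0$ for every vertex $v$. Comparing with the characterization $\X^\ast=\{x^\ast\in\X:\min_{x\in\X}x^\top R x^\ast=0\}$ obtained from Lemma~\ref{l:zero}, this shows $\X^\ast=\{x\in\X: f(x)=0\}$ is precisely the polyhedron obtained from $\X$ by adjoining the inequalities $\langle -R^\top v,x\rangle\leq 0$, $v\in\mathrm{Vert}(\X)$; it is nonempty because the network game (equivalently the symmetric game $(R,R)$) admits a Nash equilibrium.

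Next I would collect the linear inequalities that define $\X$ together with the finitely many inequalities $\langle -R^\top v,x\rangle\leq 0$ into a single system $\tilde A x\leq \tilde b$ with $\X^\ast=\{x:\tilde A x\leq\tilde b\}$, and invoke Hoffman's lemma: there is a constant $\theta>0$, depending only on the matrix $\tilde A$ — hence only on $R$ and the (polynomially describable) treeplex polytope $\X$, i.e. only on the game — such that $\dist(x,\X^\ast)\leq \theta\,\norm{[\tilde A x-\tilde b]_+}$ for every $x$. Restricting to $x\in\X$, the rows coming from the description of $\X$ are satisfied and contribute zero, while each remaining coordinate of $[\tilde A x-\tilde b]_+$ equals $[\langle -R^\top v,x\rangle]_+\leq f(x)$ (using $f(x)\geq 0$ and $f(x)=\max_v\langle -R^\top v,x\rangle$). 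Hence $\norm{[\tilde A x-\tilde b]_+}\leq C f(x)$ for a game-dependent $C>0$, giving $\dist(x,\X^\ast)\leq \theta C\, f(x)$ for all $x\in\X$, i.e. $f(x)\geq \tfrac{1}{\theta C}\dist(x,\X^\ast)$. Taking $c:=\min\{\tfrac{1}{\theta C},\tfrac12\}\in(0,1)$ finishes Lemma~\ref{l:10}.

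Downstream, Lemma~\ref{l:metric} is then immediate: since $x^\top R x=0$ (Lemma~\ref{l:zero_simple}) one has $\sup_{x'\in\X}(Rx)^\top(x-x')=-\min_{x'\in\X}(x')^\top R x=f(x)\geq c\,\dist(x,\X^\ast)$, and evaluating the maximizing $x'$ in the normalized ratio together with $\norm{x-x'}\leq\mathcal{D}$ on $\X$ yields $\sup_{x'\in\X}\tfrac{(Rx)^\top(x-x')}{\norm{x-x'}}\geq \tfrac{c}{\mathcal{D}}\dist(x,\X^\ast)$. I expect the main obstacle to be exactly the passage in the second paragraph: turning the qualitative statement ``$f$ is a piecewise-linear convex function vanishing precisely on $\X^\ast$'' into a genuinely \emph{game-dependent} linear error bound. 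This is where polyhedrality of $\X^\ast$ — which itself rests on the restricted zero-sum property of Lemma~\ref{l:zero} — and Hoffman's error bound do the heavy lifting, mirroring the saddle-point metric-subregularity step in the two-player analysis of \cite{wei2020linear}.
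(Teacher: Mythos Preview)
Your proof is correct and arrives at the same conclusion, but by a genuinely different route than the paper. The paper, following \cite{wei2020linear}, fixes $x\in\X\setminus\X^\ast$ and $x^\ast=\Pi_{\X^\ast}(x)$, identifies the tight constraints at $x^\ast$, and uses the normal-cone representation $\mathcal{N}_{x^\ast}=\{\sum_i p_i\alpha_i+\sum_j q_j c_j:p_i,q_j\geq 0\}$ to expand $x-x^\ast$ in these generators with bounded coefficients; a direct computation then yields $\max_j c_j^\top(x-x^\ast)\geq \tfrac{1}{kC'}\norm{x-x^\ast}$, and one finishes by ranging over the finitely many possible tight sets. In effect the paper reproves a Hoffman-type error bound by hand for this particular polyhedral system. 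You instead recognize $f(x)=-\min_{x'\in\X}(x')^\top R x=\max_{v\in\mathrm{Vert}(\X)}\langle -R^\top v,x\rangle$ as a nonnegative piecewise-linear convex function on $\X$ whose zero set is exactly $\X^\ast$, write $\X^\ast$ as a single linear system, and invoke Hoffman's lemma as a black box. What your approach buys is brevity and conceptual clarity (the only game-specific inputs are Lemma~\ref{l:zero_simple} and the polyhedral description of $\X^\ast$); what the paper's approach buys is self-containment and a slightly more explicit handle on the constant via the normal-cone coefficients. Either way the downstream derivation of Lemma~\ref{l:metric} is identical to what you sketch.
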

\begin{proof}[Proof of Lemma \ref{l:metric}.]
    Consider the LHS of the inequality in Lemma~\ref{l:10} and note that $-\min_{x' \in \mathcal{X}} x'^\top R \Bar{x} = 0$ if and only if $\Bar{x} \in \mathcal{X}^\ast$.
    
    Let $\mathcal{D}$ denote the diameter of $\mathcal{X}$ which is assumed to be finite. Then,
    \begin{align*}
        \max_{x' \in \X} \frac{(R\cdot x)^\top (x-x')}{\|x-x'\|} &\geq \max_{x' \in \X}  \frac{1}{\mathcal{D}} (R\cdot x)^\top (x-x')\\
        &= \frac{1}{\mathcal{D}} \max_{x' \in \X} x^\top R^\top (x-x')\\
        &= \frac{1}{\mathcal{D}} \max_{x' \in \X} [x^\top R^\top x - x^\top R^\top x']\\
        &= \frac{1}{\mathcal{D}} \max_{x' \in \X} [- x^\top R^\top x'] && (x^\top R^\top x = 0)\\
        &= - \frac{1}{\mathcal{D}} \min_{x' \in \X} x^\top R^\top x'\\
        &= - \frac{1}{\mathcal{D}} \min_{x' \in \X} x'^\top R x\\
        &\geq \frac{c}{\mathcal{D}} \|x - \Pi_{\X^*} (x)\| && (\text{Lemma \ref{l:10}})
    \end{align*}
 \end{proof}
 
\subsection{Proof of Lemma~\ref{l:10}}\label{appsec:13}
The proof of this lemma follows the basic steps in the proof of Theorem~$5$ in \cite{wei2020linear}, with some necessary modifications. We remind the reader that for the purposes of the proof, we defined the set of symmetric Nash Equilibria as $\mathcal{X}^\ast=\{x^\ast: \text{min}_{x \in \mathcal{X}}x^\top \cdot R \cdot x^\ast = 0\}$. The proof is split into several auxiliary lemmas/claims, which can then be combined to show the required result.
\begin{lemma}\label{lemma:polytope}
The set $\X^\ast$ is a polytope. 
\end{lemma}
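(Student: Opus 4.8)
The plan is to exhibit $\X^\ast$ as the intersection of the polytope $\X$ with finitely many linear inequalities, hence as a polytope itself. Recall from Lemma~\ref{l:zero} that for all $x \in \X$ we have $x^\top \cdot R \cdot x = 0$, so the defining condition $\min_{x' \in \X} (x')^\top \cdot R \cdot x = 0$ is equivalent to the inequality $(x')^\top \cdot R \cdot x \geq 0$ holding for \emph{all} $x' \in \X$. Since $\X = \X_1 \times \cdots \times \X_n$ is a polytope, it has a finite set of vertices $\{v_1, \ldots, v_m\}$, and a linear function is nonnegative over $\X$ if and only if it is nonnegative at each vertex. Therefore
\[
\X^\ast = \left\{ x \in \X ~:~ v_j^\top \cdot R \cdot x \geq 0 \text{ for all } j = 1, \ldots, m \right\},
\]
which is the intersection of the polytope $\X$ with $m$ halfspaces, and hence a polytope.

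The steps I would carry out, in order, are: (1) invoke Lemma~\ref{l:zero} to rewrite the membership condition $x \in \X^\ast$ purely as the family of linear inequalities $(x')^\top \cdot R \cdot x \geq 0$ indexed by $x' \in \X$; (2) observe that, for fixed $x$, the map $x' \mapsto (x')^\top \cdot R \cdot x$ is linear, so its minimum over the polytope $\X$ is attained at a vertex, reducing the (a priori infinite) family of constraints to the finite family indexed by the vertices of $\X$; (3) conclude that $\X^\ast$ is cut out from $\mathbb{R}^{|\X|}$ by the finitely many linear constraints defining $\X$ together with the finitely many constraints $v_j^\top \cdot R \cdot x \geq 0$, and is therefore a (bounded) polytope. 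One should also note $\X^\ast$ is nonempty, which follows from the existence of a symmetric Nash Equilibrium (guaranteed by Nash's theorem applied to the symmetric game $(R,R)$), though nonemptiness is not needed for it to qualify as a polytope.

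I do not expect a serious obstacle here; the only subtlety is being careful that the reduction from "for all $x' \in \X$" to "for all vertices of $\X$" is legitimate, which is immediate from linearity in $x'$ and boundedness of $\X$ (a treeplex is a bounded polytope, being contained in $[0,1]^{|\H_u|+|\Z_u|}$ coordinatewise). A minor bookkeeping point is that the number of vertices $m$ may be exponential in the description of the game, but this is irrelevant for the qualitative statement that $\X^\ast$ is a polytope; if a bound on the number of facets were needed later, one could instead describe $\X^\ast$ via LP duality, but the vertex argument suffices for the present lemma.
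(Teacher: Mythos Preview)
Your proposal is correct and follows essentially the same argument as the paper: reduce the infinite family of linear constraints $(x')^\top R x \geq 0$, $x' \in \X$, to the finite family indexed by the vertices of the polytope $\X$, and conclude that $\X^\ast$ is the intersection of $\X$ with finitely many halfspaces. The only cosmetic difference is that you explicitly invoke Lemma~\ref{l:zero} to justify that the minimum value $0$ is actually attained (at $x' = x$), whereas the paper leaves this implicit.
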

\begin{proof}
Let $x^\ast \in \X^\ast$ then $\min_{x \in \X} x^\top \cdot R \cdot x^\ast =0$. Since $\mathcal{X}$ is a polytope the minimum value is attained in one of the vertices of polytope $\X$, the set of which is denoted by $\V(\X)$. Thus 
\[\min_{x \in \X} x^\top \cdot R \cdot x^\ast = \min_{v \in \V(\X)} v^\top \cdot R \cdot x^\ast = 0\]
As a result, the set $\mathcal{X}^\ast$ can be equivalently described as the set of vector $x^\ast \in \X$ that additionally satisfy $v_i^\top \cdot R \cdot x^\ast \geq 0$ for all vertices $v_i \in \V(\X)$.
\end{proof}
Let us describe $\X$ in the following polytopal form:
\[\X:= \{x:~~ \alpha_i^\top \cdot x \leq \beta_i \text{ for } i = 1, \ldots L\}\]
where $L$ is a positive integer. Consider also the following polytopal form of the set $\X^\ast$ as
\[\X^*:= \{x^\ast \in \X:~~ c_i^\top \cdot x^\ast \geq 0 \text{ for } i = 1, \ldots, K\}\]
where $c_i : = v_i^\top \cdot R$ with $v_i$ denoting the $i$-th vertex of polytope $\X$ and $K$ denotes the number of different vertices.

Now fix a specific $x \in \X \setminus \X^\ast$ and let $x^\ast := \Pi_{\X^\ast}(x)$. The vector $x^\ast$ satisfies some of the polytopal constraints with equality. These constraints are called \emph{tight}, and without loss of generality we can assume that
\begin{itemize}
    \item $\alpha_i^\top \cdot x^\ast = \beta_i$ for $i=1,\ldots,\ell$
    
    \item $c_i^\top \cdot x^\ast = 0$ for $i=1,\ldots,k$
\end{itemize}

\begin{lemma}\label{lemma:tight}
    The vector $x\in \X$ violates at least one tight constraint of the form $\{c_i^\top \cdot x = 0 \text{ for }i=1,\ldots,k\}$.
\end{lemma}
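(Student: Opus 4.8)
The goal is to show that the fixed point $x \in \X \setminus \X^\ast$ must violate at least one of the constraints $c_i^\top \cdot x \geq 0$ that are tight at $x^\ast := \Pi_{\X^\ast}(x)$. The plan is to argue by contradiction: suppose $x$ satisfies \emph{every} tight constraint, i.e. $c_i^\top \cdot x \geq 0$ for all $i = 1,\ldots,k$ (and of course $\alpha_i^\top \cdot x \leq \beta_i$ for all $i$, since $x \in \X$). I will then construct a point on the segment between $x^\ast$ and $x$ — or rather, move a small step from $x^\ast$ \emph{toward} $x$ — and show this point lies in $\X^\ast$, contradicting that $x^\ast$ is the \emph{closest} point of $\X^\ast$ to $x$ (since any point on the open segment $(x^\ast, x]$ is strictly closer to $x$ than $x^\ast$ is, unless $x = x^\ast$, which is excluded).

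Concretely, consider $x_\lambda := x^\ast + \lambda (x - x^\ast)$ for small $\lambda > 0$. I need to check that $x_\lambda \in \X^\ast$, i.e. that $x_\lambda$ satisfies all the defining inequalities of $\X$ and all the inequalities $c_i^\top \cdot x_\lambda \geq 0$. For the \emph{non-tight} constraints (those with $\alpha_i^\top \cdot x^\ast < \beta_i$ or $c_i^\top \cdot x^\ast > 0$), strict satisfaction at $x^\ast$ plus continuity guarantees they remain satisfied for $\lambda$ small enough. For the \emph{tight} constraints of $\X$, $\alpha_i^\top \cdot x_\lambda = (1-\lambda)\alpha_i^\top x^\ast + \lambda \alpha_i^\top x \leq (1-\lambda)\beta_i + \lambda \beta_i = \beta_i$ since $x \in \X$. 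For the tight constraints $c_i^\top \cdot x^\ast = 0$: here $c_i^\top \cdot x_\lambda = (1-\lambda)\cdot 0 + \lambda\, c_i^\top x = \lambda\, c_i^\top x \geq 0$ precisely because of our contradiction hypothesis that $x$ satisfies these tight constraints. Hence $x_\lambda \in \X^\ast$ for sufficiently small $\lambda > 0$, but $\|x_\lambda - x\| = (1-\lambda)\|x^\ast - x\| < \|x^\ast - x\|$, contradicting $x^\ast = \Pi_{\X^\ast}(x)$.

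The main subtlety — and the step I'd be most careful about — is handling the tight constraints of $\X$ itself versus the tight constraints of $\X^\ast$ of the form $c_i^\top x \geq 0$: the lemma's conclusion only concerns the latter family, so I must make sure the contradiction is extracted from a violation of a $c_i$-constraint and not merely a geometric obstruction. The argument above does exactly this: the only place the contradiction hypothesis is used is to sign $c_i^\top x_\lambda$, so if no $c_i$-constraint were violated we'd reach a contradiction; therefore at least one $c_i$-constraint tight at $x^\ast$ must be violated by $x$. One should also note that the set of tight $\X$-constraints is nonempty or handle the degenerate case where $x^\ast$ is interior to $\X$ trivially (then only the $c_i$ constraints matter), but this is absorbed cleanly since the displayed computation for $\alpha_i$-constraints is vacuous in that case. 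This lemma then feeds into the subsequent quantitative argument (bounding how far $x$ is from $\X^\ast$ in terms of how badly it violates such a constraint), which is where the game-dependent constant $c$ ultimately comes from.
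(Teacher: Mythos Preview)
Your proof is correct and uses essentially the same approach as the paper: argue by contradiction, move along the segment from $x^\ast$ toward $x$, and observe that a point strictly between them lies in $\X^\ast$, contradicting $x^\ast = \Pi_{\X^\ast}(x)$. Your version is in fact more carefully written than the paper's (which only sketches why the intermediate point $x'$ exists), and by taking the weaker contradiction hypothesis $c_i^\top x \ge 0$ for all tight $i$ rather than $c_i^\top x = 0$, you obtain the slightly sharper conclusion that some tight constraint is violated as an inequality ($c_i^\top x < 0$), not merely as an equation.
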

\begin{proof}
Let assume that $\{c_i^\top \cdot x = 0 \text{ for }i=1,\ldots,k\}$. Since $x \notin \X^\ast$ there exists
at least one vertex $v \in \V(\X)$ such that $v^\top \cdot R \cdot x <0$. The latter implies that there exits $x' \in \X$ lying in line segment between $x$ and $x^\ast$ such that $v^\top \cdot R \cdot x \geq 0$ for all vertices $v \in \V(\X)$. The latter implies that $x' \in \X^\ast$ which contradicts with the fact that $x^\ast = \Pi_{\X^\ast}(x)$. 
\end{proof}
Now, note that the normal cone of $\X^\ast$ at $x^\ast$ is
\[\mathcal{N}_{x^\ast} = \left\{x'-x^\ast:~~x^\ast=\Pi_{\X^\ast}(x')\right\}\]
From a standard result in linear programming literature \cite{wei2020linear}, we know that the normal cone can be written in the following form:
\[\mathcal{N}_{x^\ast} = \left\{ \sum_{i=1}^\ell p_i \cdot \alpha_i + \sum_{i=1}^k q_i \cdot c_i \text{ for some }p_i,q_i \geq 0
\right\}\]
Again, following the steps of \cite{wei2020linear}, we have the following claim:
\begin{claim}\label{claim:1}
For any $x \in \X$ such that $x^\ast = \Pi_{x \in \X^\ast}(x)$ the vector $x - x^\ast$ belongs in the set
\[\mathcal{M}_{x^\ast} = \left\{ \sum_{i=1}^\ell p_i \cdot \alpha_i + \sum_{i=1}^k q_i \cdot c_i:~~~p_i,q_i \geq 0 , \alpha_j^\top \cdot \left( \sum_{i=1}^\ell p_i \cdot \alpha_i + \sum_{i=1}^k q_i \cdot c_i\right) \leq 0
\right\}\]
\end{claim}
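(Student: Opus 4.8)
The plan is to combine the variational characterization of the Euclidean projection with the polyhedral description of the normal cone $\mathcal{N}_{x^\ast}$ recalled just above the claim. First, since $x^\ast = \Pi_{\X^\ast}(x)$ and $\X^\ast$ is a (convex) polytope, the projection is characterized by $\langle x - x^\ast,\, y - x^\ast\rangle \leq 0$ for every $y \in \X^\ast$; this says precisely that $x - x^\ast \in \mathcal{N}_{x^\ast}$. Plugging in the stated form of the normal cone, we obtain nonnegative scalars $p_1,\dots,p_\ell$ and $q_1,\dots,q_k$ with
\[
x - x^\ast = \sum_{i=1}^\ell p_i \cdot \alpha_i + \sum_{i=1}^k q_i \cdot c_i .
\]

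It then remains to verify the additional inequalities appearing in the definition of $\mathcal{M}_{x^\ast}$, namely that $\alpha_j^\top (x - x^\ast) \leq 0$ for each active index $j \in \{1,\dots,\ell\}$. This follows immediately from feasibility of $x$: since $x \in \X$ we have $\alpha_j^\top x \leq \beta_j$, while the $j$-th constraint being tight at $x^\ast$ gives $\alpha_j^\top x^\ast = \beta_j$; subtracting yields $\alpha_j^\top (x - x^\ast) = \alpha_j^\top x - \beta_j \leq 0$. Substituting the representation of $x - x^\ast$ from the previous display then shows $\alpha_j^\top \big( \sum_{i=1}^\ell p_i \alpha_i + \sum_{i=1}^k q_i c_i \big) \leq 0$ for every $j \in \{1,\dots,\ell\}$, which is exactly the condition defining $\mathcal{M}_{x^\ast}$. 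Hence $x - x^\ast \in \mathcal{M}_{x^\ast}$. (If $x \in \X^\ast$ the claim is trivial, since $x^\ast = x$ and $x - x^\ast = 0$ is obtained by taking all $p_i = q_i = 0$.)

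I do not expect a substantive obstacle here. The polyhedral normal-cone representation is already granted from the cited linear programming result, so the genuinely new ingredient is just the one-line feasibility argument of the previous paragraph. The only point requiring care is index bookkeeping: one must ensure that the $\alpha_j$ entering the defining inequality of $\mathcal{M}_{x^\ast}$ range exactly over the inequality constraints of $\X$ active at $x^\ast$ (the indices $1,\dots,\ell$), and that the conic multipliers furnished by the normal-cone description are supported precisely on those active constraints together with the active constraints $c_i^\top x^\ast = 0$, $i = 1,\dots,k$ — which is consistent with the normalization fixed immediately before the claim.
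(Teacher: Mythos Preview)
Your proposal is correct and follows essentially the same approach as the paper: both invoke the polyhedral normal-cone representation to obtain $x-x^\ast = \sum_i p_i\alpha_i + \sum_i q_i c_i$ with nonnegative multipliers, and then use feasibility of $x\in\X$ together with tightness $\alpha_j^\top x^\ast = \beta_j$ to deduce $\alpha_j^\top(x-x^\ast)\le 0$ for each active $j$. Your write-up is in fact slightly more explicit than the paper's (e.g., you spell out the projection variational inequality and the trivial $x\in\X^\ast$ case), but the logic is identical.
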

\begin{proof}
    As mentioned previously, we know that $x-x^*$ belongs in the normal cone of $x^*$, $\mathcal{N}_{x^*}$. Thus it can be expressed as $\sum_{i=1}^\ell p_i \cdot \alpha_i + \sum_{i=1}^k q_i \cdot c_i$ with $p_i, q_i \geq 0$. As such, we need only additionally show that $x-x^*$ satisfies the following:
    \[
        \alpha_i^\top (x-x^*) \leq 0,~~~~ \forall i\in 1,\dots,\ell
    \]
    Notice that for all $i=1,\dots,\ell$, we have:
    \begin{align*}
        \alpha_i^\top (x-x^*) &= (\alpha_i^\top x^* - b_i) + \alpha_i^\top (x-x^*) && (\text{$i$-th constraint is tight at $x^*$})\\
        &= \alpha_i^\top (x^*+x-x^*) - b_i\\
        &= \alpha_i^\top x - b_i \leq 0 && (x\in \mathcal{X})
    \end{align*}
\end{proof}
\begin{claim}\label{claim:2}
    $x-x^*$ can be written as $\sum_{i=1}^\ell p_i \cdot \alpha_i + \sum_{i=1}^k q_i \cdot c_i$ with $0 \leq p_i, q_i \leq C'\|x-x^*\|$ for all $i$ and some problem-dependent constant $C' < \infty$.
\end{claim}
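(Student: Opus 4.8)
The plan is to promote the mere existence of \emph{a} nonnegative representation of $x-x^*$ (supplied by Claim~\ref{claim:1}) to the existence of one whose coefficients are $O(\|x-x^*\|)$. Write $w := x-x^*$; by Claim~\ref{claim:1}, $w$ lies in the polyhedral cone $\mathcal{M}_{x^*}$, so there is at least one coefficient vector $(p,q)=(p_1,\dots,p_\ell,q_1,\dots,q_k)\ge 0$ with $w=\sum_{i=1}^{\ell} p_i\alpha_i+\sum_{i=1}^{k} q_i c_i$ satisfying the extra inequalities $\alpha_j^\top w\le 0$. The only obstruction to controlling its size is non-uniqueness of the representation, so the task is to \emph{select} a small one; the tool is the conic Carath\'eodory theorem applied to an appropriate finitely generated cone, together with the observation that only finitely many ``active-set'' configurations can occur.

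Concretely, I would work in coefficient space. Let
\[
\mathcal{Q} := \Big\{ (p,q)\in\mathbb{R}^{\ell}_{\ge 0}\times\mathbb{R}^{k}_{\ge 0} \ :\ \alpha_j^\top\big(\textstyle\sum_{i=1}^{\ell} p_i\alpha_i+\sum_{i=1}^{k} q_i c_i\big)\le 0,\ j=1,\dots,\ell\Big\},
\]
a polyhedral (hence finitely generated) cone, say $\mathcal{Q}=\mathrm{cone}\{g_1,\dots,g_M\}$ with each $g_s\ge 0$, and let $T(p,q):=\sum_{i=1}^{\ell} p_i\alpha_i+\sum_{i=1}^{k} q_i c_i$, so that $T(\mathcal{Q})=\mathcal{M}_{x^*}\ni w$ by Claim~\ref{claim:1}. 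Pick any preimage $(p,q)\in\mathcal{Q}$ of $w$ and expand it over the generators to get $w=\sum_{s=1}^{M}\mu_s\,T(g_s)$ with $\mu_s\ge 0$. Applying conic Carath\'eodory to the finitely many vectors $\{T(g_s)\}_s$ yields an index set $S$ with $\{T(g_s)\}_{s\in S}$ linearly independent and scalars $\nu_s\ge 0$ ($s\in S$) with $w=\sum_{s\in S}\nu_s\,T(g_s)$. Then $(p',q'):=\sum_{s\in S}\nu_s\,g_s$ is again in $\mathcal{Q}$ (a nonnegative combination of generators), still maps to $w$ under $T$, and — because $\{T(g_s)\}_{s\in S}$ is linearly independent — satisfies $\nu_S=(G_S^\top G_S)^{-1}G_S^\top w$, so $\|\nu_S\|\le\|(G_S^\top G_S)^{-1}G_S^\top\|\,\|w\|$, where $G_S$ is the matrix whose columns are the $T(g_s)$, $s\in S$. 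Maximizing $\|(G_S^\top G_S)^{-1}G_S^\top\|$ over the finitely many admissible $S$ and multiplying by $\max_s\|g_s\|_{\infty}$ bounds every entry of $(p',q')$ by a constant times $\|w\|=\|x-x^*\|$, which is exactly the claim.

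The one genuinely delicate point — and what I expect to be the real work — is uniformity of $C'$ in $x$: the cone $\mathcal{Q}$, its generators, and all the matrix norms above depend a priori on which of the constraints $\{\alpha_1,\dots,\alpha_L\}\cup\{c_1,\dots,c_K\}$ are tight at $x^*=\Pi_{\X^*}(x)$, hence on $x$. Since those constraints are drawn from a fixed finite list, however, only finitely many tight-set configurations arise, and taking the maximum of the constants obtained above over all of them produces a single problem-dependent $C'<\infty$ valid for every $x\in\X\setminus\X^*$. (The excluded cases are harmless: $x=x^*$, i.e.\ $x\in\X^*$, is outside the scope of the statement, and $w=0$ is handled by taking all coefficients zero.)
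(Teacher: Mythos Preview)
Your proof is correct but follows a genuinely different route from the paper. The paper normalizes first: it observes that $\tfrac{x-x^*}{\|x-x^*\|}$ lies in the bounded polytope $\mathcal{P}:=\mathcal{M}_{x^*}\cap\{v:\|v\|_\infty\le 1\}$, solves for each vertex $\hat v$ of $\mathcal{P}$ the linear program $\min\{C'_{\hat v}: \hat v=\sum p_i\alpha_i+\sum q_i c_i,\ 0\le p_i,q_i\le C'_{\hat v}\}$ (feasible since $\hat v\in\mathcal{M}_{x^*}\subseteq\mathcal{N}_{x^*}$, and bounded below by $0$), takes $C'=\max_{\hat v}C'_{\hat v}$, and then uses that every point of $\mathcal{P}$ is a convex combination of vertices to pass the bound to $\tfrac{x-x^*}{\|x-x^*\|}$; scaling by $\|x-x^*\|$ finishes. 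You instead work in coefficient space, invoke Minkowski--Weyl to get a finite generating set for $\mathcal{Q}$, and use conic Carath\'eodory plus a pseudoinverse bound to pick a small representation directly. Your approach is somewhat heavier in machinery (finitely generated cones, Carath\'eodory) but avoids the normalization trick and the per-vertex LP; the paper's approach is more elementary and self-contained. Both correctly defer the uniformity of $C'$ in $x$ to the finiteness of tight-constraint configurations, which in the paper is stated only after Claim~\ref{claim:2} in the wrap-up of Lemma~\ref{l:10}.
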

\begin{proof}
    Note that $\frac{x-x^*}{\|x-x^*\|} \in \mathcal{M}_{x^*}$ because $0 \neq x-x^* \in \mathcal{M}_{x^*}$ and $\mathcal{M}_{x^*}$ is a cone. Furthermore, $\frac{x-x^*}{\|x-x^*\|} \in \{v\in \mathbb{R}^M:\|v\|_{\infty}\leq 1\}$. Thus,  $\frac{x-x^*}{\|x-x^*\|} \in \mathcal{M}_{x^*} \cap \{v\in \mathbb{R}^M:\|v\|_{\infty}\leq 1\}$, which is a bounded subset of the cone $\mathcal{M}_{x^*}$.
    
    We will argue that there exists large enough $C' > 0$ such that:
    \[
    \left\{\sum_{i=1}^\ell p_i \cdot \alpha_i + \sum_{i=1}^k q_i \cdot c_i : 0 \leq p_i, q_i \leq C' , \forall i\right\}
     \supseteq \mathcal{M}_{x^*} \cap \{v\in \mathbb{R}^M:\|v\|_{\infty}\leq 1\} \coloneqq \mathcal{P}.
    \]
    
    First note that $\mathcal{P}$ is a polytope. For every vertex $\hat{v}$ of $\mathcal{P}$, the smallest $C'$ such that $\hat{v}$ belongs to the left-hand side set above is the solution to the following linear program:
    
    \begin{align*}
        \min_{p_i, q_i, C_{\hat{v}}'} \quad
        & C_{\hat{v}}' \\
        \text{s.t.} \quad
        & \hat{v} = \sum_{i=1}^\ell p_i \cdot \alpha_i + \sum_{i=1}^k q_i \cdot c_i, \quad 0 \leq p_i, q_i \leq C_{\hat{v}}'.
    \end{align*}
    Since $\hat{v} \in \mathcal{M}_{x^*}$, this LP is always feasible and admits a finite solution $C_{\hat{v}}' < \infty$. Now, let $C' = \max_{\hat{v}\in\mathcal{V}(\mathcal{P})}$ where $\mathcal{V}(\mathcal{P})$ is the set of all vertices of $\mathcal{P}$. Then, since any $v\in \mathcal{P}$ can be expressed as a convex combination of points in $\mathcal{V}(\mathcal{P})$, $v$ can thus be expressed as  $\sum_{i=1}^\ell p_i \cdot \alpha_i + \sum_{i=1}^k q_i \cdot c_i$ where $0 \leq p_i, q_i \leq C'$. As a result, $\frac{x-x^*}{\|x-x^*\|}$ can be written as $\sum_{i=1}^\ell p_i \cdot \alpha_i + \sum_{i=1}^k q_i \cdot c_i$ where $0 \leq p_i, q_i \leq C'$, so it follows that $x-x^*$ can be written as: $\sum_{i=1}^\ell p_i \cdot \alpha_i + \sum_{i=1}^k q_i \cdot c_i$ where $0 \leq p_i, q_i \leq C'\|x-x^*\|$.
    
\end{proof}

Now, again following \cite{wei2020linear}, we can piece together all of the auxiliary results to show Lemma \ref{l:10}. Define $A_i \coloneqq \alpha_i^\top (x-x^*)$ and $C_i \coloneqq c_i^\top (x-x^*)$. By Claim \ref{claim:2}, we can write $x-x^*$ as $\sum_{i=1}^\ell p_i \cdot \alpha_i + \sum_{i=1}^k q_i \cdot c_i$ where $0 \leq p_i, q_i \leq C'\|x-x^*\|$. Thus:
\[
    \sum_{i=1}^\ell p_i \cdot A_i + \sum_{i=1}^k q_i \cdot C_i = \left(\sum_{i=1}^\ell p_i \cdot \alpha_i + \sum_{i=1}^k q_i \cdot c_i\right)^\top (x-x^*) = \|x-x^*\|^2
\]
Moreover, since $x-x^* \in \mathcal{M}_{x^*}$ by Claim \ref{claim:1}, we have
\[
    \sum_{i=1}^\ell p_i \cdot A_i = \sum_{i=1}^\ell p_i \cdot \alpha_i \leq 0 
\]
and 
\[
    \sum_{i=1}^k q_i \cdot C_i \leq \left(\max_{i\in\{1,\dots, k\}}C_i\right) \sum_{i=1}^k q_i \leq \left(\max_{i\in\{1,\dots, k\}}C_i\right) kC'\|x-x^*\|
\]
The first inequality follows because $p_i \geq 0$. The second inequality follows because $\max_{i\in\{1,\dots, k\}}C_i > 0$ (by Lemma~\ref{lemma:tight}) and $0 \leq q_i \leq C'\|x-x^*\|$.
 
Combining the above, we obtain:
\[
    \max_{i\in\{1,\dots, k\}}C_i \geq \frac{1}{kC'} \|x-x^*\|
\]
Now, note that: 
\[
\max_{i\in\{1,\dots, k\}}C_i = \max_{i\in\{1,\dots, k\}} (c_i^\top x - d_i) \leq \max_{i\in\{1,\dots, |\mathcal{V}(\mathcal{X})|\}} (c_i^\top x - d_i) = \max_{x' \in \mathcal{X}} (x'^\top R x) 
\]
where the last equality follows from the formulation of problem constraints in the proof of Lemma \ref{lemma:polytope}. Finally, by combining the last two statements, we can conclude that
\[
    -\min_{x'\in\mathcal{X}} (x'^\top R x) \geq \frac{1}{kC'}\|x-x^*\|.
\]
Here $k$ and $C'$ only depend on the set of tight constraints at $x^*$. There are only finitely many sets of tight constraints, so there exists a constant $C>0$ such that $-\min_{x'\in\mathcal{X}} (x'^\top R x) \geq \frac{1}{kC'}\|x-x^*\|$ holds for all $x$ and $x^*$, completing the proof.

\section{Additional Experimental Details}\label{appsec:experiments}
In this section we provide more details about our experimental results from Section \ref{sec:experiments}. 

\textbf{Random Network Extensive Form Games.} In our simulations, we first generated random zero-sum extensive form games on both a $3$-node graph where every player plays against the other two players, as well as a dense $4$-node graph (shown in Figure \ref{fig:graph1}). Specifically, each game is characterized by a $3\times 3$ symmetric matrix which represents the sequence form of an extensive form game written as a matrix. For each run of the simulation, we first create the games which are to be played, randomly generating matrices with elements in $[0,1]$. Then, we optimize for the choice of stepsize $\eta$, selecting the value that gives the fastest convergence rate to the Nash Equilibrium. In the plots, in order to reduce visual clutter, we present the squared distance from the Nash for only one of the players. In addition, in order to more clearly show the fast rate of convergence, we compute the logarithm of $\dist^2(x^t, \X^*)$ in the plots. It is worth noting that the $4$-node graph takes significantly longer to arrive at the last iterate compared to the $3$-node graph. 
\begin{figure}
    \centering
    \begin{tikzpicture}
    \draw[fill=black] (0,0) circle (3pt);
    \draw[fill=black] (3,0) circle (3pt);
    \draw[fill=black] (0,3) circle (3pt);
    \draw[fill=black] (3,3) circle (3pt);
    \node at (-0.3,0) {4};
    \node at (3.3,0) {3};
    \node at (-0.3,3) {1};
    \node at (3.3,3) {2};
    \draw[thick] (0,0) -- node[below] {$G_3$}  (3,0) ; 
    \draw[thick] (0,3) -- node[left] {$G_4$} (0,0); 
    \draw[thick] (3,3) -- node[right] {$G_2$} (3,0); 
    \draw[thick] (0,3) -- node[above] {$G_1$} (3,3);
    \draw[thick] (0,0) -- node[above left=0.25cm and 29pt,anchor=south] {$G_5$} (3,3); 
    \draw[thick] (0,3) -- node[above right=0.25cm and 27pt,anchor=south] {$G_6$} (3,0);
    \end{tikzpicture}
    \caption{4-node graph for randomized EFGs. Each node represents a player and each edge represents a game $G_i$ between the corresponding players.}
    \label{fig:graph1}
\end{figure}
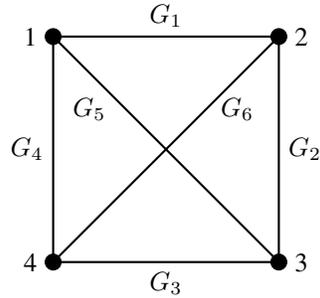

\textbf{Kuhn Poker.} Kuhn poker is a simplified version of poker proposed by \cite{kuhn1950poker}. The deck contains only three cards, namely Jack, Queen and King. Each player is dealt one card, and the third is left unseen. Player 1 can either check or bet, and subsequently Player 2 can also either check or bet. Finally, if Player 1 checks in round 1 and Player 2 bets in round 2, Player 1 gets another round to fold or call. Eventually, the player with the highest card wins the pot. In the sequence form representation of the game, Kuhn poker has dimension $\vert\X\vert \times \vert\X\vert = 13 \times 13$ and the corresponding payoff matrix can be easily computed by hand. For the simulation we show in Figure \ref{fig:experiments}, we run an experiment with 5 players on a graph where each player plays in exactly two Kuhn poker games with randomized initial conditions. This limitation was set in order to reduce the convergence time, since empirically we observe that increasing the number of players greatly increases the convergence times. 
\begin{figure}
    \centering
    \includegraphics[width=0.8\textwidth]{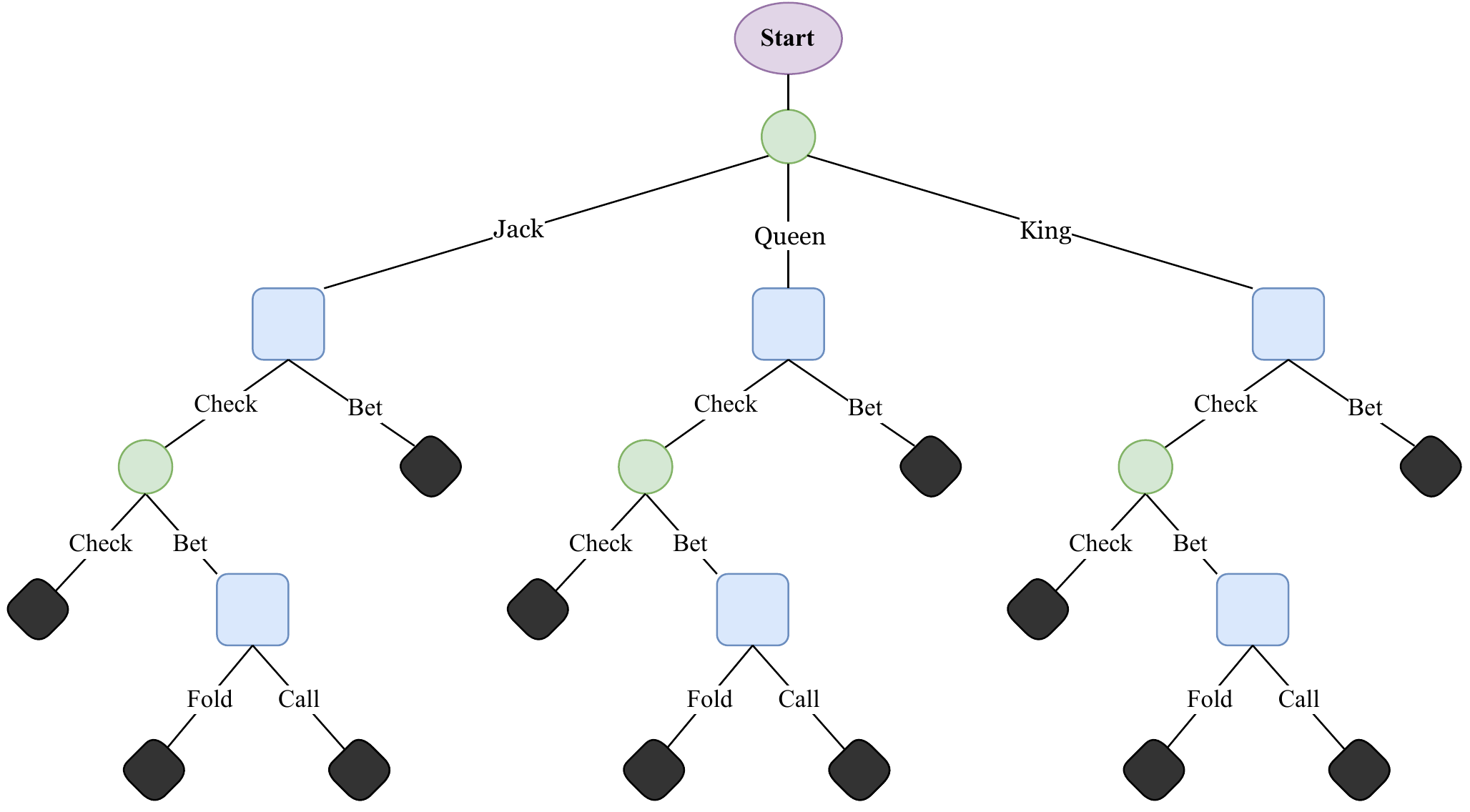}
    \caption{Extensive form representation of Kuhn poker from the perspective of one player. The blue nodes represent decision points for the player, green nodes represent observation points (either the player observes their card or the other player takes an action) and finally the black nodes denote the terminal states of the game.}
    \label{fig:my_label}
\end{figure}



\textbf{A note on scaling.} An empirical observation from our simulations is that the number of nodes in the network as well as the sparsity of the graph plays a major role in convergence times, particularly the \emph{last-iterate} convergence times. This intuitive observation presents an interesting challenge when modeling truly large-scale problems. For instance, a setting such as Texas Hold'em poker admits a huge number of parameters (of order $10^{18}$). Even in the two-player case this is prohibitively large, and this issue is compounded if we are in the multiplayer setting. 
As an illustrative example, consider a network game where every agent plays the ubiquitous zero-sum game, Matching Pennies, against two other players. Figure \ref{fig:mp} shows that the convergence times drastically increase when we go from a 4-node graph to a 20-node graph. Similarly, in our experiments with extensive form games in sequence form, it becomes difficult to simulate larger games (such as Leduc poker, which has dimension $\vert\X\vert \times \vert\X\vert = 337$) once there are multiple players playing in several games. This is a practical limitation which represents an interesting divide between our theoretical results and the reality of many large-scale, real world games. It is certainly a fascinating research direction to find ways to bridge this gap in future research.

\begin{figure}[ht]
\centering
\begin{minipage}{.48\textwidth}
  \centering
  \includegraphics[width=.95\linewidth]{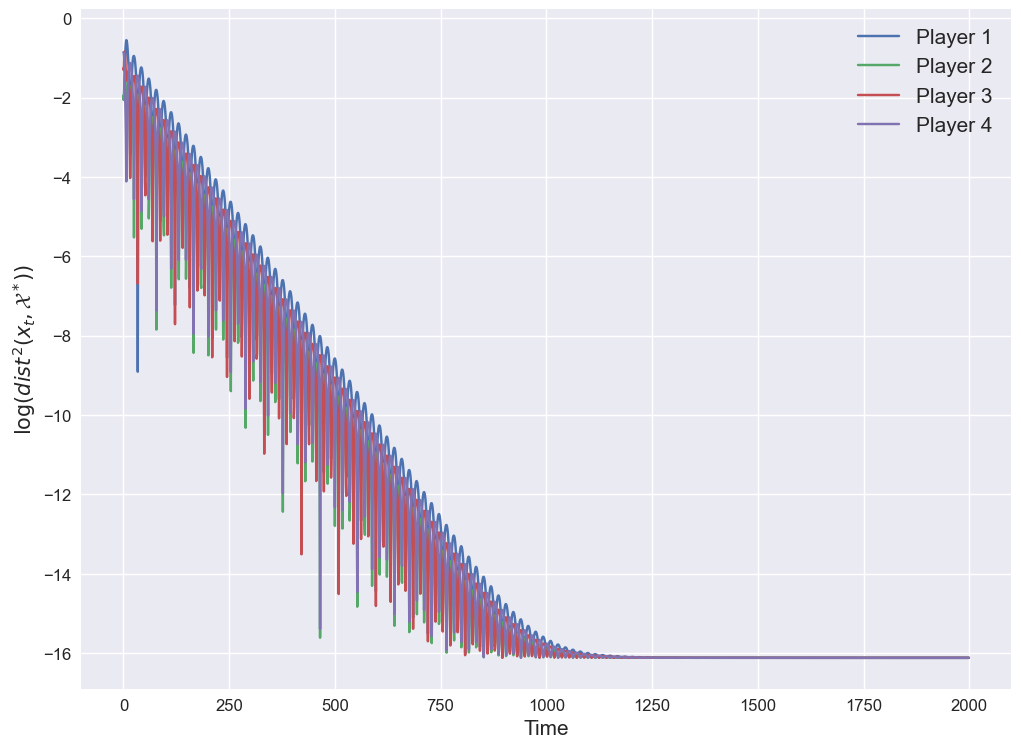}
\end{minipage}%
\hfill
\begin{minipage}{.48\textwidth}
  \centering
  \includegraphics[width=.95\linewidth]{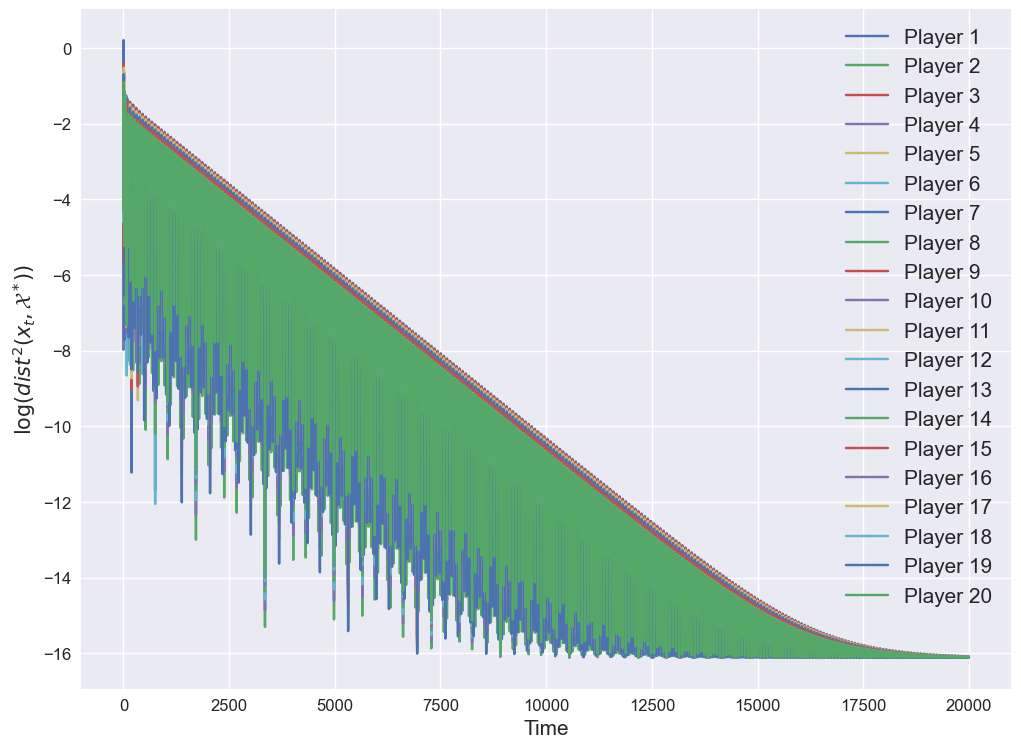}
\end{minipage}
\caption{Simulations using OGA in network Matching Pennies games. (Left) Convergence times for 4-player game; (Right) Convergence times for 20-player game.}
\label{fig:mp}
\end{figure}

\end{document}